\newtheorem{theorem}{Theorem}[section]
\newtheorem{lemma}[theorem]{Lemma}
\newtheorem{claim}[theorem]{Claim}
\newtheorem{corollary}[theorem]{Corollary}
\newtheorem{definition}[theorem]{Definition}
\newtheorem{observation}[theorem]{Observation}
\newtheorem{proposition}[theorem]{Proposition}
\newtheorem{remark}[theorem]{Remark}
\date{}
\newcommand{\SESCAD}{\textsc{SESCAD}\xspace}
\newcommand{\ESCAD}{\textsc{ESCAD}\xspace}
\newcommand{\XP}{\textsf{XP}\xspace}
\newcommand{\type}{\mathsf{Types}}
\newcommand{\clique}{\ell}
\newcommand{\PSI}{\normalfont\textsc{Partitioned Subgraph Isomorphism}\xspace}
\newcommand{\PSIshort}{\normalfont\textsc{PSI}\xspace}
\newcommand{\ETH} {{\sf ETH}\xspace}
\newcommand{\fpt} {{\sf FPT}\xspace}
\newcommand{\nph} {{\sf NP}-hard\xspace}
\newcommand{\woh} {{\sf W}$[1]$-hard\xspace}
\newcommand{\tw}{{\mathtt{tw}}}
\newcommand{\OO}{{\mathcal O}}
\newcommand{\CC}{{\mathcal C}}
\newcommand{\TT}{{\mathcal T}}
\newcommand{\PP}{{\mathcal P}}
\newenvironment{claimproof}[1][\proofname]
  { \proof[#1]  }
  { \endproof }
\newcommand{\defprob}[3]{
    \begin{tcolorbox}
        \begin{minipage}{0.99\textwidth}
            \begin{tabular*}{\textwidth}{@{\extracolsep{\fill}}ll} #1   \\ \end{tabular*}
            {\bf{Input:}} #2  \\
            {\bf{Question:}} #3
        \end{minipage}
    \end{tcolorbox}
}
\begin{document}

\title{On the Parameterized Complexity of Eulerian Strong Component Arc Deletion \footnote{An extended abstract of this paper appears in the proceedings of IPEC 2024 \cite{DBLP:conf/iwpec/BlazejJ0S24}.}}

 \author{
Václav Blažej \thanks{University of Warwick, \textrm{vaclav.blazej@warwick.ac.uk.}}
 \and Satyabrata Jana \thanks{University of Warwick, \textrm{satyamtma@gmail.com.}}
 \and M. S. Ramanujan \thanks{University of Warwick, \textrm{r.maadapuzhi-sridharan@warwick.ac.uk.}}
 \and Peter Strulo \thanks{University of Warwick, \textrm{Peter.Strulo@warwick.ac.uk.}}
}

 \maketitle

\thispagestyle{empty}
\begin{abstract}

In this paper, we study the Eulerian Strong Component Arc Deletion problem, where the input is a directed multigraph and the goal is to delete the minimum number of arcs to ensure every strongly connected component of the resulting digraph is Eulerian.  

This problem is a natural extension of the Directed Feedback Arc Set problem and is also known to be motivated by certain scenarios arising in the study of housing markets. The complexity of the problem, when parameterized by solution size (i.e., size of the deletion set), has remained unresolved and has been highlighted in several papers.
In this work, we answer this question by ruling out (subject to the usual complexity assumptions) a fixed-parameter algorithm (\fpt algorithm) for this parameter and conduct a broad analysis of the problem with respect to other natural parameterizations. We prove both positive and negative results. Among these, we demonstrate that the problem is also hard (W[1]-hard or even para-NP-hard) when parameterized by either treewidth or maximum degree alone. Complementing our lower bounds, we establish that the problem is in XP when parameterized by treewidth and \fpt when parameterized either by both treewidth and maximum degree or by both treewidth and solution size. We show that on simple digraphs, these algorithms have near-optimal asymptotic dependence on the treewidth assuming the Exponential Time Hypothesis.
\end{abstract}

\newpage
\section{Introduction}\label{sec:intro}

In the Eulerian Strong Component Arc Deletion (\ESCAD) problem, the input is a directed graph (digraph)
and a number $k$ and the goal is to delete at most $k$ arcs to ensure every strongly connected component of the resulting digraph is Eulerian.  
This problem was first introduced by Cechlárová and Schlotter \cite{DBLP:conf/iwpec/CechlarovaS10} to model problems arising in the study of housing markets and they left the existence of an {\fpt} algorithm
for \ESCAD as an open question.

The \ESCAD problem extends the well-studied Directed Feedback Arc Set (DFAS) problem. In DFAS, the goal is to delete the minimum number of arcs to make the digraph acyclic. The natural 
extension of DFAS to \ESCAD introduces additional complexity as we aim not to prevent cycles, but aim to balance in-degrees and out-degrees within each strongly connected component. As a result, the balance requirement complicates the problem significantly and the ensuing algorithmic challenges have been noted in multiple papers~\cite{DBLP:conf/iwpec/CechlarovaS10,DBLP:journals/algorithmica/CyganMPPS14,DBLP:journals/disopt/GokeMM22}.

Crowston et al.~\cite{DBLP:journals/ipl/CrowstonGJY12} made partial progress on the problem by showing that \ESCAD is fixed-parameter tractable (\fpt) on tournaments and also gave a polynomial kernelization. However, the broader question of fixed-parameter tractability of {\ESCAD} on general digraphs has remained unresolved.

\smallskip
\noindent
\textbf{Our contributions.}
Our first main result rules out the existence of an {\fpt} algorithm for \ESCAD under the solution-size parameterization, subject to the standard complexity-theoretic assumption that {\sf W}$[1]\neq $  {\fpt} .  Moreover, we show that assuming  
the Exponential Time Hypothesis (\ETH),  the trivial $n^{\OO(k)}$ algorithm where we simply guess and verify the solution is asymptotically near-optimal. It is well known that if \ETH is true, then {\sf W}$[1]\neq $  {\fpt} \cite{DBLP:series/txcs/DowneyF13}. 

\begin{restatable}{theorem}{ESCADSOLUTION}\label{theo:escadsolution}
    \ESCAD is \woh parameterized by the solution size $k$. Moreover, there is no algorithm solving \ESCAD in $f(k)\cdot n^{o(k/ \log k)}$ time for some function $f$, where $n$ is the input length, unless the Exponential Time Hypothesis fails.
\end{restatable}

The above negative result explains, in some sense, the algorithmic challenges encountered in previous attempts at showing tractability and shifts the focus toward alternative parameterizations. In particular, we explore various structural parameterizations of the  {\em undirected graph underlying the input digraph}.
We first show that even a strong parameterization such as the vertex cover number is unlikely to lead to a tractable outcome and also obtain an {\ETH}-based lower bound.

\begin{restatable}{theorem}{ESCADWHARDVC}\label{thm:escad_whard_vc}\label{thm:escad_eth_lb}
    \ESCAD is \woh parameterized by the vertex cover number $\mathrm{vc}$ of the graph. Moreover, there is no algorithm solving ESCAD in $f(\mathrm{vc})\cdot n^{o(\mathrm{vc}/ \log \mathrm{vc})}$ time for some function $f$,
    where    $n$ is the input length, unless the Exponential Time Hypothesis fails.
\end{restatable}

To add to the hardness results above, we also analyze the parameterized complexity of the problem parameterized by the maximum degree of the input digraph and show that even for constant values of the parameter, the problem remains {\nph}.

\begin{restatable}{theorem}{ESCADNPHDEGREE}\label{theo:nph}
\ESCAD is \nph~in  digraphs where each vertex has $(\mathrm{in}, \mathrm{out})$ degrees in $\{(1,6), (6,1)\}$.
\end{restatable}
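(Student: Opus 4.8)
The plan is to first determine the shape of an arbitrary solution under the prescribed degree restriction, and then to exploit that rigidity in a reduction from an \nph problem.

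\textbf{Step 1: a structural normal form.} I would begin by showing that when every vertex has in/out-degree in $\{(1,6),(6,1)\}$, each non-trivial strongly connected component of the digraph produced by any solution is a \emph{simple directed cycle}. The argument is short: a vertex of type $(1,6)$ has total in-degree $1$, so within any strongly connected component its in-degree is at most $1$, hence (since the component must be Eulerian) its out-degree within the component is also at most $1$; but inside a strongly connected digraph on at least two vertices every in- and out-degree is at least $1$, so both equal $1$. By symmetry the same holds for type $(6,1)$. Thus in a non-trivial component every vertex has in- and out-degree exactly $1$, making the component a single simple directed cycle. Equivalently, \ESCAD on such digraphs asks for a maximum-size set of arcs whose retained subdigraph has the property that each of its strongly connected components is either a single vertex or a simple directed cycle. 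This reformulation is what the reduction will target.

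\textbf{Step 2: the reduction.} I would reduce from a convenient \nph problem --- for concreteness, $3$-SAT (bounding the number of occurrences per variable is not essential but keeps the gadgets small; a bounded-degree feedback arc set or independent set problem would be plausible alternatives). The digraph is assembled entirely from two vertex roles: \emph{splitters} of type $(1,6)$ (one arc in, six arcs out) and \emph{mergers} of type $(6,1)$, wired so that any candidate solution must select a family of pairwise vertex-disjoint simple cycles, and the choice of family encodes a truth assignment. For each variable $x$ I would build a gadget containing two vertex-disjoint ``rails'', each being part of a directed cycle; only one rail can be kept intact, which represents setting $x$ true or false, and the spare out-ports of the splitters along the chosen rail are used (cascading through further splitters when more fan-out is needed) to deliver that value to the clause gadgets in which $x$ occurs. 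For each clause I would build a gadget that can be completed into a legal configuration --- a union of disjoint cycles plus an acyclic remainder --- at the intended cost if and only if at least one incident literal-rail is the one that was kept. Every wire between gadgets is routed through mergers, and a fixed number of small ``padding'' cycles is attached so that (a) the counts of the two vertex types coincide, which one checks is forced in any such digraph, and (b) every vertex ends with degree pattern exactly $(1,6)$ or $(6,1)$; this last requirement is what pins down the specific constants in the statement. The budget $k$ is set to the exact number of arc deletions an intended solution performs: one per destroyed rail arc, plus fixed per-clause and per-padding contributions.

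\textbf{Step 3: correctness and the main obstacle.} The easy direction is to verify that a satisfying assignment yields a family of disjoint cycles together with a valid set of remaining arcs realising a solution of size exactly $k$. The converse is where the effort lies. Given any solution of size at most $k$, Step 1 already tells us it retains only disjoint simple cycles plus an acyclic remainder; I would then argue gadget by gadget that the tightness of the budget forces the intended behaviour --- in essence, one full rail kept per variable gadget and every clause gadget satisfied. The principal difficulty, and the reason \ESCAD is genuinely harder here than directed feedback arc set, is ruling out ``cheating'' solutions: a solver may keep any Eulerian component for free, so I must ensure that no unintended simple cycle threading through several gadgets can survive without forcing strictly more than $k$ deletions elsewhere. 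I would enforce this by designing every gadget so that any cycle other than the designated rail cycles must pass through a ``bottleneck'' whose retention unbalances several nearby vertices, each of which --- by Step 1 again --- then costs its own deletion; a global counting argument then shows the total exceeds $k$. Making this bottleneck-and-accounting scheme watertight over all combinations of gadget choices, while simultaneously keeping every vertex at degree $(1,6)$ or $(6,1)$, is the part I expect to be the most technical.
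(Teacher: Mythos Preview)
Your Step~1 is correct and is in fact the key structural observation the paper also exploits: in any digraph obeying the stated degree restriction, every non-trivial strongly connected Eulerian component must be a single directed cycle. From this point on, however, the paper takes a much shorter route than you propose.

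Rather than building a 3-SAT reduction from scratch, the paper reduces from \textsc{Vertex Cover} on cubic graphs via the classical split-vertex construction for \textsc{Directed Feedback Arc Set}: each vertex $u$ becomes an arc $((u,0),(u,1))$, and each undirected edge $uv$ becomes \emph{two} parallel arcs $((u,1),(v,0))$ and two parallel arcs $((v,1),(u,0))$. Cubicness gives every $(u,0)$ degree $(6,1)$ and every $(u,1)$ degree $(1,6)$, so the degree hypothesis is met for free --- no padding or port-balancing is needed. The forward direction is immediate (a vertex cover yields a feedback arc set, and an acyclic digraph trivially satisfies the \ESCAD condition). For the converse, the paper takes an \ESCAD solution $F$ minimising the number of cross arcs used and argues $G'-F$ is acyclic: if some non-trivial component survived, by your Step~1 it would be a simple cycle, hence $F$ already contains one copy of each doubled cross arc on that cycle; swapping those cross arcs for the cycle's internal arcs gives another solution with strictly fewer cross arcs, a contradiction. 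So $F$ is a feedback arc set, and one reads off a vertex cover of size $|F|$.

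The upshot is that your Step~1 already does most of the work: together with the doubled-arc trick it collapses \ESCAD on this instance to \textsc{DFAS}, and the elaborate gadget engineering you outline in Steps~2--3 becomes unnecessary. Your proposal is not wrong in spirit, but it is only a plan --- the gadgets are never actually specified, and the obstacle you correctly identify (ruling out ``cheating'' cycles that thread through several gadgets) is precisely the kind of global accounting that tends to be brittle. The paper's approach sidesteps that obstacle entirely by arranging that any surviving cycle can be locally swapped away.
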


We complement these negative results by showing that \ESCAD is {\fpt} parameterized by the treewidth of the graph and solution size as well as by the treewidth and maximum degree. Furthermore, we give an {\XP} algorithm parameterized by treewidth alone. All three results are obtained through the same algorithm. The formal statement follows.

\newcommand{\bigoh}[0]{\OO}

\begin{restatable}{theorem}{ESCADTWXP}\label{thm:escad_tw_xp}
    An \ESCAD instance $\mathcal I = (G,k)$ can be solved by an algorithm in time $\alpha^{\OO(\tw^2)} \cdot n^{\OO(1)}$ where $\tw$ is the treewidth of $G$, $\Delta$ is the maximum degree of $G$, and $\alpha = \min(k,\Delta)$. Additionally, if $G$ excludes multiarcs, then this algorithm runs in time $2^{\OO(\tw^2)} \cdot \alpha^{\OO(\tw)} \cdot n^{\OO(1)}$. 
\end{restatable}

In the above statement, notice that $\alpha$ is upper bounded by the number of arcs in the digraph and so, implies an {\XP} algorithm parameterized by the treewidth with running time $n^{\OO(\tw^{2})}$ in general and an algorithm with running time $2^{\OO(\tw^2)} \cdot n^{\OO(\tw)}$ on simple digraphs (digraphs without multiarcs or loops).

The distinction between general instances of {\ESCAD} and instances on simple digraphs appears to be fundamental. 
In fact, multiarcs are crucially used in the proof of \Cref{thm:escad_whard_vc}, raising the question of adapting this reduction to {\em simple digraphs}  in order to obtain a similar hardness result parameterized by vertex cover number. However, we show that this is not possible by giving an {\fpt} algorithm for the problem on simple digraphs parameterized by the vertex integrity of the input graph. Recall that a digraph has \emph{vertex integrity} $k$ if there exists a set of vertices of size $q\leq k$ which when removed, results in a digraph where each weakly connected component has size at most $k-q$. Vertex integrity is a parameter lower bounding vertex cover number and has gained popularity in recent years as a way to obtain {\fpt} algorithms for problems that are known to be {\woh} parameterized by treedepth -- one example being {\ESCAD} on simple graphs as we show in this paper (see \Cref{thm:combinedSESCAD} below). 

\begin{restatable}{theorem}{sescadVIFPT}\label{thm:sescadVIFPT}
    ESCAD on simple digraphs is \fpt parameterized by the vertex integrity of the graph.
\end{restatable}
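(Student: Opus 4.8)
The plan is to exploit the structure imposed by vertex integrity $k$: fix a separator $X$ with $|X| = q \le k$ such that $G - X$ decomposes into weakly connected components $C_1, \dots, C_m$, each of size at most $k - q$. Since $G$ is a simple digraph, each $C_i$ has at most $(k-q)^2$ arcs inside it, and each vertex of $C_i$ has at most $q$ arcs to/from $X$, so $C_i$ together with its interface to $X$ involves at most $O(k^2)$ arcs. The key observation is that the ``behaviour'' of any solution restricted to a single component $C_i$ can be summarised by a bounded amount of information: (a) which arcs inside $C_i$ are deleted, (b) which arcs between $C_i$ and $X$ are deleted, and (c) crucially, how the vertices of $C_i$ are partitioned into strongly connected components of the final digraph, together with which of those SCCs are merged with which vertices of $X$. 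Since $|C_i \cup X| \le 2k$, there are only $f(k)$ many such ``interaction types''.

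First I would formalise the notion of an \emph{interaction type} $\tau$ for a component $C_i$: this records a partition $\mathcal P_i$ of $V(C_i)$, a sub-partition indicating the grouping of blocks of $\mathcal P_i$ with subsets of $X$ into the SCCs of the final graph, and the deletion set restricted to arcs incident to $C_i$. For each component $C_i$ and each candidate type $\tau$, I would determine in $f(k)$ time (by brute force over the at most $O(k^2)$ relevant arcs) the minimum number of deletions inside/incident to $C_i$ needed to realise $\tau$ in a way that is \emph{locally Eulerian-consistent}: every block of $\mathcal P_i$ that does not get merged with $X$ must already be an Eulerian strongly connected component, and every block that does get merged must have its in/out-degree imbalance (within $C_i$) recorded so the global Euler condition can be checked once the $X$-side is fixed. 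This gives, for each component, a cost table indexed by the (bounded) type.

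Next I would guess the global structure on $X$: the deletion set restricted to the $O(k^2)$ arcs with both endpoints in $X$, and the partition of $X$ into parts that will lie in distinct SCCs of the final graph. There are only $f(k)$ such guesses. Given such a guess, each component $C_i$ must pick a type $\tau$ compatible with it, and the compatibility is ``local'' to how $C_i$'s merged blocks attach to the guessed $X$-parts — so the components can be handled independently, and for each $C_i$ we simply take the cheapest compatible type from its cost table. Finally, the global Eulerian condition for each SCC is a single linear check: for each part of $X$ (equivalently, each final SCC touching $X$), summing the in/out imbalance contributions from $X$ itself, from the $X$–$X$ arcs kept, from the $X$–$C_i$ arcs kept, and from each attached block of each $C_i$, and requiring the total to be zero; SCCs entirely inside some $C_i$ were already checked locally. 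If every check passes we add up the costs; we return the minimum over all guesses and type choices, and compare to $k$.

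The main obstacle I anticipate is the bookkeeping around the SCC structure: I must ensure that the per-component types and the global $X$-guess can be ``glued'' into a globally valid SCC decomposition (e.g. that no two blocks the algorithm treats as separate SCCs are actually forced into the same SCC by reachability through $X$, and conversely), and that strong connectivity — not just the partition and the degree balance — is genuinely certified. Handling this correctly likely requires recording, as part of each component's type, not just the partition but also enough reachability information between the boundary vertices $N(X) \cap C_i$ (within the kept sub-digraph on $C_i$), so that the condensation of the whole kept digraph can be verified to match the guessed SCC structure. Since $|C_i| \le k$, this extra reachability data is still bounded by $f(k)$, so the running time remains $f(k) \cdot n^{O(1)}$; the delicate part is the correctness proof that these bounded certificates are sufficient and that the independent per-component optimisation is sound.
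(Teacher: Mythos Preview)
Your proposal has the right overall shape (fix a separator $X$, enumerate per-component ``types'', guess the global structure on $X$), but it contains a genuine gap at the step where you write that ``the components can be handled independently, and for each $C_i$ we simply take the cheapest compatible type.'' This is false. After fixing the $X$-side guess, the Eulerian condition at a vertex $u\in X$ is that the \emph{sum over all components} of the imbalance contributions to $u$ (plus the contribution from arcs inside $X$) equals zero. The per-component type choices are therefore coupled through $|X|$ many linear equations over the integers, and the cheapest locally compatible type for each $C_i$ need not yield a globally feasible assignment. Checking balance only \emph{after} making greedy per-component choices, as you propose, will simply reject guesses for which a costlier but feasible assignment exists; it does not give a correct algorithm. (The obstacle you flag in your last paragraph, certifying that the guessed SCC structure is realised, is real but secondary; the coupling through the balance equations is the issue you have not identified at all.)

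The paper resolves this coupling with integer linear programming. The observation you are missing is that although there may be many components, there are only $f(k)$ label-preserving isomorphism types of the subgraphs $G[C_i\cup X]$; components of the same type are interchangeable. One introduces variables $x_{\tau_1,\tau_2}$ counting how many components of initial type $\tau_1$ are transformed (by arc deletions) into final type $\tau_2$, writes the balance condition at each $u\in X$ as a linear equation in these variables, bounds the total deletion cost by $p$, and additionally enforces (via a further bounded guess of ``witness'' types) that the guessed reachability relation on $X$ is actually realised by at least one surviving component. The resulting ILP has $f(k)$ variables and is solvable in \fpt time by Lenstra's algorithm. Without this aggregation-by-type and the ILP step your approach does not give an \fpt algorithm: a naive dynamic program over the components tracking the partial imbalance vector in $\mathbb{Z}^{|X|}$ has $n^{\Theta(k)}$ states, since each vertex of $X$ can have degree linear in $n$.
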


As a consequence of this result, we infer an \fpt  algorithm for \ESCAD on simple digraphs parameterized by the vertex cover number, highlighting the difference in the behaviour of the \ESCAD problem on directed graphs that permit multiarcs versus simple digraphs. 
On the other hand, we show that even on simple digraphs this positive result does not extend much further to well-studied width measures such as treewidth (or the even larger parameter treedepth), by obtaining the following consequence of \Cref{thm:escad_whard_vc,thm:escad_eth_lb}.

\begin{theorem}\label{thm:combinedSESCAD}
    \ESCAD even on simple digraphs is \woh parameterized by $k$,  where $k$ is the size of the smallest vertex set that must be deleted from the input digraph to obtain a disjoint union of directed stars. Moreover, assuming \ETH, there is no algorithm solving it in $f(k)n^{(k/ \log k)}$ time for some function $f$, where $n$ is the input length. 
\end{theorem}

In the above statement, a {\em directed star} is just a digraph whose underlying undirected graph is a star. 
Furthermore, notice the running time of our algorithm for {\ESCAD} on simple digraphs in Theorem \ref{thm:escad_tw_xp} asymptotically almost matches our {\ETH}-based lower bound in \Cref{thm:combinedSESCAD}. This is because  the size of the smallest vertex set that must be deleted from the input digraph to obtain a disjoint union of directed stars is at least the treewidth minus one.

\medskip
\noindent
\textbf{Related Work.}
The vertex-deletion variant of \ESCAD is known to be {\woh} parameterized by the solution size, as shown by G\"{o}ke et al.~\cite{DBLP:journals/disopt/GokeMM22}, who identify \ESCAD as an open problem and note that gaining more insights into its complexity was a key motivation for their study.  Cygan et al.~\cite{DBLP:journals/algorithmica/CyganMPPS14} gave the first \fpt algorithm for edge (arc) deletion to Eulerian graphs (respectively, digraphs). Here, the aim is to make the whole graph Eulerian whereas the focus in \ESCAD is on each strongly connected component. Cygan et al. also explicitly highlight \ESCAD as an open problem and a motivation for their work. Goyal et al.~\cite{DBLP:journals/jcss/GoyalMPPS18} later improved the algorithm of Cygan et al. by giving algorithms achieving a single-exponential dependence on $k$.

\section{Preliminaries}\label{sec:prelims}

In this paper, the arc set of a digraph is a multiset, i.e., we allow multiarcs. Moreover, we treat multiarcs between the same ordered pairs of vertices as distinct arcs in the input representation of all digraphs. Consequently, the number of arcs in the input is upper bounded by the length of the input. We exclude loops as they play no non-trivial role in instances of this problem.

For a digraph $G$, we denote its vertices by $V(G)$, arcs by $E(G)$, the subgraph induced by $S \subseteq V(G)$ as $G[S]$,  a subgraph with $S$ removed as $G - S = G[V(G) \setminus S]$, and a subgraph with subset of arcs $F \subseteq E(G)$ removed as $G-F= (V(G), E(G) \setminus F)$. Unless otherwise specified, all paths and cycles we consider are simple. 
For a vertex $v$ and digraph $G$, let $\deg^-_G(v)$ denote its in-degree, $\deg^+_G(v)$ be its out-degree, and $\deg^+_G(v)-\deg^-_G(v)$ is called its {\em imbalance} in $G$.
If the {imbalance} of $v$ is $0$ then $v$ is said to be \emph{balanced} (in $G$).
A digraph is called \emph{balanced} if all its vertices are balanced.
The maximum degree of a digraph $G$ is the maximum value of $\deg^+_G(v)+\deg^-_G(v)$ taken over every vertex $v$ in the graph.
We denote the empty graph $(\emptyset, \emptyset)$ by simply $\emptyset$, similarly we use the same symbol for the unique function of type $\emptyset \rightarrow \emptyset$. It will be clear from the context which of these objects is meant.

A vertex $v$ is \emph{reachable} from $u$ if there exists a directed path from $u$ to $v$ in $G$. 
A \emph{strongly connected component} of $G$ is a maximal set of vertices where all vertices are mutually reachable from each other.
Let the \emph{strong subgraph} of $G$,  denoted by $\mathrm{strong}(G)$, be the subgraph of $G$ obtained by removing all arcs that have endpoints in different strongly connected components.
The \ESCAD problem can now be formulated as ``Is there a set $S \subseteq E(G)$ of size $|S| \le k$ such that $\mathrm{strong}(G - S)$ is balanced?''
We call an arc $e \in E(G)$ \emph{active} in $G$ if $e \in E(\mathrm{strong}(G))$ and \emph{inactive} in $G$ otherwise.

Given a graph $G$, a subset of vertices $S\subseteq V(G)$ is a \emph{vertex cover} if $G-S$ is an independent set. We say that $G$ has \emph{vertex cover number} $k$ if there exists a vertex cover $S$ where $|S| \le k$.  A star is an undirected graph isomorphic to $K_1$ or $K_{1,t}$ for some $t\geq 1$. 

A \emph{tree decomposition} of an undirected graph $G$ is a pair $(T, \{X_t\}_{t \in V(T)})$ where $T$ is a tree and $X_t \subseteq V(G)$ such that (i) for all edges $uv \in E(G)$ there exists a node $t \in V(T)$ such that $\{u, v\} \subseteq X_t$ and (ii) for all $v \in V(G)$ the subgraph induced by $\{t \in V(T) : v \in X_t\}$ is a non-empty tree. The \emph{width} of a tree decomposition is $\max_{t \in V(T)}|X_t| - 1$. The \emph{treewidth} of $G$ is the minimum width of a tree decomposition of $G$.

Let $(T, \{X_t\}_{t \in V(T)})$ be a tree decomposition of $G$. We refer to every node of $T$ with degree one as a \emph{leaf node} except one which is chosen as the root, $r$. A tree decomposition $(T, \{X_t\}_{t \in V(T)})$ is a \emph{nice tree decomposition with introduce edge nodes} if all of the following conditions are satisfied:
\begin{enumerate}
    \item $X_r = \emptyset$ and $X_\ell = \emptyset$ for all leaf nodes $\ell$.
    \item Every non-leaf node of $T$ is one of the following types:
    \begin{itemize}
        \item \textbf{Introduce vertex node:} a node $t$ with exactly one child $t'$ such that $X_t = X_{t'} \cup \{v\}$ for some vertex $v \notin X_{t'}$; we say that the vertex $v$ is \emph{introduced} at $t$.
        \item \textbf{Introduce edge node:} a node $t$, labeled with an edge $uv$ where $u,v \in X_t$ and with exactly one child $t'$ such that $X_t = X_{t'}$; we say that the edge $uv$ is \emph{introduced} at $t$.
        \item \textbf{Forget node:} a node $t$ with exactly one child $t'$ such that $X_t = X_{t'} \setminus \{v\}$ for some vertex $v \in X_{t'}$; ; we say that the vertex $v$ is \emph{forgotten} at $t$.
        \item \textbf{Join node:} a node $t$ with exactly two children $t_1, t_2$ such that $X_t = X_{t_1} = X_{t_2}$.
    \end{itemize}
    \item Every edge appears as the label of exactly one introduce edge node.
\end{enumerate}
\section{Our Results for \ESCAD}

In the following four subsections we describe three hardness results and tractability results on bounded treewidth graphs for \ESCAD.
In \Cref{sec:escad_tw_tractability} we show that the problem is \XP by treewidth and \fpt in two cases -- when parameterized by the combined parameter treewidth plus maximum degree, and when parameterized by treewidth plus solution size.
The hardness results show that dropping any of these parameters leads to a case that is unlikely to be \fpt.
More precisely, we show that parameterized by solution size it is \woh (in \Cref{sec:escad_w1_solution}) as is the case when parameterized by vertex cover number (\Cref{sec:escad_w1_hard_vc}), and it is para-\nph~when parameterized by the maximum degree (\Cref{sec:escad_np_hard_maxdegree}).

\subsection{W[1]-hardness of ESCAD Parameterized by Solution Size}\label{sec:escad_w1_solution}

In this section, we show that \ESCAD is \woh when parameterized by solution size.
Our reduction is from \PSI (\PSIshort).
Intuitively, in \PSIshort we are given vertex-colored graphs $G'$ and $H'$ and aim to find $H'$ as a subgraph of $G'$ while respecting the vertex colors.
Formally, the input is a pair of graphs $G'$ and $H'$ with $|V(H')| \le |V(G')|$ and a mapping $\psi\colon V(G') \to V(H')$.
Let $\clique = |V(H')|$ and let us denote vertices of $H'$ by integers, also called \emph{colors}, i.e., $V(H') = \{1,\dots,\clique\}$.
The goal is to decide whether there exists an injective mapping $\phi\colon V(H') \to V(G')$ such that $\{\phi(u),\phi(v)\} \in E(G')$ for each $\{u,v\} \in E(H')$ and $\psi \circ \phi$ is the identity.

Notice that the classic {\sc Multicolored Clique} problem is a special case of {\PSIshort}, where $H'$ is a complete graph and since the former is \woh \cite{DBLP:books/sp/CyganFKLMPPS15}, it follows that so is \PSIshort.
Moreover, assuming {\ETH}, there is no $f(|E(H')|)\cdot n^{o(|E(H')|/\log |E(H')|)}$-algorithm for {\PSIshort} \cite[Corollary 6.3]{Marx10}.

For each $i \in [\clique]$ let $V_i = \psi^{-1}(i)$.
We call $V_i$ a \emph{color class} and for a vertex $v$ in $G'$, we say $v$ has color $i$ if $v \in V_i$.
By $N_{H'}(i)$ we denote the neighborhood of $i \in [\clique]$ in $H'$.
We assume without loss of generality that in the \PSI instance we reduce from, each color class $V_i$ forms an independent set (edges in the same color class can be removed) and for each vertex $v \in V_i$ and each $j \in N_{H'}(i)$ there exists a $w \in V_j$ that is adjacent to $v$ (if not, then such a vertex cannot participate in the solution and can be removed).

We start with descriptions of two auxiliary gadgets: the {\em imbalance gadget} and the \emph{path gadget}.

\medskip
\noindent
\textbf{Imbalance Gadget.}
Let $u, v$ be a pair of vertices, and $b,c$ be two positive integers.
We construct a gadget $I_{u,v}$ connecting the vertex $u$ to $v$ by a path with vertices $u,w_1,\dots,w_b,v$ where $w_i$'s are $b$ new vertices (we call them intermediate vertices in this gadget); let $w_0=u$ and $w_{b+1}=v$.
For every $i \in \{0,\dots,b\}$ the path contains $b+1+c$ forward arcs $(w_i,w_{i+1})$ and $b+1$ backward arcs $(w_{i+1},w_i)$, see \Cref{fig:imbalance_gadget_2} for an illustration.
Observe that in the gadget $I_{u,v}$, vertex $u$ has imbalance $c$ and vertex $v$ has imbalance $-c$, whereas intermediate vertices have imbalance zero.
We refer to this gadget $I_{u,v}$ as a \emph{$(b,c)$-imbalance gadget}.

\begin{figure}[ht]
    \centering
    \hfill
    \begin{subfigure}[b]{0.45\textwidth}
         \centering
         \includegraphics[page=1,scale=1]{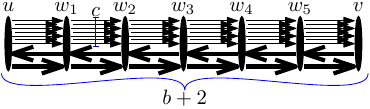}
         \caption{$(b,c)$-imbalance gadget}%
         \label{fig:imbalance_gadget_2}
    \end{subfigure}
    \hfill
    \begin{subfigure}[b]{0.45\textwidth}
         \centering
         \includegraphics[page=2,scale=1]{w1_solution}
         \caption{$(b,c)$-path gadget}%
         \label{fig:path_gadget}
    \end{subfigure}
    \hfill
    \caption{Black ellipses are vertices, each thick arc represents $(b+1)$ copies of an arc; $(b+2)$ is the number of vertices in the gadgets.}%
    \label{fig:w1_solution}
\end{figure}

\medskip
\noindent
\textbf{Path Gadget.} Let $u, v$ be a pair of vertices, and $b,c$ be two positive integers.
We construct a gadget $P_{u,v}$ connecting the vertex $u$ to $v$ by a path with vertices $u,w_1,\dots,w_b,v$ where $w_i$'s are $b$ new intermediate vertices; let $w_0=0$ and $w_{b+1}=v$.
For every $i \in \{0,\dots,b\}$ the path contains $c$ forward arcs $(w_i,w_{i+1})$.
See \Cref{fig:path_gadget} for an illustration.
Notice that, unlike the imbalance gadget, we do not add backward arcs.
Observe that in the gadget $P_{u,v}$, vertex $u$ has imbalance $c$ and vertex $v$ has imbalance $-c$, whereas intermediate vertices have imbalance zero.
We refer to this gadget $P_{u,v}$ as a \emph{$(b,c)$-path gadget}.

We use the following properties of the gadgets $I_{uv}$ and $P_{uv}$ to reason about the correctness of our construction.
We say a (path or imbalance) gadget $Q$ is \emph{present} in $G$ if $Q$ is an induced subgraph of $G$ and its intermediate vertices have no neighbors in $V(G) \setminus V(Q)$.

\begin{lemma}\label{lem:imbalance_gadget}
    Let $(G,b)$ be a yes-instance of \ESCAD and $S$ be a solution.
    Assume that for a pair of vertices $u,v$ in $G$, there is a $(b,c)$-imbalance gadget $I_{uv}$ present in $G$.
    If $S$ is an inclusion-wise minimal solution then $S$ contains no arc of $I_{uv}$.
\end{lemma}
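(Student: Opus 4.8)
The plan is to argue by contradiction. Suppose $S$ is an inclusionwise minimal solution that does contain some arc of $I_{uv}$, and let $A = S \cap E(I_{uv})$, so $A\neq\emptyset$. I will show that $S' = S\setminus A\subsetneq S$ is still a solution, contradicting minimality of $S$. The starting observation is that $I_{uv}$ is far too ``thick'' to be damaged by $S$: since $(G,b)$ is the instance and $S$ a solution we have $|S|\le b$, while each of the $b+1$ consecutive arc-bundles of $I_{uv}$ contains $b+1+c$ forward arcs and $b+1$ backward arcs, so after deleting $S$ every bundle still retains at least one forward and at least one backward arc. Hence in $G-S$ the vertices $u=w_0,w_1,\dots,w_{b+1}=v$ are all mutually reachable and lie in one strong component $C$; consequently every arc of $I_{uv}$ surviving in $G-S$ has both endpoints in $C$ and is therefore active in $G-S$.

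Next I would extract the local balance constraints that $S$ is forced to satisfy. For an internal vertex $w_i$ ($1\le i\le b$), all arcs of $G$ incident to $w_i$ are arcs of $I_{uv}$ (the gadget is induced and the $w_i$ are fresh vertices), and all surviving ones are active, so $w_i$ has the same in-/out-degrees in $\mathrm{strong}(G-S)$ as in $G-S$. Since $S$ is a solution, $w_i$ is balanced in $\mathrm{strong}(G-S)$, hence in $G-S$, and since $w_i$ is balanced in $G$ this forces $S$ to contain equally many arcs entering and leaving $w_i$. Writing $p_i$ (resp.\ $q_i$) for the number of forward (resp.\ backward) arcs of the $i$-th bundle lying in $S$, the identity at $w_i$ reads $p_{i+1}+q_i = p_i+q_{i+1}$, i.e.\ $p_i-q_i$ is independent of $i$; call this common value $d$. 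If $d\ge 1$ then $p_i\ge1$ for all $b+1$ bundles, so $|S|\ge b+1$; symmetrically $d\le-1$ gives $|S|\ge b+1$; both contradict $|S|\le b$. Thus $d=0$, which means $A$ has equally many forward as backward arcs in every bundle, and therefore $A$ is balanced at every vertex of $I_{uv}$ (at internal $w_i$ by the per-bundle identity, at $u$ and $v$ because $p_1=q_1$ and $p_{b+1}=q_{b+1}$) and incident to no vertex outside $I_{uv}$.

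Finally I would verify that $S'=S\setminus A$ is a solution. The digraph $G-S'$ arises from $G-S$ by adding back the arcs of $A$, each having both endpoints in the strong component $C$ of $G-S$; adding arcs inside an existing strong component does not change the strong-component partition, so $\mathrm{strong}(G-S')=\mathrm{strong}(G-S)+A$. As $\mathrm{strong}(G-S)$ is balanced and $A$ is balanced at every vertex, $\mathrm{strong}(G-S')$ is balanced too, so $S'$ is a solution with $|S'|<|S|\le b$, contradicting minimality of $S$; hence $A=\emptyset$. I expect the genuinely delicate point to be the propagation step of the second paragraph: recognising that the gadget survives as a single strong component, so that ``balanced after deletion'' becomes a rigid per-bundle condition whose only solution, given the budget $|S|\le b<b+1$, is to delete a balanced (hence freely restorable) sub-multigraph. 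The fact that restoring arcs within a strong component leaves the strong-component partition unchanged is routine and I would only state it.
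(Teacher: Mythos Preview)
Your proof is correct and shares the paper's starting observation (the gadget remains a single strong component after deleting at most $b$ arcs, so balance at each internal $w_i$ becomes a rigid constraint), but the execution is genuinely different.

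The paper argues locally and iteratively: it finds a boundary index $i$ where $S$ hits $F_i$ but not $F_{i+1}$, uses balance at the single vertex $w_i$ to conclude that $S\cap F_i$ contains equally many arcs in each direction, and removes just $S\cap F_i$ to obtain a smaller solution; this contradicts a ``minimum number of gadget arcs'' choice of $S$ (a proxy for inclusionwise minimality). Your argument is global: you propagate the balance identities across all internal vertices to get $p_i-q_i\equiv d$, then use the budget $|S|\le b<b+1$ to pin $d=0$, and remove \emph{all} of $A$ in one shot. This directly contradicts inclusionwise minimality as stated in the lemma, with no auxiliary minimisation and no iteration. The propagation step and the pigeonhole on the $b{+}1$ bundles are the extra ingredients your route needs; in return you get a one-step contradiction and a cleaner match to the lemma's hypothesis. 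Both approaches rely on the same routine fact you flagged: restoring arcs with both endpoints in an existing strong component leaves the strong-component partition unchanged.
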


\begin{proof}
    In the subgraph $I_{uv}$, there are $b+c+1$ arc-disjoint paths from $u$ to $v$ and $b+1$ arc-disjoint paths from $v$ to $u$.
    Both quantities are more than the budget $b$, so at least one $u$-$v$ path and one $v$-$u$ path remain disjoint from $S$. Since each of these paths contains all the vertices in $I_{uv}$, it follows that all vertices in $I_{uv}$ must be contained in the same strongly connected component of $G-S$.
    For each $i\in [b+1]$, let $F_i$ be all the arcs between $w_{i-1}$ and $w_i$ in either direction.
    If $S$ contains no arc of $I_{uv}$, then we are done.
    Otherwise, as $|S| \leq b$ and $S \cap E(I_{uv}) \neq \emptyset$, there is $i \in [b]$ such that either $S \cap F_i \neq \emptyset$, $S \cap F_{i+1} = \emptyset$ or $S \cap F_i = \emptyset$, $S \cap F_{i+1} \neq \emptyset$.
    Assume that $S \cap F_i \neq \emptyset$ and $S \cap F_{i+1} = \emptyset$.
    The argument in the other case is analogous.
    To ensure that the imbalance of $w_i$ is zero in $\mathrm{strong}(G-S)$, the solution $S$ must contain the same number of out-arcs and in-arcs of $w_i$ from the set $F_i$ (due to the fact that $w_i$ and $w_{i-1}$ are contained in the same strongly connected component of $G-S$ and $F_{i+1} \cap S=\emptyset$).
    Now, consider the set $S' = S \setminus (S \cap F_i)$.
    As the vertices $w_{i-1}$ and $w_{i}$ are in the same strongly connected component of $G-S$, and $S'$ is a subset of $S$, they must be in the same strongly connected component of $G-S'$.
    We saw that $S \cap F_i$ has the same number of forward and backward arcs, therefore, in $G-S'$ the only vertices with changed degrees, $w_i$ and $w_{i-1}$, remain balanced.
    So, $S' \subset S$ is also a solution, contradicting the choice of inclusion-wise minimal $S$.
\end{proof}

\begin{lemma}\label{obs:path_gadget}
    Let $(G,b)$ be a yes-instance of \ESCAD and $S$ be an inclusion-wise minimal solution for this instance.
    For a pair of vertices $u,v$ in $G$, assume that for an arbitrary $c$ there is a $(b,c)$-path gadget $P_{uv}$ present in $G$ and there are more than $b$ arc-disjoint paths from $v$ to $u$ (in $G - E(P_{uv})$).
    If $S$ contains an arc from $P_{uv}$, then there exists $i \in \{0,\dots,b\}$ such that $S \cap P_{uv}$ is equal to all $c$ arcs $(w_i,w_{i+1})$ of $P_{uv}$.
\end{lemma}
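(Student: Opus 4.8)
The plan is a proof by contradiction that extracts the claim purely from the tightness of the budget $|S|\le b$. For $j\in\{1,\dots,b+1\}$ let $F_j$ denote the set of $c$ parallel arcs from $w_{j-1}$ to $w_j$, and set $s_j=|S\cap F_j|$. I would suppose that $S$ contains an arc of $P_{uv}$, so that $\sum_j s_j\ge 1$, but that the conclusion fails, \ie $s_j\le c-1$ for every $j$, and derive a contradiction. Recall that, as $(G,b)$ is a yes-instance with solution $S$, we have $|S|\le b$.

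\noindent\textbf{Step 1: locate a large strongly connected component.} Since $s_j<c$ for every $j$, each $F_j$ still contains an arc outside $S$, so the forward path $u=w_0\to w_1\to\cdots\to w_{b+1}=v$ survives in $G-S$. By hypothesis there are more than $b$ arc-disjoint $v\to u$ paths in $G$; since $|S|\le b$ and these paths are pairwise arc-disjoint, $S$ meets at most $b$ of them, so at least one survives in $G-S$. Composing the two surviving walks shows that $u$ and $v$ are mutually reachable in $G-S$, and since every $w_i$ lies on the surviving forward path, all of $w_0,\dots,w_{b+1}$ belong to a single strongly connected component $C$ of $G-S$.

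\noindent\textbf{Step 2: read off the balances of the internal vertices.} Each $w_i$ with $i\in\{1,\dots,b\}$ is an intermediate (new) vertex of the gadget, so — the gadget being an induced subgraph with fresh internal vertices — its only incident arcs are those of $F_i$ and $F_{i+1}$. As $w_{i-1},w_i,w_{i+1}\in C$, every arc of $(F_i\cup F_{i+1})\setminus S$ is active in $G-S$, so the imbalance of $w_i$ in $\mathrm{strong}(G-S)$ equals $(c-s_{i+1})-(c-s_i)=s_i-s_{i+1}$. Since $S$ is a solution this must vanish, giving $s_i=s_{i+1}$ for all $i\in\{1,\dots,b\}$, hence $s_1=\cdots=s_{b+1}=:s$. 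From $\sum_j s_j\ge 1$ we get $s\ge 1$, so $|S|\ge\sum_{j=1}^{b+1}s_j=(b+1)s\ge b+1$, contradicting $|S|\le b$. Therefore some $F_j$ is entirely contained in $S$, which is exactly the claim with $i=j-1$.

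The only point requiring real care is Step 1: one must check that the surviving $v\to u$ walk together with the surviving forward path genuinely force all of $w_0,\dots,w_{b+1}$ into one strongly connected component, and that the phrase ``$P_{uv}$ is an induced subgraph'' with new internal vertices really means each $w_i$ ($i\in\{1,\dots,b\}$) has no incident arc outside $P_{uv}$ — otherwise extra active arcs at some $w_i$ could spoil the clean imbalance computation in Step 2. Everything else is just the counting inequality $|S|\ge b+1>b$.
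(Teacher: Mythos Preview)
Your proof is correct and follows essentially the same line as the paper's: both show that all $w_i$ lie in a single strongly connected component of $G-S$ (via a surviving forward path through the gadget and a surviving $v\to u$ path among the $>b$ arc-disjoint ones) and then use the balance condition at the intermediate vertices. The only cosmetic difference is in how the contradiction is extracted---the paper pigeonholes to find an index $i$ with $S\cap F_i\neq\emptyset$ but $S\cap F_{i+1}=\emptyset$ and reads off the imbalance at $w_i$ directly, whereas you first deduce $s_1=\cdots=s_{b+1}\ge 1$ from balance and then count $|S|\ge b+1$.
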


\begin{proof}
    The argument is similar to the proof of Lemma \ref{lem:imbalance_gadget}.
    For each $i\in [b+1]$, let $F_i$ be all the arcs between $w_{i-1}$ and $w_i$.
    If $S$ constains no arc of $P_{uv}$, then we are done.
    Otherwise, in case $F_i \subseteq S$ for some $i \in [b+1]$ by minimality of $S$ we have $F_i = S \cap E(P_{uv})$, as we wanted to show.
    In the other case, there is no $i \in [b+1]$ such that $F_i \subseteq S$.
    Hence, there is a path $P$ from $u$ to $v$ in $G-S$ such that all the vertices of $P$ belong to $P_{uv}$.
    As $|S| \leq b$ and $S \cap E(P_{uv}) \neq \emptyset$, there is $i \in [b]$ such that either $S \cap F_i \neq \emptyset$, $S \cap F_{i+1} = \emptyset$ or $S \cap F_i = \emptyset$, $S \cap F_{i+1} \neq \emptyset$.
    Assume that $S \cap F_i \neq \emptyset$ and $S \cap F_{i+1} = \emptyset$.
    The argument in the other case is analogous.
    Since there is a path from $u$ to $v$ in $G-S$ and there are more than $b$ arc-disjoint paths from $v$ to $u$ in $G-E(P_{uv})$, it follows that $u$ and $v$ are in the same strongly connected component of $G-S$.
    Moreover, the path $P$ contains the vertices $u,w_{i-1},w_i,w_{i+i},v$ so vertices $w_{i-1},w_i,w_{i+1}$ are in the same strongly connected component of $G-S$.
    As $S \cap F_i \neq \emptyset$ and $S \cap F_{i+1} = \emptyset$ the vertex $w_i$ is not balanced in $\mathrm{strong}(G-S)$, a contradiction.
\end{proof}

We are now ready to give our W[1]-hardness reduction.

\ESCADSOLUTION*

\begin{proof}
    Consider an instance $\mathcal{I'}=(G',H',\psi)$ of \PSI with $n$ vertices.
    Recall $V_i=\psi^{-1}(i)$ and our assumption that each color class induces an independent set, and for each color $i$ every vertex $u \in V_i$ neighbors at least one vertex of color $j$ if $ij \in E(H')$. In addition, we assume that for every pair of color classes $i,j$, if $ij\notin E(H')$, then there is no edge with one endpoint in $V_i$ and the other in $V_j$. 
    In polynomial time, we construct an \ESCAD instance $\mathcal I=(G,k)$ in the following way (see \Cref{fig:w1_hardness_sol_reduction} for an overview).

    \begin{itemize}
        \item We set $k = 4 \cdot |E(H')|$.

        \item We construct $V(G)$ as follows:
        \begin{enumerate}
            \item We add a vertex $s$.
            \item For each color $i \in [\clique]$, we add a pair of vertices $s_i$ and $d_i$.
            \item For each vertex $u$ in $V(G')$, we add a vertex $x_u$.
            \item For each edge $uv$ in $E(G')$, we add a vertex $z_{uv}$.
        \end{enumerate}

        \item We construct $E(G)$ as follows. We introduce four sets of arcs $E_1$, $E_2$, $E_3$, and $E_4$ that together comprise the set $E(G)$. For each color $i \in [\clique]$, let $r_i \coloneqq |V_i| \cdot |N_{H'}(i)|$, $c_i \coloneqq |\{uv \mid  uv\in E(G'), u\in V_i\}|-r_i$. Notice that $c_i \geq 0$ due to the assumptions.
        \begin{enumerate}
            \item For each $i \in [\clique]$, we add a $(k, r_i-|N_{H'}(i)|)$-imbalance gadget $I_{s,d_i}$ and a $(k, c_i)$-imbalance gadget $I_{s,s_i}$ to $E_1$.

            \item For each $i \in [\clique]$, for each vertex $u \in V_i$ we add a $(k, |N_{G'}(u)|-|N_{H'}(i)|)$-imbalance gadget $I_{s_i,x_u}$ and a $(k, |N_{H'}(i)|)$-path gadget $P_{d_i,x_u}$ to $E_2$.

            \item For every edge $uv \in E(G')$, we add a pair of arcs $(x_u,z_{uv})$ and $(x_v,z_{uv})$ to $E_3$. 

            \item For every edge $uv \in E(G')$, we add two copies of the arc $(z_{uv},s)$ to $E_4$.
        \end{enumerate}
    \end{itemize}

    \begin{figure}[ht]
        \centering
        \includegraphics[page=3]{w1_solution}
        \caption{
            Overview of the reduction from $G'$ to $G$;
            four sets of arcs are depicted from top to bottom.
            $E_1$ contains imbalance gadgets,
            $E_2$ is a mixture of imbalance and path gadgets,
            $E_3$ contains arcs,
            and $E_4$ has double arcs to $s$.
            Marked purple arcs correspond to a solution in $G'$ and its respective solution in $G$.
            The thee colored backgrounds in $G$ signify parts of the construction tied to the three color classes.
            Direction of arcs in $G$ is shown by the arrows on the right.
            The picture of $G$ wraps around to the top as the vertex $s$ drawn on the bottom is the same as the one drawn on the top.
        }%
        \label{fig:w1_hardness_sol_reduction}
    \end{figure}

    It is easy to see that the construction can be performed in time polynomial in $|V(G')|$. Moreover, the reduction transforms the parameter linearly and the instance size polynomially. This implies the {\ETH}-based lower bound. 
    Hence, it only remains to prove the correctness of our reduction.
    
    As each vertex of $G$ lies on a cycle that goes through $s$, it follows that $G$ is strongly connected. First, we argue about the imbalances of the vertices in $G$.

    \begin{claim}\label{claim:imbalance}
        The only vertices with non-zero imbalance in $G$ are those in the set $\{s\} \cup \{ d_i \mid  i\in [\clique] \}$. Furthermore, the imbalance of the vertex $s$ is $-2|E(H')|$ and the imbalance of $d_i$ for each $i \in [\clique]$ is $|N_{H'}(i)|$.
    \end{claim}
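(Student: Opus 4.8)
The plan is to compute, vertex by vertex, the net contribution of each of the four arc-sets $E_1,E_2,E_3,E_4$ to the imbalance, using the fact (established in the gadget definitions) that a $(b,c)$-imbalance gadget $I_{p,q}$ contributes $+c$ to the imbalance of $p$, $-c$ to that of $q$, and $0$ to every internal vertex, and similarly a $(b,c)$-path gadget $P_{p,q}$ contributes $+c$ to $p$, $-c$ to $q$, and $0$ internally. Consequently all intermediate vertices of all gadgets are automatically balanced, and the vertices $x_u$ and $z_{uv}$ only need the arcs of $E_2,E_3,E_4$ checked. So the real content is a bookkeeping argument over the vertices $s$, $s_j$, $d_j$, $x_u$, $z_{uv}$.

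First I would handle the ``easy'' vertices. For $z_{uv}$: its only incident arcs are the two in-arcs $(x_u,z_{uv}),(x_v,z_{uv})$ from $E_3$ and the two out-arcs $(z_{uv},s)$ (double arc) from $E_4$, giving in-degree $2$ and out-degree $2$, hence imbalance $0$. For $x_u$ with $u\in V_j$: it is the sink of the gadget $I_{s_j,x_u}$ (contribution $-(|N_G(u)|-\ell+1)$) and the sink of the path gadget $P_{d_j,x_u}$ (contribution $-(\ell-1)$), and it has $|N_G(u)|$ out-arcs into $E_3$ (one $(x_u,z_{uv})$ per edge $uv$ incident to $u$), contributing $+|N_G(u)|$; summing gives $-(|N_G(u)|-\ell+1)-(\ell-1)+|N_G(u)| = 0$. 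For $s_j$: it is the sink of $I_{s,s_j}$ (contribution $-c_j$) and the source of the gadgets $I_{s_j,x_u}$ for every $u\in V_j$ (contribution $+\sum_{u\in V_j}(|N_G(u)|-\ell+1)$); since $\sum_{u\in V_j}|N_G(u)| = |\{uv:uv\in E(G),u\in V_j\}| = r_j+c_j$ and $\sum_{u\in V_j}(\ell-1)=|V_j|(\ell-1)=r_j$, this sum is $c_j - c_j = 0$, so $s_j$ is balanced. For $d_j$: it is the sink of $I_{s,d_j}$ (contribution $-(r_j-\ell+1)$) and the source of the path gadgets $P_{d_j,x_u}$ for each of the $|V_j|$ vertices $u\in V_j$ (contribution $+|V_j|(\ell-1) = +r_j$); the total is $r_j - (r_j-\ell+1) = \ell-1$, as claimed.

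Finally, for $s$: it is the source of the imbalance gadgets $I_{s,d_j}$ and $I_{s,s_j}$ for each $j\in[\ell]$, and the sink of the double arcs $(z_{uv},s)$ over all edges $uv\in E(G)$. The source contributions are $\sum_{j}\big((r_j-\ell+1)+c_j\big)$; the sink contribution is $-2|E(G)|$. Using $r_j+c_j = |\{uv:uv\in E(G),u\in V_j\}|$ and that every edge has exactly two endpoints, $\sum_j(r_j+c_j) = 2|E(G)|$, so the source contributions total $2|E(G)| - \ell(\ell-1)$, and adding the sink contribution $-2|E(G)|$ yields imbalance $-\ell(\ell-1)$ for $s$. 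This covers all vertex types, so the only vertices with nonzero imbalance are $s$ (imbalance $-\ell(\ell-1)$) and the $d_j$ (imbalance $\ell-1$), proving the claim.

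The only mildly delicate point — the ``main obstacle'', such as it is — is getting the two global summation identities right: $\sum_{u\in V_j}|N_G(u)| = r_j + c_j$ (which is exactly how $c_j$ was defined, and uses that $V_j$ is an independent set so no edge is double-counted within a color class) and $\sum_{j\in[\ell]}(r_j+c_j) = 2|E(G)|$ (double counting each edge once per endpoint-color). Everything else is a direct substitution of the definitions $r_j = |V_j|(\ell-1)$ and $c_j = |\{uv:uv\in E(G),u\in V_j\}| - r_j$.
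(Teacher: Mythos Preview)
Your proposal is correct and follows essentially the same case analysis over the six vertex types as the paper's proof. The only difference is presentational: you abstract each $(b,c)$-gadget as a $+c/-c$ contribution to its endpoints, whereas the paper writes out the raw in- and out-degrees (e.g.\ tracking the $k+1$ backward arcs explicitly before cancelling), but the arithmetic and the two key identities $\sum_{u\in V_j}|N_G(u)|=r_j+c_j$ and $\sum_j(r_j+c_j)=2|E(G)|$ are the same.
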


    \begin{claimproof}
        There are six types of vertices in $G$ --
        (1) the vertex $s$,
        (2) vertices $s_i$ for $i\in [\clique]$,
        (3) vertices $d_i$ for $i\in [\clique]$,
        (4) vertices $x_u$ for $u\in V(G')$,
        (5) vertices $z_{uv}$ for $uv\in E(G')$,
        and (6) the intermediate vertices (in the imbalance gadgets and path gadgets).
        Below, we examine their imbalance one by one in the given order.

        \begin{enumerate}[\bf (1)]
            \item Due to the imbalance gadgets in $E_1$, for $i \in [\clique]$, the vertex $s$ incurs imbalance $r_i-|N_{H'}(i)|$ from $I_{s,d_i}$ and $c_i$ imbalance from $I_{s,s_i}$. Summing up and simplifying, in total the vertex $s$ has imbalance $2|E(G')|-2|E(H')|$ in $G[E_1]$. In $E_4$ the vertex $s$ has $2|E(G')|$ incoming arcs. Hence, the imbalance of $s$ in $G$ is $-2|E(H')|$.

            \item Fix $i \in [\clique]$. Due to the imbalance gadgets in $E_1$ the vertex $s_i$ gets imbalance $-c_i$ from $I_{s,s_i}$. Due to $I_{s_i,x_u}$ in $E_2$ the vertex $s_i$ gets imbalance $|N_{G'}(u)|-|N_{H'}(i)|$ for every $u \in V_i$, which in total equals $\sum_{u \in V_i} \big(|N_{G'}(u)|-|N_{H'}(i)|\big) = |\{uv \mid  uv\in E(G'), u\in V_i\}|-|V_i| \cdot |N_{H'}(i)| = c_i$. So $s_i$ is balanced in $G$.

            \item Fix $i \in [\clique]$. Due to $I_{s,d_i}$ in $E_1$ the vertex $d_i$ gets imbalance $-r_i+|N_{H'}(i)|$. Due to $P_{d_i,x_u}$ in $E_2$, the vertex $d_i$ gets imbalance $|N_{H'}(i)|$ for every $u \in V_i$. In total, this is $|V_i| \cdot |N_{H'}(i)| = r_i$. The total imbalance of $d_i$ is therefore, $|N_{H'}(i)|$.

            \item Fix $i \in [\clique]$ and $u \in V_i$. From $E_2$ vertex $x_u$ has imbalance $-|N_{G'}(u)|+|N_{H'}(i)|$ due to $I_{s_i,x_u}$ and imbalance $-|N_{H'}(i)|$ due to $P_{d_i,x_u}$ which sums to $-|N_{G'}(u)|$. In $E_3$ the vertex $x_u$ has $|N_{G'}(u)|$ outgoing arcs, so $x_u$ is balanced in $G$.

            \item For each $uv \in E(G)$, the vertex $z_{uv}$ is incident to exactly two incoming and two outgoing arcs so it is balanced in $G$.

            \item The remaining vertices are the intermediate vertices of imbalance and path gadgets. Those are balanced by construction.
                \qedhere
        \end{enumerate}
    \end{claimproof}

    This shows that there are only $\clique+1$ vertices with non-zero imbalance in $G$. The imbalances of the $d_i$ vertices will make us ``choose'' vertices and edges that represent $H'$ in $G'$ as we will see below.

    We now show correctness of the reduction.
    In the forward direction, assume that $(G', H', \psi)$ is a yes-instance, so there is a solution mapping $\phi\colon V(H') \to V(G')$.
    Let $v_i=\phi(i)$ for each $i \in [\clique]$ and let $K=(\{\phi(i) \mid  i \in V(H')\}, \{\phi(i)\phi(j) \mid ij \in E(H')\})$ be the \PSIshort solution subgraph in $G'$.
    Note that $v_i \in V_i$.
    We now construct a solution $S$ of $(G, k)$.
    For each edge $v_iv_j \in E(H')$ we add the arcs $(x_{v_j},z_{v_iv_j})$ and $(x_{v_i},z_{v_iv_j})$ to $S$.
    There are $2 \cdot |E(H')|$ many such arcs.
    Now for each $i \in [\clique]$ we add all the incoming arcs of $x_{v_i}$ along the path gadget $P_{d_i x_{v_i}}$ to $S$.
    As for each $i \in [\clique]$, the number of such arcs is $|N_{H'}(i)|$ we have $|S| = 2 \cdot |E(H')| + \sum_{i \in [\clique]}|N_{H'}(i)| = 4 \cdot |E(H')| = k$.
    Now, we show that each strongly connected component of $G-S$ is Eulerian.
    For an example of $S$, refer to \Cref{fig:w1_hardness_sol_reduction} (purple arcs).

    We consider the strongly connected components of $G-S$ and we will show that each of them is Eulerian. We first define:
    \[
        Z=\{z_{uw} \mid uw \in E(K)\} \cup \big(\bigcup_{i \in [\clique]}(V(P_{d_i, x_{v_i}}) \setminus \{d_i, x_{v_i}\})\big)
    \]

    \begin{claim}\label{claim:connected_component}
        One strongly connected component of $G-S$ consists of all the vertices except $Z$ and all other strongly connected components of $G-S$ are singleton -- one for each vertex in $Z$.
    \end{claim}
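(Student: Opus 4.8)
The plan is to prove the two halves of the claim separately --- that each vertex of $Z$ is its own strongly connected component of $G'-S$, and that every other vertex lies in a single strongly connected component through $s$ --- and then combine them using the fact that strongly connected components partition $V(G')$. First I would record exactly which arcs $S$ deletes: for each pair $\{v_i,v_j\}\subseteq K$ the two arcs $(x_{v_i},z_{v_iv_j})$ and $(x_{v_j},z_{v_iv_j})$ of $E_3$, and, for each $j\in[\clique]$, the $\clique-1$ parallel copies of the last arc $(w_k,x_{v_j})$ of the path gadget $P_{d_j,x_{v_j}}$. In particular $S$ is disjoint from $E_1$, from $E_4$, from every imbalance gadget, and from every path gadget $P_{d_j,x_u}$ with $u\neq v_j$; the only $z_{uv}$ losing an incident arc are those with $uv$ an edge of $K$, and the only path-gadget intermediate vertices touched are those of the $\clique$ gadgets $P_{d_j,x_{v_j}}$.

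Next I would show that every vertex of $Z$ is a singleton component. A vertex $z_{uw}$ with $u,w\in K$ has in $G'$ only the two in-arcs $(x_u,z_{uw})$ and $(x_w,z_{uw})$ (every arc incident with a $z$-vertex lies in $E_3\cup E_4$, and the $E_4$-arcs leave $z$-vertices), and both of these are in $S$; hence $z_{uw}$ has in-degree $0$ in $G'-S$ and so lies on no directed cycle. An intermediate vertex $w_i$ of some $P_{d_j,x_{v_j}}$ is incident in $G'$ only to arcs of that gadget, and after deleting $S$ those surviving arcs are just (parallel copies of) the arcs of the directed path $d_j=w_0\to w_1\to\cdots\to w_k$, which is acyclic; so $w_i$ too lies on no directed cycle. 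In a loopless digraph, a vertex lying on no directed cycle is a singleton strongly connected component, which settles this part.

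Then I would argue that $V(G')\setminus Z$ is a single component. Let $C$ be the strongly connected component of $s$ in $G'-S$. Every imbalance gadget is untouched by $S$ and is strongly connected as a digraph, hence all its vertices lie in one component; applying this to $I_{s,s_j}$ and $I_{s,d_j}$ (which contain $s$) places every $s_j$, every $d_j$, and all intermediate vertices of the gadgets in $E_1$ into $C$, and then applying it to $I_{s_j,x_u}$ (which contains $s_j\in C$, where $j$ is the colour of $u$) places every $x_u$ and all intermediate vertices of the imbalance gadgets in $E_2$ into $C$. If $uv$ is an edge of $G$ that is not an edge of $K$, then no in-arc of $z_{uv}$ is in $S$ and the arc $(z_{uv},s)\in E_4$ is not in $S$, so $z_{uv}$ has an in-neighbour ($x_u\in C$) and an out-neighbour ($s\in C$) in $C$, whence $z_{uv}\in C$. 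Finally, for $u\neq v_j$ the path gadget $P_{d_j,x_u}$ is intact, so each of its intermediate vertices is reachable from $d_j\in C$ and reaches $x_u\in C$, and hence lies in $C$. These cases exhaust $V(G')\setminus Z$, so $V(G')\setminus Z\subseteq C$; conversely no vertex of $Z$ can lie in $C$, since each such vertex is its own component (by the previous paragraph) while $s\notin Z$. Hence $C=V(G')\setminus Z$, and since $V(G')\setminus Z$ together with the singletons $\{z\}$ for $z\in Z$ partition $V(G')$, these are exactly the strongly connected components of $G'-S$.

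The only genuine difficulty is bookkeeping: keeping straight which of the many mutually isomorphic gadgets are affected by $S$ (the path gadgets $P_{d_j,x_{v_j}}$ but not $P_{d_j,x_u}$ for $u\neq v_j$, and the vertices $z_{uv}$ for clique edges but not for the others), after which the singleton half follows just from ``in-degree $0$'' and ``all remaining incident arcs point one way along a path'', and the large-component half is a routine reachability chase routed through $s$.
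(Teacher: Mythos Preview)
Your argument is correct and follows essentially the same route as the paper's proof: you show the vertices of $Z$ are singletons (via in-degree $0$ for the $z_{uw}$'s and the acyclicity of the truncated path for the gadget-internal vertices) and then chase reachability through the untouched imbalance gadgets to assemble the large component around $s$. If anything, your version is slightly more careful than the paper's, since you explicitly account for the intermediate vertices of $I_{s,d_j}$ and of the intact path gadgets $P_{d_j,x_u}$ with $u\neq v_j$, which the paper's proof leaves implicit.
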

    \begin{claimproof}
        We have added to $S$ all incoming arcs of the vertices in $\{z_{uw} \mid uw \in E(K)\}$, so each forms a singleton strongly connected component of $G-S$.
        For the remaining vertices of $Z$, fix $i \in [\clique]$ and observe that $x_{v_i}$ is a sink vertex in the path gadget $P_{d_i x_{v_i}}$.
        Hence, every cycle of $G$ that contains a vertex from $V(P_{d_i, x_{v_i}}) \setminus \{d_i, x_{v_i}\}$ has to include an incoming arc of $x_{v_i}$ that is inside the path gadget $P_{d_i,x_{v_i}}$.
        These arcs have all been added to $S$ so the intermediate vertices of these path gadgets also form singleton strongly connected components. Note that there is no directed cycle contained within a path gadget.
        We now show that all the vertices in $G-S$ except those from $Z$ lie in the same strongly connected component.
        Consider the vertex $s$.
        As $S$ contains no edges of the imbalance gadgets we conclude that there is a strongly connected component of $G-S$ containing all the vertices in $I_{s,s_i}$ and $I_{s,d_i}$ for every $i\in [\clique]$ and $I_{s_i, x_u}$ for every $i\in [\clique]$ and every $u \in V_i$.
        Recall that the imbalance gadgets have many arcs between consecutive vertices in both directions (including the intermediate vertices).
        Now, for any $z_{uv} \notin Z$, we have either $u \notin V(K)$ or $v \notin V(K)$.
        Without loss of generality, assume that $u \notin V(K)$.
        Then, we have an arc from $x_u$ to $z_{uv}$ in $E_3 \setminus S$, so there is a path from $s$ to each $z_{uv} \notin Z$ in $G-S$.
        For each $z_{uv}$ there is a pair of arcs $z_{uv}s$ in $E_4 \setminus S$.
        Hence, for each $z_{uv} \notin Z$ there is a cycle in $G-S$ passing through $s$ and $z_{uv}$.
        Last, the intermediate vertices of path gadgets $P_{d_i,x_u}$ where $u \notin V(K)$ we see that $d_i$ and $x_u$ are in the same strongly connected component of $G-S$ due to the imbalance gadgets.
        As $P_{d_i,x_u}$ contains no edges of $S$ it follows that its intermediate vertices belong to the strongly connected component of $G-S$ that contains $d_i$ and $x_u$, which completes the proof of the claim.
    \end{claimproof}

    Building on the above claim, let us call the strongly connected component of $G-S$ containing all vertices excepting $Z$, the \emph{large} component.
    Since singleton strongly connected components are always balanced, we only need to show that the large component is Eulerian, i.e., it is balanced with respect to the active arcs.

    \begin{claim}\label{claim:large_component}
        The large component is balanced.
    \end{claim}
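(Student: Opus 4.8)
The goal is to verify that after removing the solution set $S$, the one big strongly connected component (call it $L$) is balanced, i.e. every vertex of $L$ has equal in- and out-degree when we restrict attention to arcs with both endpoints in $L$. My approach is to go vertex-type by vertex-type, exactly as in the proof of Claim~\ref{claim:imbalance}, but now keeping track of which arcs are (a) deleted because they lie in $S$ and (b) inactive because one endpoint fell into a singleton component (the set $Z$). The key bookkeeping fact is the following: the only arcs of $G'$ that leave $L$ are precisely the arcs of $S$ together with the arcs incident to $Z$, and the arcs incident to $Z$ are (i) the $E_3$ arcs $(x_{v_i},z_{v_iv_j})$ entering the singleton $z$-vertices — but these are already in $S$ — and (ii) the forward/backward arcs along each path gadget $P_{d_j,x_{v_j}}$ incident to its interior vertices. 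So effectively, inside $L$, the arcs we have lost compared to $G'$ are exactly: the $2\binom{\ell}{2}$ arcs $(x_{v_i},z_{v_iv_j})$, and for each $j$ all $\ell-1$ arcs $(w_b, x_{v_j})$ of $P_{d_j,x_{v_j}}$ entering $x_{v_j}$ (these are in $S$), plus the $c(\ell-1)$-worth of other path-gadget arcs that become inactive but whose interior endpoints leave $L$ anyway.

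**Vertex-by-vertex check.** I would then recompute imbalances within $L$:
\begin{description}
    \item[Intermediate vertices of surviving gadgets (in $E_1$ and the $I_{s_j,x_u}$ gadgets of $E_2$).] These are untouched by $S$ and their neighbours stay in $L$, so by Claim~\ref{claim:imbalance} they remain balanced.
    \item[The vertices $z_{uv}$ with $z_{uv}\notin Z$.] Here at least one of $x_u,x_v$ (say $x_u$) is not in $K$; the arc $(x_u,z_{uv})$ survives, $(z_{uv},s)$ survives (both copies), but $(x_v,z_{uv})$ may or may not survive. If $v\notin K$ too, $z_{uv}$ keeps both in-arcs and both out-arcs, balance $0$. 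If $v\in K$, then $v=v_j$ for some $j$; but $(x_{v_j},z_{uv})\in S$ only when the other endpoint is also a clique vertex of a different colour, which is not the case here — so actually $(x_{v_j},z_{uv})$ survives as well. Either way $z_{uv}$ has in-degree $2$, out-degree $2$ in $L$.
    \item[The vertices $x_{v_j}$ for $j\in[\ell]$.] In $G'$, $x_{v_j}$ has $k+|N_G(v_j)|+1$ in-arcs and $k+1+|N_G(v_j)|$ out-arcs. We delete from $S$ the $\ell-1$ in-arcs along $P_{d_j,x_{v_j}}$, and we delete the $\ell-1$ out-arcs $(x_{v_j},z_{v_jv_i})$ for the $\ell-1$ clique-edges at $v_j$. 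So it loses $\ell-1$ in-arcs and $\ell-1$ out-arcs: balance preserved.
    \item[The vertices $x_u$ for $u\notin K$.] No incident arc of $x_u$ is in $S$, and its $E_3$ out-arcs go to $z$-vertices not in $Z$ and its $P_{d_{c(u)},x_u}$/$I$-gadget in-arcs stay in $L$; so it stays balanced.
    \item[The vertices $s_j$.] Untouched by $S$, all neighbours in $L$; balanced as in Claim~\ref{claim:imbalance}.
    \item[The vertices $d_j$.] In $G'$, $d_j$ had imbalance $\ell-1$: it has $r_j$ out-arcs into the $\ell-1$ path gadgets $\{P_{d_j,x_u}: u\in V_j\}$ and $r_j-(\ell-1)$ net in-arcs from $E_1$. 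In $G'-S$, the path gadget $P_{d_j,x_{v_j}}$ has its interior vertices cut off into singletons; in particular the $\ell-1$ out-arcs $(d_j,w_1)$ of $P_{d_j,x_{v_j}}$ become inactive (their heads leave $L$). So $d_j$ loses exactly $\ell-1$ out-arcs and no in-arcs, bringing its imbalance within $L$ to $0$.
    \item[The vertex $s$.] In $G'$, $s$ had imbalance $-\ell(\ell-1)$: $2|E(G)|$ in-arcs from $E_4$, and $2|E(G)|-\ell(\ell-1)$ net out-arcs from $E_1$. None of $s$'s arcs are in $S$, and all of $s$'s neighbours (intermediate vertices of $E_1$ gadgets, and all $z_{uv}$) lie in $L$ — including the $z_{uv}\in Z$? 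No: $z_{uv}\in Z$ means $u,w\in K$, and those are singletons, so the two arcs $(z_{uv},s)$ from such $z_{uv}$ are inactive. There are exactly $\binom{\ell}{2}$ such $z$-vertices, each contributing $2$ in-arcs to $s$, so $s$ loses $2\binom{\ell}{2}=\ell(\ell-1)$ in-arcs and no out-arcs. Hence the imbalance of $s$ within $L$ becomes $-\ell(\ell-1)+\ell(\ell-1)=0$.
\end{description}

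**Conclusion and main obstacle.** Since every vertex of $L$ is balanced inside $L$, the large component is Eulerian, which completes the proof of the claim (and, together with Claims~\ref{claim:connected_component} and \ref{claim:imbalance}, the forward direction of the reduction). I expect the main obstacle to be the careful accounting for the vertices $d_j$ and $s$: one must be sure that the arcs that drop out of $L$ are exactly the ones that cancel the original nonzero imbalance — in particular, that for $d_j$ exactly the gadget $P_{d_j,x_{v_j}}$ (and no other $P_{d_j,x_u}$) is the one whose interior is severed, contributing exactly $\ell-1$ lost out-arcs, and that for $s$ exactly the $\binom{\ell}{2}$ vertices $z_{uv}$ with both endpoints in the clique become singletons, contributing exactly $\ell(\ell-1)$ lost in-arcs. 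The remaining vertex types are routine once Claim~\ref{claim:connected_component} has pinned down $Z$. One should also double-check that no $E_3$ arc $(x_{v_i},z_{uv})$ with $z_{uv}\notin Z$ was accidentally placed in $S$ — by construction $S$ only contains the arcs $(x_{v_i},z_{v_iv_j})$ with $v_i,v_j\in K$, and for those $z_{v_iv_j}\in Z$, so this is consistent.
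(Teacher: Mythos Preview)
Your proof is correct and follows essentially the same vertex-by-vertex case analysis as the paper's proof, with the same key observations: $s$ loses exactly $\ell(\ell-1)$ in-arcs via the singleton $z$-vertices, each $d_j$ loses exactly $\ell-1$ out-arcs via the severed path-gadget interior, and each $x_{v_j}$ loses $\ell-1$ in-arcs and $\ell-1$ out-arcs. One small slip: in the $d_j$ item you write ``the $\ell-1$ path gadgets $\{P_{d_j,x_u}:u\in V_j\}$'' where you mean the $|V_j|$ path gadgets (each contributing $\ell-1$ out-arcs, for a total of $r_j$); and you should also include the intermediate vertices of the \emph{surviving} path gadgets $P_{d_j,x_u}$ with $u\neq v_j$ among the ``untouched'' vertices, but these are trivially balanced for the same reason as the imbalance-gadget interiors.
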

    \begin{claimproof}
        We consider all the vertices based on the following five cases.
        \begin{enumerate}[\bf (1)]
            \item Consider the vertex $s$, and the large component $G-Z$.
                We have that $\deg^+_G(s) = \deg^+_{G-Z}(s)$ whereas the large component contains all but $|E(H')|$ of the in-neighbors of $s$, more precisely, it is missing in-neighbors $\{z_{uw} \mid uw \in E(K)\}$.
                Recall that each vertex $z_{uw}$ has two arcs to $s$.
                Hence we have $\deg^-_{G-Z}(s) = \deg^-_G(s) - 2|E(H')|$.
                As the imbalance of the vertex $s$ in $G$ is $-2|E(H')|$ (by Claim \ref{claim:imbalance}), the vertex $s$ is balanced in the large component.

            \item The vertices in $\{s_i \mid i \in [\clique]\} \cup \{z_{uv} \mid u \notin V(K)~\text{or}~v \notin V(K)\} \cup \{ x_u \mid u \notin V(K)\}$ remain balanced as the large component contains all their in- and out-neighbors in $G$, and in $G$ these vertices were already balanced.

            \item Similarly to the previous case, we see that the vertices of all the imbalance gadgets and vertices for each $i \in [\clique]$ in path gadgets $P_{d_i,x_u}$ where $u \in V_i$ but $u \notin V(K)$ are part of the large component.
                As the intermediate vertices of these gadgets were balanced in $G$ it follows they are balanced in the large component.

            \item Now consider the vertex $d_i$ for every $i \in [\clique]$.
            From the imbalance gadgets connecting $s$ and each $d_i$, we  deduce that $d_i$ belongs to the large component $G-Z$ as well as $s$. So, $\deg^-_{G}(d_i) = \deg^-_{G-Z}(d_i)$.
                On the other hand, the large component contains all but $|N_{H'}(i)|$ many out-arcs of $d_i$ which are contained in the path gadget $P_{d_i, x_{v_i}}$.
                So we have $\deg^+_{G-Z}(d_i) = \deg^+_G(d_i) - |N_{H'}(i)|$.
                As the imbalance of the vertex $d_i$ in $G$ is $|N_{H'}(i)|$ (by Claim \ref{claim:imbalance}), the vertex $d_i$ is balanced in the large component.

            \item Finally, for every $i \in [\clique]$ consider the vertex $x_{v_i}$.
                By Claim \ref{claim:imbalance}, vertex $x_{v_i}$ is balanced in $G$.
                However, the large component does not contain all the neighbors of $x_{v_i}$.
                It excludes $|N_{H'}(i)|$ out-neighbors which are precisely $\{z_{uv_i} \mid uv_i \in E(K)\}$ and the unique in-neighbor in the path gadget $P_{d_i, x_{v_i}}$ from which there are 
                               $|N_{H'}(i)|$ arcs to $x_{v_i}$.
                Therefore, vertex $x_{v_i}$ is balanced in the large component.
        \end{enumerate}
        This completes the proof of the claim.
    \end{claimproof}

    We removed $k$ arcs and by Claim \ref{claim:connected_component} and \ref{claim:large_component} the resulting graph has balanced strongly connected components, which completes the forward direction.

    In the converse direction, assume that $(G,k)$ is a yes-instance and let $S$ be an inclusion-wise minimal solution. Let us first establish some structure of $S$, from which it will be possible to recover a solution for the original {\PSIshort} instance.

    Let $\mathcal C$ denote the strongly connected component of $G-S$ that contains $s$.
    Due to Lemma \ref{lem:imbalance_gadget}, $S$ does not contain any arcs of any of the imbalance gadgets, implying that $\mathcal C$ contains $s_i$ and $d_i$ for every $i \in [\clique]$ as well as $x_u$ for every $u \in V(G')$.
    Note that in the imbalance gadgets there are $k+1$ arc-disjoint paths from $x_u$ to $d_i$ for every $i \in [\clique]$ and $u \in V_i$, so we can use Lemma \ref{obs:path_gadget} to get that if $S$ contains arcs of a path gadget $P_{d_i,x_u}$, then they form a cut in it.
  We assume that if a cut of a path gadget $P_{d_i,x_u}$ is in $S$, then the cut consists of the incoming-arcs of $x_u$ in the gadget.
  This simplifying assumption can be made since all inclusion-wise minimal cuts of a path gadget are of the same cardinality and adding any minimal cut of a path gadget to $S$ makes all arcs of the path gadget inactive in $G-S$.

    Recall from Claim \ref{claim:imbalance} that the only imbalanced vertices in $G$ are $\{s\} \cup \{ d_i \mid i \in [\clique] \}$. Let us make some observations based on the fact that these vertices are eventually balanced in $\mathrm{strong}(G-S)$.

    For each $i \in [\clique]$ no in-arcs of $d_i$ are in $S$, because they lie in an imbalance gadget (Lemma \ref{lem:imbalance_gadget}).
    In order to make $d_i$ balanced $S$ must contain a cut of exactly one of the path gadgets starting at $d_i$, call it $P_{d_i,x_{v_i}}$.
    Recall that $x_{v_i}$ was originally balanced in $G$.
    Further, recall $x_{v_i}$ is in $\mathcal C$ along with $s_i$ and $d_i$.
    Since the imbalance gadget starting at $s_i$ and ending at $x_{v_i}$ cannot intersect $S$ and we have deleted all of the $|N_{H'}(i)|$ incoming arcs to $x_{v_i}$ from the path gadget $P_{d_i,x_{v_i}}$, the imbalance of $-|N_{H'}(i)|$ created at $x_{v_i}$ needs to be resolved by making exactly $|N_{H'}(i)|$ of its out-arcs in $E_3$ inactive in $G-S$.
    Since we have already spent the budget of $\sum_{i \in [\clique]}|N_{H'}(i)| = 2|E(H')|$ in the path gadgets, the remaining budget to be used for resolving imbalances at $\{x_{v_i} \mid i\in [\clique]\}$ is $2|E(H')|$.

    Recall that $s$ is imbalanced in $G$ and to make it balanced we need to make $2|E(H')|$ incoming arcs of $s$ (from $E_4$) inactive in $G-S$, because all outgoing arcs of $s$ lie in imbalance gadgets and cannot be in $S$.

    Finally, recall that for each $uv \in E(G')$, the vertex $z_{uv}$ is balanced in $G$ (by Claim \ref{claim:imbalance}).
    Since the strongly connected component $\mathcal C$ in $G-S$ contains the vertices $s, x_u, x_v$ (i.e., all neighbors of $z_{uv}$), for the vertex $z_{uv}$ to remain balanced in $\mathrm{strong}(G-S)$, we have the following exhaustive cases regarding the inactive arcs between vertices $s, x_u, x_v, z_{uv}$.
    \begin{enumerate}[\bf (1)]
        \item all of the four arcs incident to $z_{uv}$ are active;
        \item exactly one incoming and one outgoing arc is inactive;
        \item both incoming arcs or both outgoing arcs are inactive.
    \end{enumerate}
    As previously noted, we still need $2|E(H')|$ arcs in $E_3$ and $2|E(H')|$ arcs in $E_4$ to be inactive in $G-S$.
    The required number of inactive arcs in $E_3 \cup E_4$ is twice the remaining budget, so for every $z_{uv},x_u,x_v$, the arcs between $s, x_u,x_v,z_{uv}$ must be in Case (1) or Case (3).
    Moreover, whenever Case (3) occurs, we may assume without loss of generality that the arcs in $S$ are the two arcs $(x_u,z_{uv})$ and $(x_v,z_{uv})$. Thus, there are exactly $|E(H')|$ vertices $z_{uv}$ such that the arcs between $s, x_u,x_v,z_{uv}$ are in Case (3).

    We now extract the solution $K$ for $(G', H', \phi)$ by taking, for each $i\in [\clique]$, the vertex $v_i \in V(G')$ such that a cut of $P_{d_i,x_{v_i}}$ is contained in $S$.
    We have shown that there are exactly $|E(H')|$ vertices $z_{uv}$ such that the arcs between $s, x_u,x_v,z_{uv}$ are in Case (3) and for each $i \in [\clique]$ and the vertex $x_{v_i}$, exactly $|N_{H'}(i)|$ of its outgoing arcs are made inactive by $S$. This can only happen if for every $ij \in E(H')$, there is a vertex $z_{v_iv_j}$, implying that $v_iv_j$ is an edge in $G'$.
\end{proof}

\subsection{W[1]-hardness of ESCAD Parameterized by Vertex Cover Number}\label{sec:escad_w1_hard_vc}

In this section, we show that ESCAD is \woh when parameterized by the vertex cover number. 
Jansen, Kratsch, Marx, and Schlotter \cite{DBLP:journals/jcss/JansenKMS13} showed that \textsc{Unary Bin Packing} is \woh when parameterized by the number of bins $h$.

\defprob{\textsc{Unary Bin Packing}}{
    A set of positive integer item sizes $x_1,\dots,x_n$ encoded in unary, a pair of integers $h$ and $b$.
}{
    Is there a partition of $[n]$ into $h$ sets $J_1,\dots,J_h$ such that $\sum_{\ell \in J_j} x_\ell \leq b$ for every $j \in [h]$?
}

Jansen et al.~\cite{DBLP:journals/jcss/JansenKMS13} also showed that assuming {\ETH}  the well-known $n^{\OO(h)}$-time algorithm is asymptotically almost optimal.

\begin{proposition}[\cite{DBLP:journals/jcss/JansenKMS13}]\label{prop:UnaryBinPackingETHLowerBound}
    There is no algorithm solving the \textsc{Unary Bin Packing} problem in $f (h) \cdot n^{o(h/ \log h)}$ time for some function $f$, where $h$ is the number of bins in the input and $n$ is the input length, unless \ETH fails.
\end{proposition}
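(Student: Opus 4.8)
\textbf{Proof plan for Proposition~\ref{prop:UnaryBinPackingETHLowerBound}.}

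This is the \ETH lower bound of Jansen, Kratsch, Marx, and Schlotter~\cite{DBLP:journals/jcss/JansenKMS13} for \textsc{Unary Bin Packing} parameterized by the number of bins $h$, and since it is quoted verbatim from their paper, the natural plan is to reconstruct the chain of reductions they use rather than invent a new argument. The plan is to start from a problem known to have no $f(k)\cdot n^{o(k/\log k)}$-time algorithm under \ETH, the prototypical such problem being $k$-\textsc{Clique} (equivalently the \textsc{Multicolored Clique} used in \Cref{sec:escad_w1_solution}); recall that the \ETH-based lower bound for \textsc{Clique}, due to Chen et al., rules out $f(k)\cdot n^{o(k)}$-time algorithms, and that the extra $\log k$ in the denominator typically arises from an intermediate reduction that blows up the parameter by a $\log k$ factor. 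First I would reduce from a gap/structured variant of \textsc{Clique} or directly from \textsc{Subgraph Isomorphism} on a host graph of bounded size, encoding each of the $h$ ``bins'' as a gadget that must collectively select the edges/vertices of the sought subgraph; the item sizes (written in unary, which is legitimate precisely because the source instance's parameter controls how large the sizes need to be relative to $n$) encode the identity of vertices and their incidences, so that a perfect packing into $h$ bins corresponds exactly to a consistent choice of clique vertices together with all $\binom{h}{2}$ edges among them.

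The key steps, in order, are: (i) fix the source problem and its \ETH lower bound, choosing one whose parameter is $\Theta(k)$ and whose known lower bound is $f(k)\cdot n^{o(k)}$; (ii) design a polynomial-time reduction that maps an instance with parameter $k$ to a \textsc{Unary Bin Packing} instance with $h = \OO(k\log k)$ bins (or, alternatively, a reduction to $h=\OO(k)$ bins starting from a source whose lower bound already carries the $\log$ factor), ensuring the total input length $n$ of the produced instance is polynomial in the source length — this is where the unary encoding must be controlled, so the item sizes must be bounded by a polynomial in the number of items; (iii) verify both directions of correctness of the reduction, i.e.\ a packing exists iff the source instance is a yes-instance; (iv) chase the parameters through the composition: an $f(h)\cdot n^{o(h/\log h)}$ algorithm for \textsc{Unary Bin Packing}, composed with the reduction, would solve the source problem in $f(\OO(k\log k))\cdot n^{o(k\log k/\log(k\log k))} = f'(k)\cdot n^{o(k)}$ time, contradicting its \ETH lower bound. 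Step (ii) is the main obstacle: one must simultaneously (a) keep the number of bins linear in the source parameter up to the $\log$ slack, (b) keep item sizes polynomially bounded so that the unary encoding does not explode, and (c) make the arithmetic of bin capacities rigid enough that the only way to fill every bin exactly is the ``intended'' assignment — the classical trick here is a place-value / digit encoding where each bin's capacity is a number whose digit blocks in a suitable large base force a unique decomposition, and the delicate point is choosing the base and block widths so that carries cannot occur and the unary length stays polynomial.

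Since the statement is quoted directly from~\cite{DBLP:journals/jcss/JansenKMS13}, I would in the write-up simply cite that paper for the full reduction and the verification of its correctness, and I would only spell out the parameter bookkeeping in step (iv) to make explicit why the $o(h/\log h)$ form of the ruled-out running time is exactly what the composition yields. In the present paper this proposition is used purely as a black box (to derive \Cref{thm:escad_eth_lb} via the reduction of \Cref{sec:escad_w1_hard_vc}), so no novelty is needed here and the cleanest treatment is a one-line deferral to the source with the parameter chase noted for the reader's convenience.
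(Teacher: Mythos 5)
This proposition is quoted verbatim from Jansen, Kratsch, Marx, and Schlotter and the paper itself gives no proof, using it purely as a black box; your plan to defer to the source with only the parameter bookkeeping spelled out is therefore exactly the paper's treatment. The speculative details of your reconstructed reduction (source problem, where the $\log h$ factor enters) are not guaranteed to match the actual argument in the cited work, but since both you and the paper ultimately just cite it, nothing hinges on them.
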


In order to carefully handle vertex balances in our reduction, it is helpful to work with a variant of the above problem, called \textsc{Exact Unary Bin Packing}, where the inequality $\sum_{\ell \in J_j} x_\ell \leq b$ is replaced with the equality $\sum_{\ell \in J_j} x_\ell = b$. That is, in this variant, all bins get filled up to their capacity.

\begin{lemma} \label{lem:exactreduction}
    There is a polynomial reduction from \textsc{Unary Bin Packing} to \textsc{Exact Unary Bin Packing} 
    that preserves the number of bins.
\end{lemma}
\begin{proof}
    Let $\mathcal I'=\big((x_1,\dots,x_n),h,b\big)$ be an instance of \textsc{Unary Bin Packing}.
 If $b \ge \sum_{i=1}^n x_i$, then $\mathcal I'$ is trivially a yes-instance and we can return a trivial  yes-instance of \textsc{Exact Unary Bin Packing} with only one bin.
    In the same way, if $b \cdot h < \sum_{i \in [n]} x_i$, then $\mathcal I'$ is trivially a no-instance and we return a trivial no-instance of \textsc{Exact Unary Bin Packing} with only one bin. Now, suppose neither of the above cases occur.

     Note that the length of the unary encoding of $b$ is upper bounded by the total length of the unary encoding of all items $x_1,\dots,x_n$.
    Similarly, if $h \ge n$ then the instance boils down to checking whether $x_i \le b$ for every $i \in [n]$ (and producing a trivial \textsc{Exact Unary Bin Packing} instance accordingly) so we can assume that $h < n$, hence, the length of the unary encoding of $h$ is upper bounded by the total 
    length of the unary encoding of all items.
    We now construct an instance $\mathcal I$ of \textsc{Exact Unary Bin Packing} from $\mathcal I'$ by adding $h \cdot b - \sum_{i \in [n]} x_i$ one-sized items (this is non-negative because of the preprocessing steps) while not changing $b$ or $h$. 
    
    If $\mathcal I'$ is a yes-instance, then one can fill-in the remaining capacity in every bin with the unit-size items, to get a solution for  $\mathcal I$.
    Conversely, if $\mathcal I$ is a yes-instance, then removing the newly added unit-size items yields a solution for $\mathcal I'$.
    Note that since summing the items in $\mathcal{I}$ gives exactly $b \cdot h$, this implies that $|\mathcal I| = \mathcal O(|\mathcal I'|^2)$ so the instance of \textsc{Exact Unary Bin Packing}  remains polynomially bounded.
\end{proof}

\ESCADWHARDVC*

\begin{proof}
We reduce from an \textsc{Exact Unary Bin Packing} instance  $\mathcal I$ to an instance $\mathcal I^*=(G,k)$ of \ESCAD in polynomial time. Let us set $k$ to be equal to $b \cdot h (h-1)$.
We now build a graph $G$ that models the $h$ bins by $h$ copies of interconnected gadgets, one for each bin. Moreover, we will ensure that the vertices in the gadgets representing the bins form a vertex cover and  each item will be modelled by a vertex of the independent set which is the complement of this vertex cover. 

In our reduction, we use the following notation. For a pair of vertices $p,q$, a $c$-arc $(p,q)$ denotes  $c$ parallel copies of the arc $(p,q)$ and a thick arc $(p,q)$ denotes a $3k$-arc $(p,q)$. The construction of $G$ is as follows.

\begin{itemize}
    \item The vertex set of $G$ is defined to be the set $\{u_j \mid j\in [h]\}\cup \{v_j \mid j\in [h]\}\cup \{w_i \mid i\in [n]\}$.

    \item For each $j\in [h]$, we add a $b$-arc $(u_j,v_j)$, a thick arc $(u_j,v_j)$ and a thick arc $(v_j,u_j)$. We call the subgraph induced by $u_j,v_j$ and these arcs, the \emph{$b$-imbalance gadget} $B_j$. 

    \item Next, we add thick arcs $(u_j,u_{j'})$ for every $j < j'$ where $j,j' \in [h]$.

    \item Finally, for each $i \in [n]$ and $j \in [h]$, we add $x_i$-arcs $(w_i,u_j)$ and $(v_j,w_i)$.
\end{itemize}

   This concludes the construction, see \Cref{fig:vc_hardness}.     Before we argue the correctness, let us make some observations.
     
     Note that the vertices participating in the imbalance gadgets form a vertex cover of the resulting graph and their number is $2h$.
     Hence, if we prove the correctness of the reduction, we have the required parameterized reduction from \textsc{Exact Unary Bin Packing} parameterized by the number of bins to ESCAD parameterized by the vertex cover number of the graph.

     We say that a  set of arcs in $G$ \emph{cuts} a $(p,q)$ arc if it contains all parallel copies of $(p,q)$. Note that no set of at most $k$ arcs cuts a thick $(p,q)$ arc. In particular, no solution to the ESCAD instance $(G,k)$ cuts any thick arc $(p,q)$ that appears in the graph. 

    \begin{figure}[ht]
        \centering
        \includegraphics[scale=1.1]{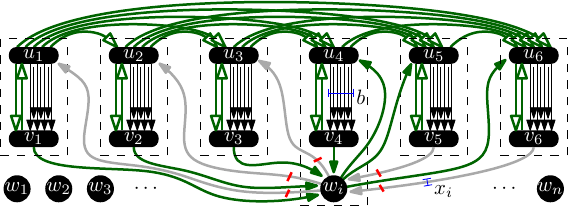}
        \caption{
            A part of the resulting \ESCAD instance after reduction from \textsc{Exact Unary Bin Packing} with six bins;
            connections between the independent vertices and imbalance gadgets are shown only for one vertex $w_i$.
            Thick arcs are shown with empty arrowhead, bold arcs incident to $w_i$ are $x_i$-arcs.
            Crossed off arcs are in a solution and dashed boxes show strongly connected components of the solution.
            This example represents $x_i \in J_4$.
        }%
        \label{fig:vc_hardness}
    \end{figure}

    \medskip
\noindent
\textbf{$\mathcal{I}$ is a yes-instance $\Rightarrow$ $\mathcal{I^*}$ is a yes-instance:}
    Assume that we have a partition $J_1,\dots,J_h$ that is a solution to $\mathcal I$. We now define a solution $S$ for $\mathcal I^*$.
    For every $x_i \in J_j$ we cut (i.e., add to $S$) all parallel copies of the arc $(w_i,u_{j'})$ for every $j' < j$ and we cut all parallel copies of the arc $(v_{j''},w_i)$ for every $j'' > j$.
    This results in cutting a total of $x_i \cdot (h-1)$ arcs incident to each $w_i$ and as $\sum_{i=1}^n x_i = b \cdot h$ we cut exactly $b \cdot h(h-1) = k$ arcs in total.

    \begin{claim}\label{claim:strong_balanced}
        $\mathrm{strong}(G-S)$ is balanced.
    \end{claim}
    \begin{claimproof}
        Due to the thick $(u_j,v_j)$ and $(v_j,u_j)$ arcs and the fact that no set of  at most $k$ arcs can cut a thick arc, we have that for every $j\in [h]$, there is a  single strongly connected component of $G-S$  containing both vertices of $B_j$.
        We next observe that in $\mathrm{strong}(G-S)$ no pair of distinct $b$-imbalance gadgets are contained in the same  strongly connected component. This is because any path in $G$ from $B_j$ to $B_i$ for $i<j$ must use arcs $(v_{j'},w_p)$ and $(w_p,u_{i'})$ for some $p\in [n]$, and $i',j'\in [h]$ such that $i'<j'$. However, one of the these two arcs is part of $S$ by definition.

        Further, notice that the strongly connected component containing $B_j$ also contains the vertex $w_i$ if $x_i \in J_j$. This is because we do not delete the arcs $(v_{j},w_i)$ and $(w_i,u_{j})$. Since we have already argued that the imbalance gadgets are all in distinct strongly connected components in $G-S$, we infer that the strongly connected component containing $B_j$ also contains the vertex $w_i$ if and only if $x_i \in J_j$. 
        Hence, we conclude that incident to $w_i$, the only active arcs are those of the form $(w_i,u_j)$ and $(v_j,w_i)$ for $j$ such that $x_i \in J_j$, making $w_i$ balanced in $\mathrm{strong}(G-S)$.
        Towards each $u_j$ and $v_j$ for $j \in [h]$, the thick $(u_j,v_j)$ and thick $(v_j,u_j)$ arcs contribute the same in-degree and out-degree.
        The only active arcs that remain and are incident to $u_j$ and $v_j$ are the $b$-arcs $(u_j,v_j)$ for each $j\in [h]$.  We argue that these are balanced by the arcs incoming from all the vertices $w_i$ to $u_j$  and the arcs outgoing from $v_j$ to $w_i$, where $x_i \in J_j$.
        Indeed, $\sum_{\ell \in J_j}{x_\ell} = b$ for all $j \in [h]$ so $u_j$ has $b$ incoming arcs and $v_j$ has $b$ outgoing arcs from the vertices $\{w_1,\dots,w_n\}$, making $u_j$ and $v_j$ balanced in $\mathrm{strong}(G-S)$ for all $j \in [h]$.
    \end{claimproof}

    \medskip
    \noindent\textbf{$\mathcal{I^*}$ is a yes-instance $\Rightarrow$ $\mathcal{I}$ is a yes-instance:}
    We aim to show that in any solution for the ESCAD instance, the arcs that are cut incident to $w_i$ for any $i \in [n]$ have the same structure as described in the other direction, i.e., for all $w_i$ there exists $j$ such that the solution cuts $(w_i,u_{j'})$ for all $j' < j$ and it cuts $(v_{j''}, w_i)$ for all $j'' > j$. This is equivalently phrased in the following claim.
    \begin{claim}\label{claim:one_cut}
        There are no two indices $a,b \in [h]$ with $a<b$ such that both $(w_i,u_a)$ and $(v_b,w_i)$ are uncut.
    \end{claim}

    \begin{claimproof}
        Towards a contradiction, consider a solution $S$ without this property. That is,  for some two indices $a,b$ where $1\leq a<b\leq h$, both $(w_i,u_a)$ and $(v_b,w_i)$ are uncut by $S$. Choose the pair $a,b$ such that $a$ is minimized.
        The graph $G$ contains thick arcs $(u_a,u_b)$ and $(u_b,v_b)$ that cannot be cut by $S$.
        Hence, there is a cycle $(w_i,u_a,u_b,v_b,w_i)$ in $\mathrm{strong}(G-S)$, implying that the vertices of two imbalance gadgets $B_a$ and $B_b$ are in the same strongly connected component of $G-S$. 
        We argue that $u_a$ cannot be balanced in $\mathrm{strong}(G-S)$.
        We first ignore the two thick arcs $(u_a,v_a)$ and $(v_a,u_a)$ as they balance each other.
        We picked $a$ to be minimum, so $u_a$ has no active incoming arcs that belong to a thick $(u_{a'},u_{a})$ arc for some $a'<a$ since otherwise, $u_a,v_a,u_{a'},v_{a'}$ would be in the same strongly connected component of $G-S$. 
        Hence, the only remaining active incoming arcs on $u_a$ are the  incoming arcs from $\{w_1,\dots,w_n\}$, of which there are exactly $\sum_{i \in [n]} x_i$ arcs. Recall that we have $\sum_{i \in [n]} x_i=b\cdot h$  and by the definition of $k$, we have $h\cdot b<k$.
        This implies that in $G-S$, $u_a$ has at least $2k$ active outgoing arcs (at most $k$ out of the $3k$ arcs contained in the thick $(u_a,u_b)$ arc can be in $S$) and at most $k$ incoming active arcs, a contradiction to $S$ being a solution. Hence, we conclude that for all $w_i$ there exists $j$ such that the solution cuts $(w_i,u_{j'})$ for all $j' < j$ and it cuts $(v_{j''}, w_i)$ for all $j'' > j$.
    \end{claimproof}

    We next argue that if $S$ is a solution, then for all $w_i$, there exists $j$ such that the solution is disjoint from any $(w_i,u_j)$ arc and any $(v_j,w_i)$ arc. 
    Since the budget is $k = b \cdot h (h-1)$ we have that: If we cut more than $x_i (h-1)$ arcs incident to $w_i$ for some $i \in [n]$, then there exists $i' \in [n] \setminus \{i\}$ such that we cut fewer than $x_{i'} (h-1)$ arcs incident to $w_{i'}$.
    But this would violate Claim \ref{claim:one_cut}.
    Hence, for any solution $S$, we can retrieve the assignment of items to bins in the \textsc{Exact Unary Bin Packing} instance $\cal I$, by identifying for every $i \in [n]$, the unique value of $j \in [h]$ such that $S$ is disjoint from any $(w_i,u_j)$ arc and any $(v_j,w_i)$ arc and then assigning item $x_i$ to bin $J_j$.
    This completes the proof that \ESCAD is \woh parameterized by the vertex cover number. 

    Since our reduction from \textsc{Exact Unary Bin Packing} to \ESCAD transforms the parameter linearly and the instance size polynomially, by invoking Lemma~\ref{lem:exactreduction} and Proposition~\ref{prop:UnaryBinPackingETHLowerBound}, we obtain the claimed \ETH-based lower bound parameterized by the vertex cover number for \ESCAD.
\end{proof}

\subsection{NP-hardness of ESCAD on Graphs of Constant Maximum Degree}\label{sec:escad_np_hard_maxdegree}

We show that \ESCAD is para-\nph~when parameterized by the maximum degree.

\ESCADNPHDEGREE*

\begin{proof}
    We give a polynomial-time reduction from {\sc Vertex Cover} on cubic (3-regular) graphs, which is known to be {\nph} \cite{DBLP:books/daglib/p/Karp10}, to \ESCAD. 
    This reduction is a modification of the proof in \cite{DBLP:books/daglib/p/Karp10} which shows that {\sc Directed Feedback Arc Set} is \nph. The input to \textsc{Vertex Cover} consists of a graph $G$ and an integer $k$; the task is to decide whether $G$ has a vertex cover of size at most $k$.
    Let $(G,k)$ be an instance of  \textsc{Vertex Cover} with $n$ vertices where $G$ is a cubic graph.  
    We construct an \ESCAD instance $\mathcal I'=(G',k)$ in the following way. The vertex set $V(G')= V(G)\times \{0,1\}$ and the arc set $E(G')$ is defined by the union of the sets 
    $\{((u,0), (u,1)) \mid u \in V(G)\}$ and $\{((u,1), (v,0))^2 \mid uv \in E(G)\}$. We  call the arcs of the form $((u,0), (u,1))$  \emph{internal arcs} and arcs  of the form $((u,1), (v,0))$  \emph{cross arcs}. Note that the notation $(x,y)^2$ in the above definition refers to two parallel arcs from $x$ to $y$.
    
     Towards the correctness of the reduction, we prove the following claim.
    
    \begin{claim}\label{claim:vc_reduction}
        $(G,k)$ is a yes-instance of {\sc Vertex Cover} if and only if $(G',k)$ is a yes-instance of \ESCAD.
    \end{claim}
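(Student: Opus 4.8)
The plan is to analyze the structure of the instance $\mathcal I' = (G',k)$ carefully. Observe first that each internal arc $((u,0),(u,1))$ lies on a cycle: since $G$ is cubic, $u$ has a neighbor $v$, so $(u,1)$ has a cross arc to $(v,0)$, and $(v,0)$ has an internal arc to $(v,1)$, which in turn has a cross arc back to some neighbor. In fact, since $G$ is connected (we may assume so, handling components separately), the whole of $G'$ is strongly connected: one can travel from any vertex $(u,\epsilon)$ to any other by alternately following internal and cross arcs along a walk in $G$. So in $G'$ every vertex has nonzero imbalance only where the doubled cross arcs create it — and indeed each vertex $(u,0)$ has in-degree $2\cdot\deg_G(u) = 6$ from cross arcs and out-degree $1$ from its internal arc, hence imbalance $-5$; symmetrically $(u,1)$ has imbalance $+5$. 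This explains the degree bound $\{(1,6),(6,1)\}$ in the statement.

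Next I would observe the key correspondence: deleting an internal arc $((u,0),(u,1))$ "kills" the vertex $u$ — specifically, I claim that for a set $S$ of internal arcs, letting $C = \{u : ((u,0),(u,1)) \in S\}$, the digraph $\mathrm{strong}(G'-S)$ is balanced if and only if $C$ is a vertex cover of $G$. For the forward direction I would argue that if $C$ is a vertex cover, then after deleting those internal arcs, every cross arc becomes inactive: a cross arc $((u,1),(v,0))$ lies on a cycle only if one can return from $(v,0)$ to $(u,1)$, which requires traversing internal arcs of $u$ and $v$; since $uv \in E(G)$ and $C$ is a cover, at least one of $u,v$ is in $C$, so one of those internal arcs is gone and the cross arc is not on any cycle. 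Hence $\mathrm{strong}(G'-S)$ consists only of the surviving internal arcs, each forming its own trivial strongly connected component (a single arc between $(u,0)$ and $(u,1)$, which is balanced, or rather each such pair is a component where both vertices have imbalance $0$ inside it). For the converse, if $C$ is not a vertex cover there is an edge $uv$ with $u,v \notin C$; then both internal arcs $((u,0),(u,1))$ and $((v,0),(v,1))$ survive, and together with the doubled cross arcs $((u,1),(v,0))$ and $((v,1),(u,0))$ they form a strongly connected component containing $(u,0)$, which has imbalance $-4$ inside it (in-degree $2$ from the cross arcs $((v,1),(u,0))$, out-degree $1$ — actually I must recount within the component, but the point is some vertex is unbalanced), so $\mathrm{strong}(G'-S)$ is not balanced.

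The remaining, and I expect main, technical point is to argue that we may assume without loss of generality that an optimal solution $S$ deletes only internal arcs. Here the trick, exactly as in the classical DFAS reduction and facilitated by the cross arcs being doubled, is: if $S$ contains a cross arc, it must contain \emph{both} parallel copies of $((u,1),(v,0))$ to have any cutting effect on that ordered pair — but then I can instead put the single internal arc $((u,0),(u,1))$ into $S$, which costs one arc versus two and achieves at least as much destruction of cycles (every cycle through a copy of $((u,1),(v,0))$ also passes through $(u,0)\to(u,1)$... one needs to be slightly careful here and may instead argue via an exchange argument directly on the vertex-cover side). So the real content is an exchange/normalization lemma showing any solution of size $\le k$ can be converted to one of size $\le k$ using internal arcs only; combined with the correspondence above, $(G',k)$ is a yes-instance iff $G$ has a vertex cover of size $\le k$. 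Throughout I would keep track of the imbalances computed above to make the "some vertex is unbalanced" arguments precise. The degree bound claimed in the theorem statement is immediate from the construction once one notes $\deg_G(u)=3$ for all $u$.
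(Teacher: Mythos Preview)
Your forward direction is essentially correct and matches the paper's: deleting the internal arcs of a vertex cover makes $G'-S$ acyclic (all cross arcs become inactive), hence every strongly connected component is a singleton and trivially balanced. Your phrasing about ``trivial strongly connected components'' formed by surviving internal arcs is slightly off --- those arcs are themselves inactive --- but the conclusion is right.

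The genuine gap is in the converse direction. Your normalization argument asserts that if $S$ contains a cross arc it ``must contain both parallel copies of $((u,1),(v,0))$ to have any cutting effect.'' This premise is false and misses the whole point of the doubling. A valid \ESCAD solution can perfectly well contain a \emph{single} copy of a cross arc: not to cut reachability, but to correct an imbalance inside a surviving strongly connected component. Your proposed swap (two cross-arc copies for one internal arc) therefore does not apply to such solutions, and your fallback of an ``exchange argument directly on the vertex-cover side'' is not spelled out.

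The paper handles this by taking a solution $F$ with the minimum number of cross arcs and arguing that $G'-F$ must in fact be acyclic. The key observation you are missing is structural: if $G'-F$ has a nontrivial strongly connected component, it is Eulerian by assumption, and since every $(u,0)$ has out-degree at most $1$ in $G'-F$ (only the internal arc), the component must be a \emph{simple} cycle $C$. But then for each cross arc on $C$, both endpoints lie in the component, so if both parallel copies survived the head would have in-degree at least $2$ there --- impossible in a simple cycle. Hence $F$ already contains one copy of every cross arc on $C$. Swapping these for the (equally many) internal arcs of $C$ yields a solution with strictly fewer cross arcs, contradicting minimality. Once $G'-F$ is acyclic, the paper reads off a vertex cover of size at most $|F|$ directly (allowing $F$ to still contain cross arcs). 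So the crucial ingredient is to exploit the Eulerian constraint on surviving components together with the out-degree-$1$ bottleneck at the $(u,0)$ vertices; this is what forces single cross-arc deletions into the solution and enables the exchange.
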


    \begin{claimproof}
        In the forward direction, let $(G,k)$ be a yes-instance and  let $X$ be a solution. Consider the arc set  $F= \{((u,0),(u,1)) \mid u\in X\} \subseteq E(G')$.  We  show that $F$ is a feedback arc set of $G'$. Consider any cycle in $G'$. Due to our construction, the cycle must contain two internal arcs $((u,0), (u,1))$ and $((v,0), (v,1))$ where $uv \in E(G)$. Now either $u \in X$ or $v \in X$. That implies that either $((u,0), (u,1))$ or $((v,0), (v,1))$ belongs to $F$. Hence $G' -F$ has no cycles. As $G'-F$ is acyclic, we have that $F$ is a solution to the \ESCAD instance $(G',k)$.

        In the converse direction, let $(G',k)$ be a yes-instance, let $F$ be a solution for this instance  with minimum number of cross arcs. 
        We first argue that $G'-F$ is acyclic. Suppose not.  
        Because of the structure of the constructed digraph, every cycle alternates between internal and cross arcs. So, every strongly connected component of $G'-F$ that is not a singleton must contain an internal arc, and as it must also be Eulerian, the strongly connected component must be a simple cycle $C$ (as for each $u \in V(G)$, the out-degree of the vertex $(u,0)$ in $G' -F$  is at most one).
        Each arc of $C$ is present only once -- to achieve that, the solution $F$ must contain at least one copy of each of the cross arcs that are in $C$. Now, we can remove all the copies of cross arcs in $C$ from the solution and instead, add all internal arcs of $C$ to the solution. This gives us a new solution with fewer cross arcs, a contradiction to our choice of $F$. Hence, we may assume that $G'-F$ is acyclic.   We now argue that $X= \{u \mid ((u,0), (u,1)) \in F\} \cup \{u \mid ((u,1), (v,0)) \in F\}$ is a vertex cover of $G$ of size at most $k$. Clearly $|X| \leq k$.  Consider an arbitrary edge $uv \in E(G)$. Corresponding to the edge $uv$ there is a  4-cycle $((u,0), (u,1))$, $((u,1), (v,0)),$ $ ((v,0), (v,1))$, $((v,1), (u,0))$ in $G'$, and so,  $F$ must contain one of these four arcs. 
        Now, by our definition of $X$, $X \cap \{u,v\} \neq \emptyset$, hence $X$ is a solution for the {\sc Vertex Cover} instance $(G,k)$.
    \end{claimproof}

    This shows that \ESCAD is \nph. Moreover, Since $G$ is  a cubic graph, every vertex in $D'$ has (in, out) degree equal to $(1,6)$ or $(6,1)$.
    This completes the  proof of \Cref{theo:nph}.
\end{proof}
\subsection{Algorithms for ESCAD on Graphs of Bounded Treewidth}\label{sec:escad_tw_tractability}
Due to \Cref{thm:escad_whard_vc}, the existence of an {\fpt} algorithm for \ESCAD parameterized by standard width measures such as treewidth is unlikely. This raises the following questions: (i) Could we get {\fpt} algorithms if we parameterize by treewidth {\em and} some additional parameters of the input? (ii) Could we get an XP algorithm parameterized by treewidth? (iii) Could one obtain  algorithms whose running times match our {\ETH}-based lower bounds? 

In this section, we give  an algorithm for {\ESCAD} that is simultaneously, an
{\sf XP} algorithm parameterized by treewidth, an
{\fpt} algorithm parameterized by the treewidth and solution size, and also an {\fpt} algorithm parameterized by the treewidth and maximum degree of the input digraph. 
Let us note that in the specific case of parameterizing by treewidth and maximum degree, if all we wanted was an {\fpt} algorithm, then we could use Courcelle's theorem at the cost of a worse running time. However, our  algorithm in one shot gives us the three results mentioned above and in simple digraphs, achieves nearly optimal asymptotic dependence on the treewidth assuming {\ETH} (see Theorem \ref{thm:combinedSESCAD} for the lower bound statement).

\ESCADTWXP*

Since the maximum degree is upper bounded by the instance length as already discussed in \Cref{sec:prelims}, this gives an \XP algorithm parameterized by treewidth alone. However, when in addition to treewidth we parameterize either by the size of the solution or by the maximum degree this gives an \fpt algorithm.

\begin{corollary}
    \ESCAD is \fpt parameterized by $\tw + k$, \fpt parameterized by $\tw + \Delta$, and \XP parameterized by $\tw$ alone.
\end{corollary}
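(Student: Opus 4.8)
The plan is to design a bottom-up dynamic programming algorithm over a nice tree decomposition with introduce-edge nodes (as set up in the preliminaries), where the state at a bag node $t$ records enough information about the partial solution restricted to the processed part of the graph to be safely extended. The central conceptual difficulty, flagged in the overview, is that deleting an arc can break a strongly connected component and thereby change the ``active/inactive'' status -- and hence the effective imbalance -- of vertices that have already been forgotten. To get around this, the first step is to have each state guess, for the current bag $X_t$, the partition $\mathcal P$ of $X_t$ induced by the strongly connected components of $\mathrm{strong}(G-S)$ in the eventual solution $S$. There are $2^{\OO(\tw\log\tw)} = 2^{\OO(\tw^2)}$ such partitions, which is where the $2^{\OO(\tw^2)}$ factor comes from. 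For each block of $\mathcal P$, I would additionally guess, for each vertex $v\in X_t$, its \emph{active imbalance} so far -- i.e., $\deg^+ - \deg^-$ counting only arcs that will end up active -- where the value is an integer in $[-\alpha,\alpha]$ with $\alpha=\min(k,\Delta)$ (any vertex whose final active imbalance exceeds $\Delta$ in absolute value is hopeless, and similarly for $k$, since each deletion changes an imbalance by one); this gives the $(2\alpha+1)^{\tw}$ factor. To make the partition guess consistent across introduce/forget/join we also need, for each block, a bit of bookkeeping on \emph{how the block reaches outside the current subtree} -- specifically whether the strong connectivity within the block is witnessed internally or must be routed through vertices not yet introduced -- which contributes the second $(2\alpha+1)^{\tw}$ and the remaining part of the $2^{\OO(\tw^2)}$.

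The second step is to spell out the transitions. At an \textbf{introduce-vertex} node we extend every partition by placing the new vertex into some block (or its own block), initializing its active imbalance to $0$. At an \textbf{introduce-edge} node labeled $uv$ we branch on whether the arc is deleted (pay one unit of budget) or kept, and if kept, on whether it is active (both endpoints in the same block of $\mathcal P$, update both active imbalances) or inactive (endpoints in different blocks, no imbalance change); we must also record that this arc contributes to strongly connecting the block if it is active. At a \textbf{forget} node for vertex $v$ we only keep states in which $v$'s active imbalance is $0$ \emph{and} the block containing $v$ does not depend on $v$ for its internal strong connectivity in a way that can no longer be repaired -- this is the key ``safe to forget'' condition, justified because any imbalance we have left unaccounted for will be resolved by simply \emph{not} strongly connecting the offending vertices later, which is always a legal choice. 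At a \textbf{join} node we combine two child states with the same bag by merging partitions (taking the transitive closure of the union of the two partitions on $X_t$), summing active imbalances, summing budgets used, and reconciling the connectivity-witness bits. Finally, at the root (empty bag) we accept if some state has total budget $\le k$.

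The third step is the correctness proof, split as usual into soundness (any accepting run yields a set $S$, $|S|\le k$, with $\mathrm{strong}(G-S)$ balanced) and completeness (any genuine solution induces an accepting run). Completeness is the direction that needs care: given $S$, one reads off at each bag the partition induced by the strong components of $\mathrm{strong}(G-S)$ and the true active imbalances, and must check these values never leave the range $[-\alpha,\alpha]$ -- for the $k$ bound because each of the $\le k$ deleted arcs shifts one imbalance by one, and for the $\Delta$ bound trivially -- and that the recorded connectivity bits are consistent with $S$. Soundness requires arguing that the forget-node condition really does certify that forgotten vertices end up balanced and that the guessed partition is realizable as actual strong components; the delicate point is ensuring the connectivity bookkeeping is strong enough to rule out ``phantom'' strong connections, i.e.\ that a block claimed to be strongly connected genuinely is. I expect this reconciliation of the partition/connectivity data across join nodes -- making sure it is both necessary and sufficient -- to be the main obstacle; everything else is a by-now-standard, if technical, treewidth DP. The running time is then $2^{\OO(\tw^2)}\cdot (2\alpha+1)^{2\tw}$ states per node times a polynomial-time transition over $n^{\OO(1)}$ nodes, giving the claimed bound; the two corollaries follow by noting $\alpha\le k$, $\alpha\le\Delta$, and $\alpha\le|E(G)|\le n$ respectively.
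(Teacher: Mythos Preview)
Your plan captures the high-level shape of the DP and would, with the connectivity bookkeeping fleshed out, yield \fpt in $\tw+\Delta$ and \XP in $\tw$. However, there is a genuine gap for the $\tw+k$ claim: the quantity you propose to store---the active imbalance of a bag vertex $v$ in $G_t-S$---is \emph{not} bounded by $k$. Even with $S=\emptyset$, a bag vertex can have many active in-arcs from already-introduced neighbours while its out-arcs have not yet been introduced, so its active imbalance in $G_t$ can be of order $\Delta$ regardless of the budget. Your justification ``each deletion changes an imbalance by one'' in fact bounds how far the active imbalance in $G_t-S$ can drift from that in $G_t$, not the active imbalance itself. The paper exploits precisely this: it stores the \emph{offset imbalance} $b_{G_t-S}^R(v)-b_{G_t}^R(v)$, proves it lies in $[-|S|,|S|]$ (Lemma~\ref{lem:offset_bounded}) as well as in $[-\Delta,\Delta]$, and recomputes $b_{G_t}^R(v)$ from scratch at every node since that term does not depend on $S$. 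Without this offset trick your table size is $(2\Delta+1)^{\tw}$ rather than $(2\alpha+1)^{\tw}$, and the \fpt-in-$(\tw+k)$ conclusion does not follow.

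On the connectivity side, the paper's state is richer than a partition: it stores a full directed reachability digraph $R$ on $X_t$ with each arc labelled $\mathrm{direct}/\mathrm{past}/\mathrm{future}$, together with the in-bag solution $W$, and proves a solution-exchange lemma (Lemma~\ref{lem:solution_exchange}) showing that any partial solution compatible with the same $(R,\ell,b,W)$ can be swapped into a global solution. Your ``connectivity-witness bits'' are headed in this direction, and you rightly flag the realizability check as the delicate part; just note that a bare SCC partition does not suffice---directed reachability between bag vertices in \emph{different} SCCs must also be recorded, since forgetting $v$ can create a past path $u\to w$ (via $v$) that later combines with future arcs to certify the guessed SCC structure. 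This is why the paper enumerates $4^{\OO(\tw^2)}$ labelled reachability arrangements rather than $\tw^{\OO(\tw)}$ partitions.
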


Recall that in digraphs, multiarcs are permitted.
So, we use a variant of the nice tree decomposition notion.
This is defined for a digraph $G$ by taking a nice tree decomposition with introduce edge nodes (see \Cref{sec:prelims}) of the simple undirected graph underlying $G$ then expanding each introduce edge node to introduce all parallel copies of arcs one by one. Note that although the new introduce arc nodes introduce \emph{arcs}, the orientation does not affect the decomposition. Let us call such a tree decomposition of $G$, a {\em nice tree decomposition with introduce arc nodes}. 

Korhonen and Lokshtanov~\cite{DBLP:conf/stoc/KorhonenL23} gave a   $2^{tw^2} \cdot n^{\OO(1)}$-time algorithm that computes an optimal tree decomposition, which can be used in our algorithm.  We could also use any of the constant-factor approximation algorithms in \cite{DBLP:journals/jct/RobertsonS95b,DBLP:journals/siamcomp/BodlaenderDDFLP16,DBLP:conf/focs/Korhonen21} instead. We note that this would come at the cost of blowing up the exponents of $\alpha$ in  Theorem \ref{thm:escad_tw_xp} by a constant multiplicative factor. 
However, in terms of asymptotics, this would not make a difference to our running time. 
Moreover, we use the fact that any tree decomposition can be converted to a nice tree decomposition of the same width with introduce edge nodes in polynomial time \cite{DBLP:books/sp/CyganFKLMPPS15}, and the introduce edge nodes can clearly be expanded to introduce arc nodes in polynomial time.
Since the running time of our algorithm dominates the time taken for this step, we may assume that we are given such a tree decomposition. In addition, we insist that once an arc $(u,v)$ is introduced, all parallel copies of this arc and parallel copies of the arc $(v,u)$ (if they exist) are consecutively introduced. Let $G_t$ be the subgraph of the input graph that contains the vertices and arcs introduced in the subtree rooted at $t$.
We refer to $G_t$ as the {\em past} and to all arcs and vertices not present in $G_t$ as the {\em future}.

We express the reachability of the graph that lies outside (both past and future) of the ``current'' bag during our dynamic program using the following notation. 

\begin{definition}
   {\em  For a set $X$, the tuple $(R,\ell)$ is a \emph{reachability arrangement on $X$} if $R$ is a simple digraph with $V(R) = X$, and $\ell$ is a labeling where $\ell\colon E(R) \to \{\mathrm{direct}, \mathrm{past}, \mathrm{future}\}$. We use $\ell(u,v)$ to denote $\ell((u,v))$.
}
\end{definition}

\begin{definition}
	{\em 
	For a digraph $G$ and a vertex $v\in V(G)$, we define the {\em active out-arcs of $v$ in $G$} as the subset of the set $\{(v,u)\mid u\in V(G)\}$ comprising those arcs that lie in the same strongly connected component of $G$ as $v$.  
	The {\em active out-degree of $v$ in a subgraph $H$ of $G$} is the number of active out-arcs of $v$ in $G$ that are contained in $H$. 
	The {\em active in-arcs of $v$ in $G$} and {\em active in-degree of $v$ in a subgraph $H$ of $G$} are defined symmetrically.   The {\em active imbalance of $v$ in a subgraph $H$ of $G$} is obtained by subtracting the active in-degree of $v$ in the subgraph $H$ of $G$ from the active out-degree of $v$ in the subgraph $H$ of $G$.  We drop the explicit reference to $G$ when it is clear from the context.
	}
\end{definition}

The aforementioned notion of active imbalance of $v$ in a subgraph $H$ of $G$ will later enable us to make meaningful statements regarding the ``contribution'' of certain subsets of arcs towards the ultimate  imbalance of a vertex in its strongly connected component when considering the original graph minus a hypothetical solution.

\begin{remark}[Overview of our algorithm]
{\em 
We present a dynamic programming algorithm over tree decompositions.
When one attempts to take the standard approach, the main challenge that arises is that through disconnecting strongly connected components, removing an arc can affect vertices ``far away'' from each other and hence possibly impact vertices that have already been forgotten at the current stage of the algorithm.
Our workaround for this obstacle is to guess the partition of each bag into strongly connected components in the final solution and then keep track of the imbalances of the vertices of the bag under this assumed partition. This allows us to safely forget a vertex as long as its ``active'' imbalance is zero.
Another difficulty lies in keeping track of how these assumed connections interact with the bag: whether they use vertices already forgotten or those yet to be introduced.

For our dynamic programming algorithm(s) over the tree decomposition $(\mathcal{T}, \{X_t\}_{t \in V(\mathcal{T})})$ to work, we also need to ensure that for every bag $X_t$ and every vertex $v\in X_t$, we are able to produce a bounded (in the parameters under consideration) set of possible values of the active imbalance of $v$ in the subgraph $G_t-S$ of $G-S$, where $S$ is a solution. When the maximum degree of $G$, denoted $\Delta$,  is a parameter, this set of values is trivially obtained since the imbalance of any vertex in any subgraph of $G$ is never more than twice the maximum degree of $G$.  For the case where our parameters are $k$ and treewidth, this is more involved as the {\em values} of the possible active imbalances of a vertex  could be arbitrarily large compared to the parameters. However, we prove a lemma (see Lemma \ref{lem:smallRange} later) that shows that we only need to consider boundedly many possible values for the active imbalance of any vertex $v$ in the subgraph $G_t-S$ of $G-S$.
 Moreover, we can efficiently compute this set of possible values for the active imbalance of $v$ in $G_t-S$.
}
\end{remark}

Before we describe our algorithm for {\ESCAD}, we first  prove the lemma bounding the number of possible values for the active imbalance of $v$ in $G_t-S$ as mentioned above. Towards this, we prepare as follows.

\begin{definition}
	{\em 
	For a digraph $G$ and disjoint vertex sets $S$ and $T$, an $S$-$T$ path in $G$ is a path that starts at a vertex in $S$, ends at a vertex in  $T$ and whose internal vertices are disjoint from $S\cup T$. When $S$ or $T$ is a singleton, we simply write it as a vertex instead of  as a  singleton set (e.g., we write  $s$-$t$ path instead of $\{s\}$-$\{t\}$ path). 
	}
\end{definition}
\begin{definition}\label{def:flows}
	{\em  For a digraph $G$ and disjoint vertex sets $S$ and $T$, an $S$-$T$ flow is a set $\PP$ of pairwise arc-disjoint $S$-$T$ paths in $G$. The {\em value} of the flow $\PP$ is the size of the set $\PP$ and the paths in $\PP$ are simply called $S$-$T${\em flow paths} when $\PP$ is clear from the context and simply {\em flow paths} when $S$ and $T$ are also implied.	
	}
\end{definition}

The classic Ford-Fulkerson algorithm \cite{fordFulkerson65} enables one to compute a maximum $S$-$T$ flow in polynomial-time.   

\begin{proposition}\label{prop:fordFulkerson}
	Given a digraph $G$, disjoint vertex sets $S$ and $T$, an $S$-$T$ flow of maximum value can be computed in polynomial time.
\end{proposition}

Let us now define the notion of {\em circulations} using which will be able to express useful properties of  Eulerian digraphs.

\begin{definition}\label{def:circulation}
{\em	For a digraph $G$ and a vertex $v\in V(G)$, we define a {\em $v$-circulation} in $G$ to be a set $\CC$ of pairwise arc-disjoint cycles in $G$ that each contain $v$. The {\em value} of the circulation is the size of the set $\CC$.
We say that a $v$-circulation $\CC$ {\em contains} an arc $a$ if $a$ appears in some element of $\CC$.
 }
\end{definition}

\begin{observation}\label{obs:activeArcsInCirculations}
	For a digraph $G$ and arc $a=(u,v)\in E(G)$, if both endpoints of $a$ lie in the same strongly connected component of $G$, then there is a $v$-circulation that contains $a$. 
\end{observation}

Let us make another simple observation giving an upper bound on the value of a $v$-circulation.

\begin{observation}\label{obs:trivial}
	For a digraph $G$ and a vertex $v\in V(G)$, the maximum value of a $v$-circulation in $G$ is at most the minimum of the in-degree and out-degree of $v$.
\end{observation}

The following lemma says that on Eulerian digraphs, the trivial upper bound stated above is achieved. 

\begin{lemma}\label{lemma:eulerianCirculation}
	If $G$ is an Eulerian digraph, then for every $v\in V(G)$, there is a $v$-circulation in $G$ whose value is equal to the out-degree (equivalently, the in-degree) of $v$. 
\end{lemma}

\begin{proof}
Let $\gamma$ denote the value of the out-degree (and in-degree) of $v$. 
Let $W=u_1,\dots,u_\ell$ be an Euler tour in $G$.
Precisely, (i) $W$ is a closed walk, that is, each $u_i\in V(G)$, $u_1=u_\ell$ and for every $i\in [\ell-1]$, $(u_i,u_{i+1})$ is an arc in $G$, and (ii) every arc in $G$ appears exactly once in $W$. 
We assume without loss of generality that $u_1=v$. 
Let $1	=i_1<\dots<i_r=\ell$ denote exactly those indices such that $u_{i_j}=v$ for each $j\in [r]$.  
That is, these are the indices where the walk revisits $v$. 
Since we started with an Euler tour, we have that $\gamma=r-1$.
Consider the set of walks $\mathcal{W}=\{W_{i_j}\mid j\in [r-1]\}$ where $W_{i_j}$ is defined as the subwalk of $W$ starting at $u_{i_j}$ and ending at $u_{i_{j+1}}$. By construction, each walk in $\mathcal{W}$ is a closed walk containing $v$ and so, each walk $W_{i_j}$ contains a cycle $C_{i_j}$ passing through $v$. Moreover, since $W$ is an Euler tour, the walks in $\mathcal{W}$ are pairwise arc-disjoint, implying that the cycles in the set $\{C_{i_j}\mid j\in [r-1]\}$ are also pairwise arc-disjoint. As already established, we have that $\gamma=r-1$, and thus, we have obtained the required $v$-circulation.
\end{proof}

We next define a notion of ``partial'' circulations that will be used in our algorithm.

\begin{definition}\label{def:generalCirculation}
{\em For a digraph $G$, $v\in V(G)$, $Z\subseteq V(G)\setminus \{v\}$, we define a {\em $(v,Z)$-circulation in} $G$ as a set $\CC=\{C_1,\dots,C_r\}$, where:
\begin{enumerate}
	\item Each $C_i$ is either a cycle passing through $v$ and disjoint from $Z$ or is a path from $v$ to some $z\in Z$ and whose  internal vertices are disjoint from $Z$; and  
	\item for every $i\neq j\in [r]$, $C_i$ and $C_j$ are arc-disjoint. 
\end{enumerate}	
Symmetrically, we define a {\em $(Z,v)$-circulation in $G$} as a set $\CC=\{C_1,\dots,C_r\}$, where:
\begin{enumerate}
	\item Each $C_i$ is either a cycle passing through $v$ and disjoint from $Z$ or is a path to $v$ from some $z\in Z$ and whose  internal vertices are disjoint from $Z$; and  
	\item for every $i\neq j\in [r]$, $C_i$ and $C_j$ are arc-disjoint. 
\end{enumerate}	
The {\em value} of a $(v,Z)$-circulation or a $(Z,v)$-circulation is defined as the size of the respective set.  We say that a $(v,Z)$-circulation $\CC$ {\em contains} an arc-set $S$ if for every arc $a\in S$,  there is an element of $\CC$ containing $a$.
}
\end{definition}

\begin{observation}
	A $(v,\emptyset)$-circulation or a $(\emptyset,v)$-circulation in $G$ is also a $v$-circulation in $G$.
\end{observation}

We next introduce a definition that will ultimately enable us to relate (partial) circulations in $G$ to flows in an auxiliary digraph. Recall that the digraphs we work with in this paper do not have self-loops.

\begin{definition}\label{def:augmentedDigraph}
{\em 	For a digraph $G$ and $v\in V(G)$, we define the graph $G(v)$ as the digraph obtained from $G$ by performing the following operations. 
	\begin{enumerate}
		\item Add vertices $v^{+}$, $v^{-}$. 
		\item For every arc $a=(v,u)$ in $G$ where $u\in V(G)$, we add the arc $a^{+}=(v^{+},u)$. We say that $a^{+}$ is the image of $a$ in $G(v)$ and $a$ is the pre-image of $a^{+}$ in $G$. 
		\item  For every arc $a=(u,v)$ in $G$ where $u\in V(G)$, we add the arc $a^{-}=(u,v^{-})$. 
		We say that $a^{-}$ is the image of $a$ in $G(v)$ and $a$ is the pre-image of $a^{-}$ in $G$. 
		\item Finally, delete the vertex $v$. 
	\end{enumerate}
	Effectively, we ``split'' the vertex $v$ into two copies, one capturing all out-going arcs from $v$ and the other capturing all in-coming arcs to $v$.
	}
\end{definition}

We have the following observation summarizing a key property of the above definition.

\begin{observation}\label{obs:propertyOfAugmentedGraph}
Consider a digraph $G$, $v\in V(G)$ and $Z\subseteq V(G)\setminus \{v\}$.
\begin{enumerate}\item For every cycle $C$ in $G$ containing $v$, replacing the two arcs of $C$ incident to $v$ with their images in $G(v)$ results in a $v^{+}$-$v^{-}$ path in $G(v)$. Conversely, for every $v^{+}$-$v^{-}$ path in $G(v)$, replacing the arcs incident to  $v^{+}$ and $v^{-}$ with their pre-images in $G$ results in a cycle  containing $v$ in $G$. 
\item 	 For every $v$-$Z$ path in $G$, replacing the arc incident to $v$ with its image in $G(v)$ results in a $v^{+}$-$Z$ path in $G(v)$ whose internal vertices are disjoint from $Z$. Conversely, for every   $v^{+}$-$Z$ path in $G(v)$ whose internal vertices are disjoint from $Z$, replacing the arc incident to $v^{+}$ with its pre-image in $G$ results in a $v$-$Z$ path in $G$.  
\item For every $Z$-$v$ path in $G$, 
replacing the arc incident to $v$ with its image in $G(v)$ results in a $Z$-$v^{-}$ path in $G(v)$. 
Conversely, for every   $Z$-$v^{-}$ path in $G(v)$, 
replacing the arc incident to $v^{-}$ with its pre-image in $G$ results in a $Z$-$v$ path in $G$.

\end{enumerate}
\end{observation}

We are now ready to formally relate $(Z,v)$-circulations in $G$ to flows in $G(v)$.

\begin{lemma}\label{lem:correspondenceToFlows}
Consider a digraph $G$, $v\in V(G)$ and $Z\subseteq V(G)\setminus \{v\}$. Let $\gamma\in {\mathbb N}$. 
\begin{enumerate}\item  There is a $(v,Z)$-circulation of value $\gamma$ in $G$ if and only if there is a $v^{+}$-$(\{v^{-}\}\cup Z)$ flow of value $\gamma$ in $G(v)$. \item There is a $(Z,v)$-circulation of value $\gamma$ in $G$ if and only if there is a $(\{v^{+}\}\cup Z)$-$v^{-}$ flow of value $\gamma$ in $G(v)$.
\end{enumerate}

\end{lemma}

\begin{proof}
We prove the first statement of the lemma. The second statement is proved symmetrically. 

Let $\CC$ be a $(v,Z)$-circulation of value $\gamma$ in $G$. By Observation \ref{obs:propertyOfAugmentedGraph}~(1), for 
each cycle in $\CC$, replacing the arcs incident to $v$ with their images in $G(v)$ leads to a $v^{+}$-$v^{-}$ path in $G(v)$. Moreover, since such cycles are disjoint from $Z$ by the definition of $(v,Z)$-circulations, the resulting $v^{+}$-$v^{-}$ paths are also disjoint from $Z$.  By Observation \ref{obs:propertyOfAugmentedGraph}~(2),  for each $v$-$Z$ path in $\CC$, replacing the arc incident to  $v$ with its image in $G(v)$ leads to a $v^{+}$-$Z$ path in $G(v)$. Moreover, since each $v$-$Z$ path in $\CC$ has its internal vertices disjoint from $Z$ by the definition of $(v,Z)$-circulations, it follows that the resulting $v^{+}$-$Z$ path also has its internal vertices disjoint from $Z$. Finally, since the elements of $\CC$ are originally pairwise arc-disjoint and the images in $G(v)$ of distinct arcs in $G$ are distinct by construction, we have thus obtained a $v^{+}$-$(\{v^{-}\}\cup Z)$ flow of value $\gamma$ in $G(v)$.

Conversely, consider a $v^{+}$-$(\{v^{-}\}\cup Z)$ flow $\PP$ of value $\gamma$ in $G(v)$. For each flow path $P\in \PP$ from $v^{+}$ to $v^{-}$, we replace the arcs incident to $v^{+}$ and $v^{-}$ with their pre-images in $G$ to obtain a cycle in $G$ that contains $v$, by Observation \ref{obs:propertyOfAugmentedGraph}~(1). Moreover the cycle resulting in this way is disjoint from $Z$ since the flow path has its internal vertices disjoint from $Z$ by the definition of flows (Definition \ref{def:flows}).   For each flow path $P\in \PP$ from $v^{+}$ to $Z$, replacing the arc incident to $v^{+}$ with its pre-image in $G$  leads to a $v$-$Z$ path $P'$ in $G$ by Observation \ref{obs:propertyOfAugmentedGraph}~(2). Moreover, by definition, $P$ has its internal vertices disjoint from $Z$ and so, $P'$ also has its internal vertices disjoint from $Z$. Since the flow paths in $\cal P$ are arc-disjoint and the pre-images in $G$ of distinct arcs in $E(G(v))\setminus E(G)$ are distinct by construction, we have obtained the required $(v,Z)$-circulation in $G$. 
\end{proof}

Let us next demonstrate how the values of partial circulations help us estimate the active out- and in-degree of a vertex during our dynamic program. We will use the term {\em deletion set for a digraph $G$} to refer to any set $S$ of arcs of $G$ such that every strongly connected component of $G-S$ is Eulerian.

\begin{lemma}\label{lem:partialActiveDegrees}
Consider a  tree decomposition $(\mathcal{T}, \{X_t\}_{t \in V(\mathcal{T})})$ of a digraph $G$ and let $t\in V(\TT)$. Let $S$ be a deletion set for $G$.  
	 Let $v\in X_t$ and $Z\subseteq X_t\setminus\{v\}$ such that $\{v\}\cup Z$ is the intersection of $X_t$ with the vertex set of some strongly connected component of $G-S$. 

\begin{enumerate}\item 	   
	
\begin{enumerate}\item 
Let $\Gamma^{+}$ denote the set of active out-arcs  of $v$ in $G-S$ that are contained in $G_t$ and let $\gamma^{+}$ denote the size of this set. Then, there is a $(v,Z)$-circulation of value $\gamma^{+}$ in $G_t-S$, containing the arcs in $\Gamma^{+}$.
\item Let $\Gamma^{-}$ denote the set of active in-arcs of $v$ in $G-S$ that are contained in $G_t$ and let $\gamma^{-}$ denote the size of this set. Then, there is a $(Z,v)$-circulation of value $\gamma^{-}$ in $G_t-S$, containing the arcs in $\Gamma^{-}$. 
\end{enumerate}
\item 	
\begin{enumerate}\item Consider a $(v,Z)$-circulation $\CC=\{C_1,\dots,C_{\gamma^{+}}\}$ in $G_t$ and let $a_1,\dots,a_{\gamma^{+}}$ be such that $a_i$ is the unique out-arc leaving $v$ in $C_i$, for each $i\in [\gamma^{+}]$. 
At least $\gamma^{+}-|S|$ of the arcs in $\{a_i\mid i\in [\gamma^{+}]\}$ are active out-arcs of $v$ in $G-S$.
\item Consider a $(Z,v)$-circulation $\CC=\{C_1,\dots,C_{\gamma^{-}}\}$ in $G_t$ and let $a_1,\dots,a_{\gamma^{-}}$ be such that $a_i$ is the unique in-arc entering $v$ in $C_i$, for each $i\in [\gamma^{-}]$. 
At least $\gamma^{-}-|S|$ of the arcs in $\{a_i\mid i\in [\gamma^{-}]\}$ are active in-arcs of $v$ in $G-S$.
\end{enumerate}
\end{enumerate}

\end{lemma}

\begin{proof}
We only prove statements 1~(a) and 2~(a). The proofs for 1~(b) and 2~(b) are symmetric. 

Consider 1~(a). 	By Lemma \ref{lemma:eulerianCirculation}, there is a $v$-circulation $\CC$ in $G-S$ whose value is exactly the active out-degree of $v$ in $G-S$. By definition of $v$-circulations, this implies that there exist $\gamma^{+}$ elements of $\CC$ in which the unique out-arc leaving $v$ is an arc in $E(G_t)$. Let $C_1,\dots,C_{\gamma^{+}}$ denote these elements of $\CC$. We now define a set $\CC'=\{C'_1,\dots, C'_{\gamma^{+}}\}$ as follows. For each $C_i$, if it is a cycle in $G_t$ disjoint from $Z$, then set $C'_i:=C_i$. Next, consider any $C_i$  that is not a cycle in $G_t$ and traverse this cycle starting from $v$ and following the unique out-arc in $C_i$ at each subsequent vertex. Since $C_i$ is not contained in $G_t$ and $X_t$ is a bag in the tree decomposition, it must be the case that there is a vertex $z\in X_t$ that is contained in $C_i$ such that the subpath $P_i$ of $C_i$ from $v$ to $z$ is contained in $G_t$ and whose internal vertices are disjoint from $X_t$. Since every vertex contained in an element of the $v$-circulation $\CC$ is in the same strongly connected component of $G-S$ as $v$, the premise of the lemma defining $Z$ guarantees that in fact, $z\in Z$ and 
the internal vertices of $P_i$ are disjoint from $Z$.
We set $C'_i:=P_i$. Finally, consider any $C_i$ that is a cycle in $G_t$ intersecting $Z$. By repeating the same argument as above (i.e., traversing the cycle out of $v$), we can obtain a subpath $P_i$ of $C_i$ from $v$ to some $z\in Z$ such that $P_i$ is contained in $G_t$ and its internal vertices are disjoint from $Z$. We set $C'_i:=P_i$. 
Now, it is straightforward to check that $\CC'$ satisfies the requirements of a $(v,Z)$-circulation in $G_t$ containing the arcs in $\Gamma^{+}$.

	Consider 2~(a).	 First note that for any $i\in [\gamma^{+}]$, if $C_i$ is disjoint from $S$ then $a_i$ either lies in a cycle containing $v$ in $G-S$ (if $C_i$ is a cycle in $G_t$) or on a $v$-$Z$ path in $G-S$ (if $C_i$ is a path in $G_t$). By assumption, $\{v\}\cup Z$ lies in a strongly connected component of $G -S$, implying that if $C_i$ is disjoint from $S$, then $a_i$ is an active out-arc of $v$ in $G-S$. 
The statement then follows from the fact that the elements of $\CC$ are pairwise arc-disjoint and so, at least $\gamma^{+}-|S|$ of these elements are disjoint from $S$. 
\end{proof}

We are now ready to combine our observations thus far to produce a small set of values capturing the imbalance of each vertex in a bag ``imposed'' by the graph below this bag. Let us note that for a subgraph $H$ of $G$ and arc set $S\subseteq E(G)$, the notation $H-S$ refers to the subgraph $H-(S\cap E(H))$. 

\begin{lemma}\label{lem:smallRange}
For a given tree decomposition $(\mathcal{T}, \{X_t\}_{t \in V(\mathcal{T})})$ of a digraph $G$, $t\in V(\TT)$, a reachability arrangement $(R,\ell)$ on $X_t$, a vertex $v\in X_t$ and $k\in {\mathbb N}$,  there is a polynomial-time computable set $\lambda(t,R,v,k)\subseteq {\mathbb Z}$ of size at most $2k+1$ with the property that if $(G,k)$ is a yes-instance of {\ESCAD} with a solution $S$ such that the partition of $X_t$ among the strongly connected components of $G-S$ is the same as the partition of $X_t$ among the strongly connected components of $R$, then the active imbalance of $v$ in the subgraph $G_t-S$ of $G-S$ is contained in $\lambda(t,R,v,k)$. 
\end{lemma}

\begin{proof}
Let $R(v)$ denote the subset of vertices of $R$ other than $v$ that lie in the same strongly connected component of $R$ as $v$. 
Compute a maximum $v^{+}$-$(\{v^{-}\}\cup R(v))$ flow in  the graph $G_t(v)$ using Proposition \ref{prop:fordFulkerson} and let $\gamma^{+}$ be its value. Compute a maximum $(\{v^{+}\}\cup R(v))$-$v^{-}$ flow in  the graph $G_t(v)$ using Proposition \ref{prop:fordFulkerson} and let $\gamma^{-}$ be its value. Denote $\gamma^{+}-\gamma^{-}$ by $\gamma^{*}$ and define $\lambda(t,R,v,k):=\{\gamma^{*}-k,\dots,\gamma^{*}+k\}$.  

As the size of the produced set clearly satisfies the stated bound, it remains to argue the correctness. Let $S$ be a solution for $(G,k)$ satisfying the conditions in the premise of the lemma. By  Lemma \ref{lem:partialActiveDegrees}~(1)~(a),  the number of active out-arcs of $v$ in $G-S$ that are contained in $G_t$ is at most the value of the maximum $(v,R(v))$-circulation in $G_t$ and by Lemma \ref{lem:partialActiveDegrees}~(2)~(a),  at least  the value of the maximum $(v,R(v))$-circulation in $G_t$ minus $k$. By Lemma \ref{lem:correspondenceToFlows} (1), we have that the maximum value of a  $(v,R(v))$-circulation in $G_t$ is the same as the maximum value of a $v^{+}$-$(\{v^{-}\}\cup R(v))$ flow in $G_t(v)$, i.e., $\gamma^{+}$.
Similarly, by Lemma \ref{lem:partialActiveDegrees}~(1)~(b),  the number of active in-arcs of $v$ in $G-S$ that are contained in $G_t$ is at most the value of the maximum $(R(v),v)$-circulation in $G_t$ and by Lemma \ref{lem:partialActiveDegrees}~(2)~(b) at least the value of the  maximum $(R(v),v)$-circulation in $G_t$ minus $k$. By Lemma \ref{lem:correspondenceToFlows} (2), we have that the maximum value of a $(R(v),v)$-circulation in $G_t$ is the same as the maximum value of a $(\{v^{-}\}\cup R(v))$-$v^{+}$ flow in $G_t(v)$, i.e., i.e., $\gamma^{-}$.  So, the active out-degree of $v$ in the subgraph $G_t-S$ of $G-S$ lies between $\gamma^{+}-k$ and $\gamma^{+}$ while the active in-degree of $v$ in the subgraph $G_t-S$ of $G-S$ lies between $\gamma^{-}-k$ and $\gamma^{-}$. 
So, the correctness follows. 
\end{proof}

Let us now generalize the above lemma by also including the maximum degree of the digraph as a parameter. 

\begin{lemma}\label{lem:smallRangeFinal}
For a given tree decomposition $(\mathcal{T}, \{X_t\}_{t \in V(\mathcal{T})})$ of a digraph $G$ with maximum degree $\Delta$,  $t\in V(\TT)$, a reachability arrangement $(R,\ell)$ on $X_t$, a vertex $v\in X_t$ and $k\in {\mathbb N}$ there is a polynomial-time computable set $\Lambda(t,R,v,k)\subseteq {\mathbb Z}$ of size at most ${\rm  min}\{2k+1,2\Delta+1\}$ with the property that if $(G,k)$ is a yes-instance of {\ESCAD} with a solution $S$ such that the partition of $X_t$ among the strongly connected components of $G-S$ is the same as the partition of $X_t$ among the strongly connected components of $R$, then the active imbalance of $v$ in the subgraph $G_t-S$ of $G-S$ is contained in $\Lambda(t,R,v,k)$. 
\end{lemma}

\begin{proof}
If $\Delta\leq k$, then we set $\Lambda(t,R,v,k)=\{-\Delta,\dots,\Delta\}$. Otherwise, we set $\Lambda(t,R,v,k)$ to be the set $\lambda(t,R,v,k)$ obtained by invoking Lemma \ref{lem:smallRangeFinal}. 
The correctness is straightforwad since the out-degree and in-degree of the vertex $v$ even in the original graph $G$ are both bounded by $\Delta$. 
\end{proof}

Equipped with the above structural result, we are now ready to start describing the necessary components of our dynamic program.

\begin{definition}\label{def:consistency}
{\em     For a given tree decomposition $(\mathcal{T}, \{X_t\}_{t \in V(\mathcal{T})})$ of a digraph $G$, 
a node $t$ of the tree decomposition and a reachability arrangement $(R, \ell)$ on $X_t$, we call a set of arcs $S \subseteq E(G_t)$ \emph{consistent} with $(R, \ell)$ if the following parts hold.

    \begin{enumerate}
        \item\label{part:direct} An arc $(u,w) \in \ell^{-1}(\mathrm{direct})$ if and only if $(u,w)$ is an arc in $G_t[X_t] - S$.
        \item\label{part:past_realized} An arc $(u,w) \in \ell^{-1}(\mathrm{past})$ if and only if 
        $(u,w) \notin E(G_t - S)$
        and there is a path from $u$ to $w$ in $G_t - S$ that contains no vertices from $X_t \setminus \{u,w\}$ (also called {\em path through the past}).
        \item\label{part:future_realized} An arc $(u,w) \in \ell^{-1}(\mathrm{future})$ if and only if
            $(u,w) \notin E(G_t - S)$,
            there is no path through the past from $u$ to $w$,
            and there is a path from $u$ to $w$ in $G - S$ that contains no vertices from $X_t \setminus \{u,w\}$ (also called {\em path through the future}).
            \item For every pair $u,w\in X_t$ such that $(u,w)$ is not an arc in $R$, there is no $u$-$w$ path in $G_t-S$ whose internal vertices are disjoint from $X_t$.
    \end{enumerate}
    }
\end{definition}

\begin{definition}
{\em 	 For a given tree decomposition $(\mathcal{T}, \{X_t\}_{t \in V(\mathcal{T})})$ of a digraph $G$,  node $t$ of the tree decomposition, a reachability arrangement $(R, \ell)$ on $X_t$,  
 a subgraph $G'$ of $G_t$, and a vertex $v\in V(G')$,  the {\em $R$-augmented active imbalance of $v$ in the graph $G'$} is the active imbalance of $v$ in the subgraph $G'$ of $G'+E(R)$.}
\end{definition}

In the above definition, there is a slight abuse of notation since $V(G')$ may not include some vertices in $X_t$. In these cases, $G'+E(R)$ refers to the graph obtained from $G'$ by adding only those arcs of $R$ that have both endpoints in $G'$. However, we will typically use this notation when $G'$ is a spanning subgraph of $G_t$ and this issue will not arise in such cases.

\begin{remark}{\em The idea behind the above definition is the following. 
The arcs in $R$ are essentially ``fake arcs'', introduced specifically to represent reachability relations between vertices in $X_t$ within the graph $G$ minus a hypothetical solution. Their purpose is only to determine whether other arcs in $G_t$ contribute to the eventual active imbalance of a vertex and they themselves do not factor into the computation of active imbalance. So for this we introduce the notion of $R$-augmented active imbalance where the arcs in $R$ determine the strongly connected components, but do not necessarily contribute to the active imbalance of vertices. 
}
\end{remark}

\begin{lemma}\label{lem:inducedActiveImbalance}
 For a given tree decomposition $(\mathcal{T}, \{X_t\}_{t \in V(\mathcal{T})})$ of a digraph $G$, node $t\in V(\TT)$, a reachability arrangement $(R,\ell)$ on $X_t$ and a vertex $v\in X_t$, suppose that $(G,k)$ is a yes-instance of {\ESCAD} with a solution $S$ such that the partition of $X_t$ among the strongly connected components of $G-S$ is the same as the partition of $X_t$ among the strongly connected components of $R$. 
 Then, the active imbalance of $v$ in the subgraph $G_t-S$ of $G-S$ is equal to the $R$-augmented active imbalance of $v$ in the graph $G_t-S$.
\end{lemma}

\begin{proof}
Let $R(v)$ denote the subset of vertices of $R$ other than $v$ that lie in the same strongly connected component of $R$ as $v$. By assumption, $\{v\}\cup R(v)$ is the intersection of $X_t$ with the vertex set of some strongly connected component of $G-S$. 

Towards the proof of the lemma, we prove the following claim.
\begin{claim}
	For every arc $a\in E(G_t-S)$, $a$ is an active out-arc of $v$ in $G-S$ if and only if $a$ is an active out-arc of $v$ in the graph $G_t-S+R$.
\end{claim}

\begin{claimproof}
Let $\Gamma^{+}$ denote the set of active out-arcs of $v$ in $G-S$ that are contained in $G_t$ and let $\gamma^{+}$ be its size. 
By Lemma \ref{lem:partialActiveDegrees}~(1)~(a), there is a $(v,R(v))$-circulation $\CC$ of size $\gamma^{+}$  in $G_t-S$, that contains $\Gamma^{+}$. Since the vertices of $\{v\}\cup R(v)$ are in the same strongly connected component of $R$ by definition, 
we have that the arcs in $\Gamma^{+}$ are active in $G_t-S+R$. 

Conversely, let $a\in E(G_t-S)$ be an active out-arc of $v$ in the graph $G_t-S+R$. Then, it is 
contained in some $v$-circulation in $G_t-S+R$ (Observation \ref{obs:activeArcsInCirculations}). This implies that there is a cycle $C_a$ in $G_t-S+R$ containing $a$. If $C_a$ is contained in $G-S$, then we are done since we have shown that $a$ is active in $G-S$. So, suppose not. In particular, $C_a$ is not contained in $G_t-S$. However, since it is present in $G_t-S+R$, 
there must be a path in $G_t-S$ (and hence, also in $G-S$) starting in $v$, containing $a$ and ending at some vertex $r\in R(v)$. By the definition of $R(v)$, $v$ and $r$ are in the same strongly connected component of $G-S$, implying that $a$ is an active out-arc of $v$ in $G-S$.  This completes the proof of the claim. 
\end{claimproof}

A symmetric argument to the above gives us the following claim. 
\begin{claim}For every arc $a\in E(G_t-S)$, $a$ is an active in-arc of $v$ in $G-S$ if and only if $a$ is an active in-arc of $v$ in the graph $G_t-S+R$. 
	\end{claim}

From the above two claims, we have that the number of active out-arcs of $v$ in $G_t-S+R$ minus the number of active in-arcs of $v$ in $G_t-S+R$ (i.e., the $R$-augmented active imbalance of $v$ in $G_t-S$) is the same as the number of active out-arcs of $v$ in $G-S$ that are contained in $G_t$ minus the number of active in-arcs of $v$ in $G-S$ that are contained in $G_t$ (i.e., the active imbalance of $v$ in the subgraph $G_t-S$ of $G-S$). This completes the proof of the lemma. 
\end{proof}

We next formally describe when a partial solution is eligible to be stored at a specific index of our dynamic programming table. This is done over the following two definitions where we fix  a tree decomposition $(\mathcal{T}, \{X_t\}_{t \in V(\mathcal{T})})$ of a digraph $G$. 

\begin{definition}\label{def:wellFormedTuple}
{\em     Given a node $t$ of the tree decomposition, 
we say that  $(t,R, \ell, b, W,k)$ is a {\em well-formed} tuple if the following hold:
\begin{enumerate}
\item $(R,\ell)$ is a reachability arrangement on $X_t$, 
\item $b\colon X_t \to {\mathbb N}$ where $b(v)\in \Lambda(t,R,v,k)$ for every $v\in X_t$,
\item  $W \subseteq E(G_t[X_t])$, 
\item if $(u,v)\in W$ and $\ell(u,v)={\rm direct}$, then 
there should be at least two parallel arcs $(u,v)$ in  $E(G_t[X_t])$ and at least one of these arcs is not contained in $W$, 
\item $|W|\leq k\in {\mathbb N}$. 
\end{enumerate}
}
\end{definition}

\begin{definition}\label{def:reachability_arrangement}
{\em     Given a node $t$ of the tree decomposition, we call a set of arcs $S \subseteq E(G_t)$ \emph{compatible} with a well-formed tuple $(t,R, \ell, b, W,k)$ if the following hold:
    \begin{enumerate}
        \item\label{part:bag_solution} $S$ agrees with $W$ on $G_t[X_t]$, that is $S \cap E(G_t[X_t]) = W$, 
        \item $S$ is consistent with $(R,\ell)$, 
        \item\label{part:bag_offsets} For each vertex $u \in X_t$, the the $R$-augmented active imbalance of $u$ in $G_t - S$ is $b(u)$,  
        \item\label{part:past_balanced} For each vertex $u \in V(G_t) \setminus X_t$, the $R$-augmented active imbalance of $u$ in $G_t - S$ is zero,  
        \item $|S|\leq k$. 
    \end{enumerate}
    }
\end{definition}

We are now ready to complete the proof of \Cref{thm:escad_tw_xp}.

\begin{proof}[Proof of \Cref{thm:escad_tw_xp}]
Let $(G,k)$ be the given instance. If $\alpha\leq 1$, then we are done by brute-force computation (when $k\leq 1$) or $G$ is already acyclic (when $\Delta\leq 1	$), so we may assume that $\alpha\geq 2$. 
Let $(\mathcal{T}, \{X_t\}_{t \in V(\mathcal{T})})$ be the computed nice tree decomposition of $G$ with introduce arc nodes. For every well-formed tuple $\tau=(t,R,\ell,b,W,k)$, we denote by $A[t,R,\ell,b,W,k]$ the minimum size of an arc subset of $G_t$ that is compatible with $\tau$; if there is no arc subset of $G_t$ that is compatible with $\tau$, then $A[t,R,\ell,b,W,k]=\infty$, where $\infty$ denotes a prohibitively high value for our solution, say, the number of arcs in $G$ plus one. To keep the notation simple, we will use $A[\tau]$ to denote $A[t,R,\ell,b,W,k]$.

In our decomposition $(\mathcal{T}, \{X_t\}_{t \in V(T)})$ the root node $r$ has $X_r = \emptyset$ and $G_r = G$ so $A[r,\emptyset,\emptyset,\emptyset,\emptyset,k]$ is the minimum size of an arc subset $S$ of $G$ that is compatible with $(r,\emptyset,\emptyset,\emptyset,\emptyset,k)$, which boils down to every vertex of $G$ having active imbalance of zero in $G-S$ and $|S|\leq k$. Hence, 
$A[r,\emptyset,\emptyset,\emptyset,\emptyset,k]$ is equal to the minimum size of a solution for our instance $(G,k)$ of {\ESCAD} (if one exists), or $\infty$ if no solution exists.
In order to compute $A[r,\emptyset,\emptyset,\emptyset,\emptyset,k]$ we employ the standard approach of bottom-up dynamic programming over our tree  decomposition.

For leaf nodes $X_t = \emptyset$, hence, the graphs and labelings are also empty and the empty arc set is vacuously compatible with them $A[t,\emptyset,\emptyset,\emptyset,\emptyset,k] = 0$.

For every non-leaf node $t$ and graph $R$ on $X_t$ we first calculate the strongly connected components of $R$ and also the set $\Lambda(t,R,v,k)$ for every $v\in X_t$ using Lemma \ref{lem:smallRangeFinal}. 
Then for each $\ell$, $b$, and $W$ such that $\tau=(t,R,\ell,b,W,k)$ is a well-formed tuple, we calculate $A[\tau]$ based on the type of the node $t$. When doing so, we may assume that $A[\tau']$ has already been computed for every well-formed tuple $\tau'=(t',R',\ell',b',W',k)$ where $t'$ is a descendant of $t$ in $\mathcal{T}$. 
Note that we always restrict our attention to well-formed tuples as only such tuples index our table. Moreover, by Definition \ref{def:wellFormedTuple}  and Lemma \ref{lem:smallRangeFinal}, it is straightforward to generate all well-formed tuples. The time required to do this is dominated by the time taken to fill the table, which is analyzed towards the end of this proof.  

\begin{description}
    \item[Introduce vertex node:] 
    Consider a well-formed tuple $\tau=(t,R,\ell,b,W,k)$. 
    When $t$ is an introduce vertex node and its child is $t'$ with $X_t = X_{t'} \cup \{v\}$ we know that $v$ will be isolated in $G_t$ (i.e., has degree zero) so we can effectively ignore (i.e., set the value to be $\infty$ for) entries $A[t,R,\ell,b,W,k]$ where  the reachability arrangement $(R,\ell)$ includes direct or past arcs incident to $v$. So, we may assume that no direct or past arcs incident to $v$ exist in $(R,\ell)$. For the same reason ($v$ is isolated in $G_t$), the demanded $R$-augmented active imbalance of $v$ in $G_t$ minus any set compatible with the tuple $\tau$ must be zero, i.e., $b(v)$ must be zero. If $\tau$ does not satisfy this property, then we set $A[\tau]=\infty$. 
    
    Let us next identify those entries $A[t',R',\ell',b',W',k]$ from which we will derive the value of $A[\tau]$. 
    Any future arcs incident to $v$ in $(R,\ell)$ and contributing to ``future paths'' between vertices of $X_{t'}$ should be reflected in the relevant reachability arrangement on $X_{t'}$, that is, if $(R,\ell)$ contains a future arc from $u$ to $v$ and one from $v$ to $w$ there should be a future arc from $u$ to $w$ in $(R',\ell')$, unless there is already an arc from $u$ to $w$ in $R$. 
    No arcs were introduced or forgotten at node $t$, so the set $W$ remains the same, that is, $W'=W$. 
    The formal description of the recursive formula follows.

Given a reachability arrangement $(R, \ell)$ on $X_t$, let $(R', \ell')$ be the reachability arrangement induced by $(R,\ell)$ on $X_{t'}$ except that for each pair of vertices $u, w \in X_{t'}$ where $(u, w) \notin E(R)$, $(u,v) \in E(R)$, and $(v,w) \in E(R)$ such that $\ell(u,v)=\ell(v,w)=\mathrm{future}$ we have $(u, w) \in E(R')$ and $\ell'(u,w) = \mathrm{future}$.
\[
A[t,R,\ell,b,W,k] =
    \begin{cases}
        \infty &\text{if there exists } u \in X_{t'} \text{ such that} \\
                & \qquad (u,v) \in E(R) \text{ and } \ell(u,v) \neq \textrm{future}, \\
        \infty &\text{if there exists } u \in X_{t'} \text{ such that} \\
                & \qquad (v,u) \in E(R) \text{ and } \ell(v,u) \neq \textrm{future}, \\
        \infty &\text{if } b(v) \neq 0,\\
        A[t',R',\ell',b|_{X_{t'}},W,k] &\text{otherwise}.
    \end{cases}
\]

Clearly this entry can be calculated in polynomial time given the previous table entries.

    \smallskip
    
    \item[Introduce arc node:]
     Consider a well-formed tuple $\tau=(t,R,\ell,b,W,k)$.
    Assume $t$ introduces arc $a=(u,v)$ and its child node is $t'$. Consider $A[t,R,\ell,b,W,k]$. Let us next identify the entries $A[t',R',\ell',b',W',k]$ from which we will derive the value of $A[\tau]$. We distinguish between two cases based on whether or not the new arc $a$ belongs to $W$. 
    \begin{itemize}
    \item[Case 1:] The case where $a\in W$ is straightforward. We account for $a$ being required to be contained in the compatible set sought for, by adding 1 to the entry in  $A[t',R,\ell,b,W\setminus \{a\},k]$ after making sure the size constraint of $k$ is satisfied. 
\item[Case 2:]     In the case where $a\notin W$, we ensure that $\ell(u,v)$ is direct (if not, assign value $\infty$) and then distinguish subcases based on the label of $(u,v)$ in the reachability arrangement in the indices we look up. Note that since $a\notin W$ and we aim to find the size of a smallest set compatible with $\tau$, we may assume that $(u,v)$ is an arc in $(R',\ell')$, so it is only the label that could be different.

Let us now formalize this. For each $\iota \in \{\mathrm{past}, \mathrm{direct}, \mathrm{future}\}$, let $\ell_\iota$ be the function such that $\ell_\iota(a) = \iota$ and $\ell_\iota(e) = \ell(e)$ for all other $e \in E(R)$. If the strongly connected components of $R$ containing $u$ and $v$ are the same, then define $b'(u) = b(u) - 1$, $b'(v) = b(v) + 1$, and $b'(w) = b(w)$ for all $w \in X_{t'} \setminus \{u,v\}$; otherwise, define $b'=b$.
\end{itemize}

We are now ready to give the recursive definition of $A[t,R,\ell,b,W,k]$ for the introduce arc node $t$. 

\[
    A[t,R,\ell,b,W,k] = \begin{cases}
        A[t',R,\ell,b,W \setminus \{a\},k] + 1 &\text{if } a \in W \text{ and }A[t',R,\ell,b,W \setminus \{a\},k]\leq k-1\\ 
        \min_\iota A[t',R,\ell_{\iota},b',W,k] & \text{over all } \iota \in \{\mathrm{past}, \mathrm{direct}, \mathrm{future}\} \\
                                    & \text{if } (u,v) \in E(R) \text{ and } \ell(u,v) = \mathrm{direct} \text{ and }a\notin W,\\
        \infty &\text{otherwise}. \\ 
    \end{cases}
\]

This entry for the node $t$ can be calculated in polynomial time given all the table entries for the child node $t'$.

 \item[Forget node:]
  Consider a well-formed tuple $\tau=(t,R,\ell,b,W,k)$, where $t$ is a forget node with child $t'$ such that $X_t = X_{t'} \setminus \{v\}$.
    
    Let us next identify the entries $A[t',R',\ell',b',W',k]$ from which we will derive the value of $A[\tau]$. 
%
Essentially, we require that the tuples indexing these entries  are comprised of ``legal'' extensions of $(R,\ell)$, $b$ and $W$ and all these extensions involve $v$. 
This is done by ensuring the following hold. 
    
    \begin{itemize}
    	\item Since $v$ is forgotten at node $t$, the required $R'$-augmented active imbalance of $v$ in $G_{t'}$ minus a hypothetical compatible set, i.e., $b'(v)$, should be zero (see Definition \ref{def:reachability_arrangement} (4)).
    	\item  There are no future arcs in $(R',\ell')$ incident to $v$ (otherwise $v$ could not have been forgotten at this point). 
    	\item $W'$ contains the same arcs of $G_t[X_t]$ as $W$.  
    	\item Finally, $(R',\ell')$ should reflect the possibility that past arcs in $(R,\ell)$ appear due to paths in  $G_t$ that pass through $v$ and have no internal vertex from $X_t$. That is, for every  $(u,v),(v,w) \in E(R')$ such that either $(u,w) \notin E(R')$ or $\ell'(u,w) \neq \mathrm{direct}$, we must have that $(u,w) \in E(R)$ and $\ell(u,w) = \mathrm{past}$. 

    \end{itemize} 
    
Let us now formalize the above.     Let $\mathcal{R}$ be the set of reachability arrangements $(R', \ell')$ on $X_{t'}$ such that
\begin{enumerate}
\item if $(u,v)$ is not an arc in $R$ for $u,v\in X_t$, then $(u,v)$ is not an arc in $R'$, 
    \item for every arc $e \in E(R')$ incident to $v$ we have $\ell'(e) \neq \mathrm{future}$,
\item for every arc $e\in E(R)$ such that $\ell(e)$ is either direct or future, 
$e$ is an arc in $R'$ and $\ell'(e)=\ell(e)$,  
\item for every arc $e\in E(R)$ such that $\ell(e)$ is past, either $e\in E(R')$ and  $\ell'(e)$ is past or $e\notin E(R')$ and $(u,v),(v,w) \in E(R')$. 
\end{enumerate}

Let $\mathcal Q$ be the set of quadruplets $(R',\ell',b',W')$ such that $(R',\ell') \in \mathcal{R}$, 
$b'|_{X_t} = b$ and $b'(v) = 0$,  
$W'\cap E(G_t[X_t]) = W$ and $|W'|\leq k$. 
If $\mathcal Q=\emptyset$, then we set $A[t,R,\ell,b,W,k]=\infty$ and otherwise, 
\[
    A[t,R,\ell,b,W,k] = \min{\{A[t',R',\ell',b',W',k] \mid (R',\ell',b',W') \in \mathcal Q \}}.
\]

We have no more than $4^{(\tw+1)^2}$ reachability arrangements in $\mathcal{R}$ which can be easily iterated through by brute-force,
$b'$ is uniquely determined by $b$ and there are no more than minimum of $\{(\Delta+1)^{2(\tw+1)},(k+1)^{2(\tw+1})\}$ possibilities for $W'$ since it can contain at most minimum of $\{\Delta,k\}$ arcs between $v$ and each vertex of $X_t$ and there are two possible orientations for each such arc, besides which it is identical to $W$.
Hence, we can compute $A[t,R,\ell,b,W,k]$ in time $2^{\bigoh(\tw^2)} \cdot \alpha^{\bigoh(\tw)} \cdot  n^{\OO(1)}$.

    \item[Join node:]
    Consider a well-formed tuple $\tau=(t,R,\ell,b,W,k)$ where $t$ is a join node with children $t_1$ and $t_2$. We will go through various pairs of well-formed tuples $\tau_1$ and $\tau_2$ where $\tau_i=(t_i,R_i,\ell_i,b_i,W_i,k)$, look up  $A[\tau_1]$ and $A[\tau_2]$ and derive $A[\tau]$ from their values.
     
    We first observe that the reachability arrangements $(R_i,\ell_i)$ should be chosen such that $R=R_1=R_2$. Moreover, past arcs in  $(R,\ell)$ can appear either as past arcs in both of $\{(R_i,\ell_i)\mid i\in [2]\}$ or we can have a past arc in one arrangement while there is a future arc in the other arrangement.
    In a similar way, we need to consider for each $u \in X_t$ how the $R$-augmented active imbalance $b(u)$ in $G_t-S$ (where $S$ is a hypothetical minimum size set compatible with $\tau$) is made up of parts in $G_{t_1}-S$ and $G_{t_2}-S$ while making sure not to double count the contributions of arcs in $G_t[X_t]$.
    Finally, we require $W=W_1=W_2$ and subject to the above, $A[\tau]$ would simply be the sum of $A[\tau_1]$ and $A[\tau_2]$  minus $|W|$ to avoid double counting the arcs in $W$. Let us now formalize this.

Let $L_{R, \ell}$ be the set of pairs of functions $\ell_1, \ell_2$ such that for all $e \in E(R)$
\begin{enumerate}
    \item if $\ell(e) = \mathrm{past}$, then $(\ell_1(e),\ell_2(e)) \in \{(\mathrm{past},\mathrm{past}),(\mathrm{future},\mathrm{past}),(\mathrm{past},\mathrm{future})\}$,
    \item otherwise $\ell_1(e) = \ell_2(e) = \ell(e)$.
\end{enumerate}
 
For each $u\in X_t$, let us use $\rho(u)$ to denote the $R$-augmented active imbalance of $u$ in the graph $G_t[X_t]-W$. Note that $\rho(u)$ isolates the contribution of the direct arcs (and their parallel copies) towards the $R$-augmented active imbalance of $u$ in $G_{t_i}-S$ for each $i\in [2]$, and we must take care not to double count it when calculating active imbalances. Given $\tau$, it is straightforward to compute the function $\rho:X_t\to {\mathbb Z}$. 

Let $b_1$ and $b_2$ be functions such that for each $u \in X_t$ we have
\[
    b(u) = b_1(u) + b_2(u) - \rho(u).
\]
Let $\mathcal B$ be the set of $(b_1,b_2)$ pairs that conform to the above equality and where, in addition, $b_i(u)\in \Lambda(t_i,R,u,k)$ for each $u\in X_t$. If ${\mathcal B}=\emptyset$ or $L_{R,\ell}=\emptyset$, then we set $A[t,R,\ell,b,W,k]=\infty$. Otherwise, define

\begin{align*}
   \eta
 = \min \{ &A[t_1,R,\ell_1,b_1,W,k] + A[t_2,R,\ell_2,b_2,W,k] - |W| \\
                            &:(\ell_1,\ell_2) \in L_{R,\ell}, (b_1,b_2) \in \mathcal B \}.
\end{align*}

If $\eta\leq k$, then set $A[t,R,\ell,b,W,k]=\eta$, otherwise set $A[t,R,\ell,b,W,k]=\infty$. 

Let us address the computation time for this join node. 
$L_{R, \ell}$ contains at most $3^{(\tw+1)^2}$ pairs of functions.
The pairs in $\mathcal B$ comprise functions that are defined over a range of size at most $2\alpha+1$ and for a fixed $b$ there are at most $2\alpha+1$ ways to choose $b_1(u)$, each of which fixes $b_2(u)$. 
As we iterate over the values of these functions applied to each $u \in X_t$ independently, there are at most $(2\alpha+1)^{\tw+1}$ ways to choose a suitable pair of functions $b_1$ and $b_2$.
The minimum is taken over $L_{R, \ell}$ and $\mathcal B$ so this entry can be calculated in $2^{\bigoh(\tw^2)}\cdot  \alpha^{\bigoh(\tw)} \cdot n^{\OO(1)}$ time.

\end{description}

\paragraph{Overall running time analysis.}
We know that the total number of nodes in the nice tree decomposition with introduce arc nodes is $n^{\OO(1)}$ and it can be observed that this still holds for the extension to multiarcs.
For a fixed node $t$ and well-formed tuple where the first coordinate is $t$, there are $4^{(\tw+1)^2}$ possible reachability arrangements on $X_t$, $(2\alpha+1)^{\tw+1}$ choices of $b$, and at most $2^{(\tw+1)^2}$ choices of $W$ on simple digraphs and at most $\alpha^{(\tw+1)^{2}}$ choices of $W$ in general. 

\begin{itemize}\item So, we have $4^{(\tw+1)^2}\cdot (2\alpha+1)^{\tw+1}\cdot \alpha^{(\tw+1)^{2}}\cdot n^{\bigoh(1)}$ table entries to fill in general, which is bounded by  $\chi_1(\alpha,\tw,n)=\alpha^{\bigoh(\tw^{2})}\cdot n^{\bigoh(1)}$. 
	
\item For simple digraphs, we have $4^{(\tw+1)^2}\cdot (2\alpha+1)^{\tw+1}\cdot 2^{(\tw+1)^{2}}\cdot n^{\bigoh(1)}$ table entries to fill, which is bounded by  $\chi_2(\alpha,\tw,n)=2^{\bigoh(\tw^{2})}\cdot \alpha^{\bigoh(\tw)}\cdot n^{\bigoh(1)}$.
 \end{itemize}

This is the only place where our analysis distinguishes between simple digraphs and general digraphs. 

Both introduce vertex nodes and introduce arc nodes compute each entry
 in $n^{\OO(1)}$ time.
Each $A[\tau]$ where the first coordinate of $\tau$ is a forget node is computed in time 
$2^{\bigoh(\tw^2)}\cdot  \alpha^{\bigoh(\tw)} \cdot n^{\OO(1)}$,  while each $A[\tau]$ where the first coordinate of $\tau$ is a join node is computed in time $\chi_3(\alpha,\tw,n)=2^{\bigoh(\tw^2)}\cdot  \alpha^{\bigoh(\tw)} \cdot n^{\OO(1)}$. As the time taken for join nodes dominates the time required for the other nodes, we can bound the overall running time of our algorithm by $\chi_1(\alpha,\tw,n)\cdot \chi_3(\alpha,\tw,n)$ in general and by  $\chi_2(\alpha,\tw,n)\cdot \chi_3(\alpha,\tw,n)$ on simple digraphs, giving the bounds stated in the theorem. 
\end{proof}

\section{Our Results for \ESCAD on Simple Digraphs}
In this section, we study \ESCAD on simple digraphs, which we formally define as follows.

\defprob{\textsc{Simple Eulerian Strong Component Arc Deletion (\SESCAD)}}{
    A simple digraph $G$, an integer $k$
}{
    Is there a subset $R \subseteq E(G)$ of size $|R| \le k$ such that in $G - R$ each strongly connected component is Eulerian?
}

Let us begin by stating a simple observation that enables us to make various inferences regarding the complexity of \SESCAD based on the results we have proved for \ESCAD.

\begin{observation}\label{obs:equivalence}
    Consider an \ESCAD instance $\mathcal I=(G,k)$. 
   If we subdivide every arc $(u,v)$ into $(u,w),(w,v)$ (using a new vertex $w$) then we get an equivalent \SESCAD instance $\mathcal I'=(G',k)$ with $|V(G')|=|V(G)|+|E(G)|$ and $|E(G')|=2|E(G)|$.
\end{observation}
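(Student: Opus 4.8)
The claim is almost immediate once we set up the correspondence carefully, so the "proof" is really a matter of checking that subdivision preserves the relevant structure. First I would describe the construction precisely: given the \ESCAD instance $\mathcal I = (G,k)$, build $G'$ by replacing each arc $(u,v) \in E(G)$ with a fresh subdivision vertex $w_{(u,v)}$ together with the two arcs $(u,w_{(u,v)})$ and $(w_{(u,v)},v)$. Since distinct parallel copies of a multiarc get distinct subdivision vertices, $G'$ has no multiarcs; and since $G$ has no loops, $G'$ has none either, so $G'$ is a simple digraph. The vertex and arc counts $|V(G')| = |V(G)| + |E(G)|$ and $|E(G')| = 2|E(G)|$ are then just bookkeeping. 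The natural bijection to record is $\phi$ sending a solution arc $(u,v)$ of $G$ to, say, the arc $(u, w_{(u,v)})$ of $G'$ (a consistent choice; the other half-arc would work equally well), with inverse contracting a half-arc back to the original arc.

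The key structural fact I would isolate as a small claim is: for any $R \subseteq E(G)$ and its image $R' = \phi(R) \subseteq E(G')$, the strongly connected components of $G - R$ and of $G' - R'$ correspond, in the sense that two original vertices $x,y \in V(G)$ lie in the same SCC of $G-R$ iff they lie in the same SCC of $G'-R'$, and moreover a subdivision vertex $w_{(u,v)}$ lies in the same SCC as $u$ and $v$ exactly when $(u,v) \notin R$ and $u,v$ are in the same SCC (otherwise $w_{(u,v)}$ is a singleton SCC, since all its arcs become inactive or it has only one direction available). This follows because a directed path in $G-R$ lifts arc-by-arc to a directed path in $G'-R'$ through the surviving subdivision vertices, and conversely any directed path in $G'-R'$ projects down to a directed walk in $G-R$; deleting $(u, w_{(u,v)})$ in $G'$ kills both directions of passage through $w_{(u,v)}$ just as deleting $(u,v)$ does in $G$.

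Given this correspondence, I would verify the Eulerian condition componentwise. Inside an SCC $C$ of $G-R$, each active arc $(u,v)$ of $G-R$ corresponds to the two active arcs $(u,w_{(u,v)}),(w_{(u,v)},v)$ in the matching SCC $C'$ of $G'-R'$; hence for every original vertex $x$, its in-degree and out-degree within $C'$ equal those within $C$, so $x$ is balanced in $C'$ iff it is balanced in $C$, and every subdivision vertex inside $C'$ has in-degree $1$ and out-degree $1$, hence is automatically balanced. Singleton SCCs are trivially balanced on both sides. Therefore $\mathrm{strong}(G-R)$ is balanced iff $\mathrm{strong}(G'-R')$ is, and since $|R| = |R'|$ the two instances have the same answer. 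I do not anticipate a genuine obstacle here; the only point requiring a little care is the SCC-correspondence claim — specifically ruling out that a subdivision vertex could glue together components that were separate in $G-R$, which is handled by observing that every walk through $w_{(u,v)}$ uses the single arc pair associated to $(u,v)$ and hence projects to a walk using the arc $(u,v)$ in $G$.
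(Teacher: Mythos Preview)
Your proposal is correct. The paper itself states this observation without any proof, so your write-up is strictly more detailed than what appears there; the argument you sketch---SCC correspondence under subdivision, followed by a degree check showing original vertices keep their active imbalance while subdivision vertices are either singletons or have in- and out-degree~$1$---is exactly the natural verification the paper leaves to the reader. The only spot I would tighten is the backward direction: your map $\phi$ is defined on arcs of $G$, but an arbitrary solution $R'\subseteq E(G')$ may include second-half arcs $(w_{(u,v)},v)$ or even both halves of a single subdivided arc, so you should state the inverse as ``send any half-arc to the original arc it subdivides'' and note that this gives $|R|\le|R'|$; your SCC-correspondence claim then goes through verbatim because deleting either half of a subdivided pair makes $w_{(u,v)}$ a singleton and severs the $u\to v$ passage, exactly as you already observe.
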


\subsection{Hardness Results for \SESCAD}

We first discuss the implications of \Cref{theo:escadsolution,thm:escad_whard_vc,theo:nph} for \SESCAD along with Observation \ref{obs:equivalence}.

\begin{corollary}\label{theo:sescadsolution}
    \SESCAD is \woh when parameterized by the solution size.
\end{corollary}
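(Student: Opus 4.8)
The plan is to obtain this as an immediate consequence of \Cref{theo:escadsolution} together with \Cref{obs:equivalence}. Recall that \Cref{theo:escadsolution} establishes that \ESCAD, on digraphs where multiarcs are permitted, is \woh when parameterized by the solution size. To transfer this hardness to the simple setting, I would start from an arbitrary \ESCAD instance $\mathcal I = (G,k)$ and apply the arc-subdivision operation of \Cref{obs:equivalence}: replace each arc $(u,v)$ of $G$ (treating parallel copies as distinct) by a directed path $(u,w),(w,v)$ through a fresh vertex $w$. This produces a simple digraph $G'$ and an \SESCAD instance $\mathcal I' = (G',k)$ computable in polynomial time, with the parameter $k$ left unchanged.

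The core of the argument is then the equivalence asserted by \Cref{obs:equivalence}: $\mathcal I$ is a yes-instance of \ESCAD if and only if $\mathcal I'$ is a yes-instance of \SESCAD, and moreover a solution for one instance is transformed into a solution for the other by a correspondence that matches each deleted arc $(u,v)$ of $\mathcal I$ with one of the two arcs of its subdivision and vice versa. Since this correspondence preserves the number of deleted arcs, the map $\mathcal I \mapsto \mathcal I'$ is a parameterized reduction from \ESCAD parameterized by solution size to \SESCAD parameterized by solution size. Composing this reduction with \Cref{theo:escadsolution} yields the claimed \woh ness of \SESCAD.

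There is essentially no obstacle here beyond invoking \Cref{obs:equivalence}; the only point worth a sanity check is that subdividing an arc does not alter the feasibility of an instance. This is because a subdivision vertex $w$ is incident only to the two arcs $(u,w)$ and $(w,v)$, so whenever both arcs survive and lie in a common strongly connected component, $w$ has in-degree and out-degree exactly one there and is automatically balanced, while if either arc is deleted $w$ falls into a singleton (hence trivially Eulerian) strongly connected component; in either case the Eulerian conditions on the original vertices of $G$ are mirrored exactly by those on the corresponding vertices of $G'$. Thus the proof reduces entirely to citing \Cref{theo:escadsolution} and \Cref{obs:equivalence}, with no further computation required.
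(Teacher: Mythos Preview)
Your proposal is correct and follows exactly the paper's own proof, which simply cites \Cref{theo:escadsolution} and \Cref{obs:equivalence}. The additional sanity check you give about the subdivision vertices is sound and merely elaborates what \Cref{obs:equivalence} already asserts.
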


\begin{proof}
    Follows from \Cref{theo:escadsolution} and Observation \ref{obs:equivalence}. 
\end{proof}

\begin{observation}\label{obs:vcToModtostars}
    If we subdivide all arcs in a digraph $G$ that has a vertex cover $X$, we get a simple digraph $G'$ such that $G'-X$ is the disjoint union of directed stars.
\end{observation}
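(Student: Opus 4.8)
The plan is to unpack the definition of arc subdivision and then argue locally around each vertex of $G'-X$. Write $G'$ for the digraph obtained from $G$ by replacing every arc $e=(u,v)$ — each parallel copy of a multiarc being treated as a separate arc — with a directed path $u \to w_e \to v$ through a fresh vertex $w_e$. First I would record that $G'$ is simple: distinct arcs of $G$ receive distinct subdivision vertices, so between any ordered pair of vertices of $G'$ there is at most one arc and there are no loops. The vertex set of $G'-X$ is $(V(G)\setminus X)\cup\{w_e : e\in E(G)\}$.

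Next I would isolate the crucial local structure. Each subdivision vertex $w_e$ has exactly two neighbours in $G'$, namely the two endpoints of $e$; since $X$ is a vertex cover of $G$, at least one of these endpoints lies in $X$, so in $G'-X$ the vertex $w_e$ has degree at most one. Conversely, take any $v\in V(G)\setminus X$. All neighbours of $v$ in $G'$ are subdivision vertices $w_e$ for arcs $e$ incident to $v$. Because $X$ is a vertex cover, no arc of $G$ has both endpoints outside $X$, so the other endpoint of each such $e$ lies in $X$; hence every subdivision vertex adjacent to $v$ in $G'$ has its second neighbour deleted in $G'-X$ and therefore has degree exactly one there.

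Putting these together, the weakly connected components of $G'-X$ are of two kinds: for each $v\in V(G)\setminus X$, the set $\{v\}$ together with all subdivision vertices $w_e$ for arcs $e$ incident to $v$ forms a star centred at $v$ with those $w_e$ as leaves; and for each arc $e$ of $G$ with both endpoints in $X$, the vertex $w_e$ is isolated in $G'-X$, i.e.\ a trivial star $K_1$. These components are pairwise disjoint, since a subdivision vertex has at most one neighbour outside $X$, which pins down its component. As the notion of a directed star refers only to the underlying undirected graph, the orientations of the arcs are immaterial, so $G'-X$ is a disjoint union of directed stars.

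The argument is essentially bookkeeping, so I do not expect a genuine obstacle; the one point that warrants care is the treatment of multiarcs and anti-parallel arcs, which is why I would make the simplicity of $G'$ and the "one fresh vertex per parallel copy" convention explicit at the outset before running the degree count.
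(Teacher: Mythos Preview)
Your proof is correct. The paper states this observation without proof, treating it as immediate; your argument supplies exactly the local degree analysis one would expect and handles the edge cases (multiarcs, arcs with both endpoints in $X$) cleanly.
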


\begin{corollary}\label{cor:vcToModtostars}
    \SESCAD is \woh parameterized by minimum modulator size to disjoint union of directed stars.
\end{corollary}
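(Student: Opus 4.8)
The plan is to derive \Cref{cor:vcToModtostars} as a direct consequence of \Cref{thm:escad_whard_vc} combined with \Cref{obs:equivalence} and \Cref{obs:vcToModtostars}. Recall that \Cref{thm:escad_whard_vc} gives us a parameterized reduction from \textsc{Unary Bin Packing} (parameterized by the number of bins) producing an \ESCAD instance $(G,k)$ whose vertex cover number is at most $2h$, where $h$ is the number of bins. Composing this with the subdivision operation from \Cref{obs:equivalence} yields a \SESCAD instance $(G',k)$ that is equivalent to $(G,k)$, with $|V(G')|$ and $|E(G')|$ polynomially bounded, and whose construction still runs in polynomial time. Since parameterized reductions compose and the parameter blow-up is linear, it suffices to verify that $G'$ has a small modulator to a disjoint union of directed stars.

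The key step is \Cref{obs:vcToModtostars}: if $X$ is a vertex cover of $G$ (of size at most $2h$), then after subdividing every arc of $G$, every original vertex outside $X$ becomes an independent-set vertex all of whose incident (subdivided) arcs now pass through a fresh degree-two subdivision vertex, and every subdivision vertex has degree exactly two. So in $G' - X$, each weakly connected component is a star centered at an original vertex of $V(G)\setminus X$ (the subdivision vertices being its leaves), since the two endpoints of any original arc lie in $X\cup(V(G)\setminus X)$ and at least one of them is in $X$. Hence $X$ is a modulator of size at most $2h$ from $G'$ to a disjoint union of directed stars. I would spell this out in one or two sentences, being careful that ``directed star'' here just means a digraph whose underlying undirected graph is a star (as defined in \Cref{sec:prelims}), so the orientations of the subdivided arcs do not matter.

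Putting these together: the composition ``\textsc{Unary Bin Packing} $\to$ \ESCAD $\to$ \SESCAD'' is a parameterized reduction in which the output parameter (modulator size to disjoint union of directed stars) is at most $2h$, linear in the input parameter $h$, and the output instance size is polynomial. Since \textsc{Unary Bin Packing} parameterized by the number of bins is \woh by Jansen et al., we conclude that \SESCAD parameterized by the minimum modulator size to a disjoint union of directed stars is \woh. The proof is essentially bookkeeping, so I expect no real obstacle; the only point requiring a little care is confirming that the subdivision in \Cref{obs:equivalence} genuinely preserves the \ESCAD/\SESCAD equivalence for the specific multigraph $G$ produced in \Cref{thm:escad_whard_vc} (which contains many parallel arcs, each subdivided by its own new vertex) — but this is exactly what \Cref{obs:equivalence} already asserts in general, so it applies verbatim.

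\begin{proof}
    The reduction in the proof of \Cref{thm:escad_whard_vc} takes an instance of \textsc{Unary Bin Packing} with $h$ bins and produces, in polynomial time, an \ESCAD instance $(G,k)$ whose vertex cover number is at most $2h$. Apply \Cref{obs:equivalence} to $(G,k)$: subdividing every arc yields an equivalent \SESCAD instance $(G',k)$ of size polynomial in $|G|$, computable in polynomial time. By \Cref{obs:vcToModtostars}, if $X$ is a vertex cover of $G$ with $|X|\le 2h$, then $G'-X$ is a disjoint union of directed stars, so $X$ is a modulator of $G'$ to a disjoint union of directed stars of size at most $2h$. Thus the composed map is a parameterized reduction from \textsc{Unary Bin Packing} parameterized by the number of bins to \SESCAD parameterized by the minimum modulator size to a disjoint union of directed stars, with the parameter blown up only linearly. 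Since the former problem is \woh \cite{DBLP:journals/jcss/JansenKMS13}, so is the latter.
\end{proof}
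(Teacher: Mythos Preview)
Your proposal is correct and follows exactly the approach implicit in the paper: the corollary is stated without a separate proof there, but it is meant to follow immediately from \Cref{thm:escad_whard_vc} via \Cref{obs:equivalence} and \Cref{obs:vcToModtostars}, precisely as you have written.
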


Under \ETH, we have the following result.

\begin{restatable}{theorem}{SESCAD_ETH_LB}
There is no algorithm solving \SESCAD in $f(k)\cdot n^{o(k/ \log k)}$ time for some function $f$, where $k$ is the size of the smallest vertex set that must be deleted from the input graph to obtain a disjoint union of directed stars and $n$ is the input length, unless the Exponential Time Hypothesis fails.
\end{restatable}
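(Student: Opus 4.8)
The plan is to chain together the reduction underlying \Cref{thm:escad_whard_vc} with the arc-subdivision operation of \Cref{obs:equivalence}, and then invoke the \ETH-based lower bound for \textsc{Unary Bin Packing} from \Cref{prop:UnaryBinPackingETHLowerBound}. This mirrors exactly how \Cref{thm:escad_eth_lb} is derived for the vertex cover parameterization of \ESCAD.

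First I would recall what the reduction in the proof of \Cref{thm:escad_whard_vc} actually produces: starting from a \textsc{Unary Bin Packing} instance $\mathcal I' = ((x_1,\dots,x_m),h,b)$, it outputs (after the polynomial preprocessing to an \textsc{Exact Unary Bin Packing} instance) an equivalent \ESCAD instance $\mathcal I^{*} = (G,k)$ together with an explicit vertex cover $X$ of $G$, namely the set of the vertices $u_j, v_j$ lying on the $b$-imbalance gadgets, so that $|X| \le 2h$; moreover the size of $\mathcal I^{*}$ is polynomially bounded in $|\mathcal I'|$. Next I would apply \Cref{obs:equivalence} to $\mathcal I^{*}$: subdividing every arc of $G$ once yields a simple digraph $G'$ and an equivalent \SESCAD instance $\mathcal I'' = (G',k)$ whose size (number of vertices plus arcs) is bounded by $|V(G)| + 3|E(G)|$, hence still polynomial in $|\mathcal I'|$. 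By \Cref{obs:vcToModtostars}, since $X$ is a vertex cover of $G$, the same set $X$ is a modulator of $G'$ whose deletion leaves a disjoint union of directed stars; therefore the parameter appearing in the statement (the minimum size of such a modulator) is at most $|X| \le 2h$.

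Putting the two steps together, the composition is a correct reduction that maps a \textsc{Unary Bin Packing} instance with $h$ bins to an equivalent \SESCAD instance of size $N = |\mathcal I'|^{\OO(1)}$ whose modulator-to-directed-stars parameter satisfies $k \le 2h$. Consequently, a hypothetical algorithm solving \SESCAD in time $f(k)\cdot n^{o(k/\log k)}$ would, run on $\mathcal I''$, decide \textsc{Unary Bin Packing} in time $f(2h)\cdot N^{o(h/\log h)} = f'(h)\cdot |\mathcal I'|^{o(h/\log h)}$ for a suitable function $f'$, using $2h/\log(2h) = \Theta(h/\log h)$ and $N = |\mathcal I'|^{\OO(1)}$ to absorb the polynomial blow-up into the $o(\cdot)$. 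This contradicts \Cref{prop:UnaryBinPackingETHLowerBound} under \ETH, which proves the theorem.

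The proof is essentially a bookkeeping exercise once the right lemmas are in place, so I do not expect a genuine mathematical obstacle; the one point that warrants care — and which is already settled in the discussion above — is verifying that the vertex cover exhibited in the proof of \Cref{thm:escad_whard_vc} has size bounded \emph{linearly} in $h$ (so the parameter transfers with only a constant factor, preserving the $o(k/\log k)$ form after subdivision) and that \emph{both} reduction steps enlarge the instance only polynomially (so that the size blow-up can be absorbed into the exponent's $o(\cdot)$).
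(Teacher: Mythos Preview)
Your proposal is correct and follows essentially the same approach as the paper, which also derives the statement by combining the reduction from the proof of \Cref{thm:escad_whard_vc} with \Cref{obs:equivalence}, \Cref{obs:vcToModtostars}, and \Cref{prop:UnaryBinPackingETHLowerBound}. Your write-up is more detailed than the paper's one-line justification, but the ingredients and the logic are identical.
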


\begin{proof}
The reduction in the proof of \Cref{thm:escad_whard_vc} along with Proposition  \ref{prop:UnaryBinPackingETHLowerBound} and Observation \ref{obs:equivalence} and Observation \ref{obs:vcToModtostars} implies the statement.
\end{proof}

Note that the above result rules out an {\fpt} algorithm for \SESCAD parameterized by various width measures such as treewidth and even treedepth.

\begin{restatable}{theorem}{NPHSIMPLE}\label{theo:nphsimple}
    \SESCAD is \nph~in  simple digraphs where each vertex has $(\mathrm{in}, \mathrm{out})$ degrees from the set $\{(1,1), (1,6), (6,1)\}$.
\end{restatable}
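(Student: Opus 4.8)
The plan is to prove \Cref{theo:nphsimple} by combining the NP-hardness construction of \Cref{theo:nph} with the arc-subdivision trick of \Cref{obs:equivalence}. Recall that \Cref{theo:nph} gives a polynomial-time reduction from \textsc{Vertex Cover} on cubic graphs to \ESCAD producing an instance $(G',k)$ in which every vertex has $(\mathrm{in},\mathrm{out})$ degrees in $\{(1,6),(6,1)\}$. By \Cref{obs:equivalence}, subdividing every arc of $G'$ with a fresh vertex yields a \emph{simple} digraph $G''$ together with an integer $k$ such that $(G'',k)$ is a yes-instance of \SESCAD if and only if $(G',k)$ is a yes-instance of \ESCAD, and the whole construction is still polynomial in the size of the original cubic graph. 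So the only thing left to check is the degree bound on $G''$.

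First I would observe what subdivision does to degrees. A subdivision vertex $w$ created from an arc $(u,v)$ has exactly one in-arc (from $u$) and one out-arc (to $v$), so it has degree type $(1,1)$. An original vertex $x$ of $G'$ keeps exactly the same in-degree and out-degree after subdivision: every in-arc $(y,x)$ is replaced by $(w_{yx},x)$ and every out-arc $(x,z)$ by $(x,w_{xz})$, so the count is unchanged. Hence every original vertex still has degree type in $\{(1,6),(6,1)\}$ and every new vertex has degree type $(1,1)$. This gives exactly the claimed set $\{(1,1),(1,6),(6,1)\}$.

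The remaining step is to assemble these facts into the statement: the reduction \textsc{Vertex Cover} on cubic graphs $\to$ \ESCAD $\to$ \SESCAD (via subdivision) is polynomial-time, preserves yes/no answers (the first by \Cref{claim:vc_reduction}, already proved inside \Cref{theo:nph}; the second by \Cref{obs:equivalence}), and produces a simple digraph whose vertices all have the stated degree types. Since \textsc{Vertex Cover} on cubic graphs is \nph, so is \SESCAD restricted to such digraphs.

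There is really no substantial obstacle here — the proof is a short composition of two results already established in the excerpt, and the only genuinely new content is the elementary degree bookkeeping under subdivision, which is routine. The one point worth stating carefully (to avoid a reader objection) is that subdivision never merges or creates multiarcs: distinct arcs of $G'$ — including the two parallel copies of each cross arc in the construction of \Cref{theo:nph} — are subdivided through distinct new vertices, so $G''$ is genuinely simple. I would make that explicit, and then the proof is complete.
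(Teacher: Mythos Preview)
Your proposal is correct and matches the paper's own proof, which simply states that the result follows from \Cref{theo:nph} and \Cref{obs:equivalence}. The additional degree bookkeeping and the explicit remark about distinct subdivision vertices ensuring simplicity are exactly the routine details one would fill in, and there is nothing to add.
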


\begin{proof}
    Follows from Theorem \ref{theo:nph} and Observation \ref{obs:equivalence}.
\end{proof}

\subsection{{\fpt} Algorithms for \SESCAD}

Firstly, the {\fpt} algorithms discussed in the previous section naturally extend to \SESCAD.
However, for \SESCAD, the lower bound parameterized by modulator to a disjoint union of  directed stars leaves open the question of parameterizing by larger parameters. For instance, the vertex cover number.

To address this gap, we provide an \fpt algorithm for \SESCAD parameterized by {\em vertex integrity}, a~parameter introduced by Barefoot et al.~\cite{Barefoot1987}.

\begin{definition}[Vertex Integrity]\label{def:vertex_integrity}
{\em     An undirected graph $G = (V,E)$ has \emph{vertex integrity} $k$ if there exists a set of vertices $M \subseteq V$
of size at most $k$ such that when removed, each connected component of $G-M$ has size at most $k-|M|$. We call $M$ a {\em certificate of vertex integrity} $k$ for $G$. 
    We define the vertex integrity of a directed graph $G'$ to be the vertex integrity of the undirected graph underlying it. 
 Similarly, given a digraph $G'$, we say that a set $M\subseteq V(G')$  is a certificate of vertex integrity $k$ for $G'$ if it is a certificate of vertex integrity $k$ for the undirected graph underlying $G'$.} 
\end{definition}

{\fpt} algorithms parameterized by vertex integrity have gained popularity in recent years due to the fact that several problems known to be {\woh} parameterized even by treedepth can be shown to be {\fpt} when parameterized by the vertex integrity \cite{DBLP:journals/tcs/GimaHKKO22}. Since Corollary \ref{cor:vcToModtostars}  rules out {\fpt} algorithms for {\SESCAD} parameterized by treedepth, it is natural to explore {\SESCAD} parameterized by vertex integrity and our positive result thus adds \SESCAD to the extensive list of problems displaying this behavior.

Moreover, this {\fpt} algorithm parameterized by vertex integrity implies that \SESCAD is also \fpt  when parameterized by the vertex cover number and shows that our reduction for \ESCAD parameterized by the vertex cover number requires multiarcs for fundamental reasons and cannot be just adapted to simple digraphs with more work.

\newcommand{\ilpfeas}{{\sc ILP-Feasibility}\xspace}
We will use as a subroutine the well-known \fpt  algorithm for \ilpfeas.
The \ilpfeas problem is defined as follows.
The input is a  matrix $A\in {\mathbb Z}^{m\times p}$ and a vector $b\in {\mathbb Z}^{m \times 1}$ and the objective is to find a vector $\bar x\in {\mathbb Z}^{p \times 1}$ satisfying the $m$ inequalities given by $A$, that is, $A\cdot \bar x\leq b$, or decide that such a vector does not exist.

\begin{proposition}[\cite{DBLP:journals/mor/Lenstra83,DBLP:journals/mor/Kannan87,DBLP:journals/combinatorica/FrankT87}]\label{prop:lenstra}
    \ilpfeas can be solved
    using $\OO(\eta^{2.5\eta+o(\eta)}\cdot L)$ arithmetic
    operations and space polynomial in $L$,
    where $L$ is the number of bits in the input and $\eta$ is the number of variables.
\end{proposition}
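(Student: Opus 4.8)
The plan is to reconstruct the classical dimension-recursive algorithm of Lenstra, equipped with the two later refinements that are responsible for the precise bound in the statement: Kannan's polynomially bounded flatness constant (which produces the ``$2.5k$'' in the exponent) and the Frank--Tardos coefficient-reduction step (which produces the linear ``$\cdot L$'' factor and the polynomial space bound). First I would reduce the input system $A\bar x\le b$, with $A\in\mathbb Z^{m\times k}$ and $b\in\mathbb Z^{m}$, to the problem of deciding whether a \emph{bounded}, \emph{full-dimensional} rational polytope $P\subseteq\mathbb R^{k}$ contains a point of $\mathbb Z^{k}$. Boundedness follows from solving the LP relaxation (polynomially many arithmetic operations) to detect and split off recession directions, together with the standard a~priori bound on the bit-length of a smallest integer feasible solution, which confines the search to an explicit box; low-dimensionality is handled by passing to the affine hull of $P$, where $P\cap\mathbb Z^{k}$ becomes an integer-point problem for a rank-$(k'<k)$ affine lattice, reducible to the full-dimensional case in a smaller ambient dimension. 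I would then apply the Frank--Tardos rounding: via simultaneous Diophantine approximation, replace the entries of $A$ and $b$ by integers of bit-length $\poly(k)$ that preserve the sign of every relevant linear form over the search box, so that from this point on all numbers the algorithm manipulates have size $\poly(k)$ and the only residual dependence on $L$ is the $O(L)$ cost of reading and preprocessing the input and of iterating over the $m\le L$ constraints.

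The core routine, run on a full-dimensional bounded polytope $P$ in dimension $d$ (initially $d=k$), proceeds as follows. Compute a L\"owner--John-type ellipsoidal rounding, i.e.\ an ellipsoid $E$ with center $z$ such that $E\subseteq P\subseteq z+d\,(E-z)$, or an approximation with a blow-up factor still bounded by a function of $d$ alone, obtainable in $\poly(d)$ arithmetic operations by the shallow-cut ellipsoid method. Apply the affine map $\tau$ carrying $E$ to the unit ball $B$, so that $B\subseteq\tau(P)\subseteq d\cdot B$, and work with the lattice $\Lambda=\tau(\mathbb Z^{d})$ inside the near-round body $\tau(P)$. Finally, run LLL on a basis of $\Lambda$ (and on a basis of the dual $\Lambda^{*}$) to obtain reduced bases; this is the step that makes both the direct search for a lattice point and the search for a thin direction efficient.

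Now I would invoke the flatness dichotomy --- Khinchin's flatness theorem, with the Kannan--Lov\'asz bound $\omega(d)=O(d^{3/2})$ on the flatness constant for an arbitrary $d$-dimensional convex body. Either $\tau(P)$ already contains a point of $\Lambda$, which the reduced basis lets us certify by rounding the ball center in the $b_{i}$-coordinates and testing membership --- and when this naive rounding fails, Kannan's enumeration of lattice points in a ball of radius $O(\omega(d))$ relative to $\lambda_{1}(\Lambda)$ either produces such a point or proves none exists --- or there is a nonzero $c\in\Lambda^{*}$ with $\mathrm{width}_{c}(\tau(P))\le\omega(d)$, read off from the reduced dual basis. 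In the second case, pulling $c$ back through $\tau$ and clearing the gcd yields a primitive $e\in\mathbb Z^{d}$ with $\mathrm{width}_{e}(P)\le\omega(d)$, so $P\cap\mathbb Z^{d}$ lies in at most $\omega(d)+1=O(d^{3/2})$ parallel hyperplanes $\{y:\langle e,y\rangle=t\}$ with $t$ integral. On each such hyperplane, $P\cap\{\langle e,y\rangle=t\}$ together with the rank-$(d-1)$ affine lattice $\mathbb Z^{d}\cap\{\langle e,y\rangle=t\}$ --- a basis of which is extracted from $e$ by completing it to a unimodular matrix via Hermite normal form --- is an instance of the same problem in dimension $d-1$, and I recurse.

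It remains to account for the running time and space, which is where the care is needed. The recursion tree has depth $k$ and branching $O(d^{3/2})$ at a node of dimension $d$, so it has $\prod_{d\le k}O(d^{3/2})=2^{O(k\log k)}$ nodes; the per-node arithmetic work (LP, ellipsoidal rounding, LLL, Hermite normal form, membership tests) is $\poly(k)$ on $\poly(k)$-bit numbers, except for the lattice-point certification, where Kannan's refined enumeration contributes the dominant $k^{O(k)}$ factor, and it is the product of the tree size with this per-node enumeration cost --- once the Kannan--Lov\'asz flatness constant is substituted --- that pins the total down to $\OO(k^{2.5k+o(k)})$ arithmetic operations, with the extra $\cdot L$ being the one-time input-handling overhead. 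For space, I would traverse the recursion tree depth-first, recomputing rather than storing, keeping only an $O(k)$-deep stack of $\poly(L)$-size records, which gives space polynomial in $L$. The two places I expect to be the real obstacles are exactly the ones the refinements address: first, establishing (or carefully invoking) the $O(d^{3/2})$ flatness bound \emph{and} the matching near-optimal enumeration estimate that together yield the $2.5k$ exponent rather than a looser one; and second, threading the Frank--Tardos preprocessing through the dimension recursion so that coefficient sizes --- and hence the bit-complexity and the space usage --- remain controlled at every level of the tree.
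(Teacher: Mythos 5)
The first thing to note is that the paper does not prove this proposition at all: it is imported verbatim from the literature (Lenstra for the dimension recursion, Kannan for the $k^{2.5k+o(k)}$ operation count, Frank--Tardos for the polynomial space bound and the linear dependence on $L$), and is used purely as a black box inside the vertex-integrity algorithm. So there is no in-paper argument to compare yours against; the only question is whether your reconstruction actually establishes the stated bound.

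As an outline of the Lenstra--Kannan--Frank--Tardos machinery, your sketch is broadly faithful: reduction to a bounded full-dimensional polytope, Frank--Tardos coefficient reduction to make all numbers $\poly(k)$-bit, ellipsoidal rounding plus LLL, the flatness dichotomy, and recursion on $O(\omega(d))$ parallel slices. But the quantitative heart of the claim --- that the exponent is $2.5k$ rather than some unspecified $O(k\log k)$ --- is not derived. Your own accounting gives a recursion tree of size $\prod_{d\le k} O(d^{3/2}) = k^{(3/2)k+o(k)}$ multiplied by a per-node enumeration cost that you only bound as ``$k^{O(k)}$''; the product of these is $k^{O(k)}$, which is strictly weaker than $k^{2.5k+o(k)}$, and you acknowledge as much when you flag the flatness constant and the ``matching near-optimal enumeration estimate'' as the real obstacles. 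There are also attribution issues that matter for the constants: the $O(d^{3/2})$ flatness bound is due to Kannan--Lov\'asz and postdates Kannan's 1987 integer programming algorithm, whose $2.5k$ exponent comes from a structurally different recursive decomposition (his reduced-basis enumeration and covering lemma), not from ``Lenstra with a better flatness constant.'' None of this affects the paper --- the proposition is correctly cited and its use in \Cref{thm:sescadVIFPT} only needs \emph{some} $f(k)\cdot\poly(L)$ bound --- but as a standalone proof of the stated operation count, your proposal has a genuine gap exactly where you suspected it would.
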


\begin{definition}
Consider a directed graph $G$ on $n$ vertices. 
A function $\pi:V(G)\to [n]$ is called an {\em ordering} of the vertices of  $G$. For a set $X\subseteq V(G)$, the ordering on $X$ {\em induced by} $\pi$ is the function $\pi_X:X\to [|X|]$, such that for every $u,v\in X$, $\pi_X(u)<\pi_X(v)$ if $\pi(u)<\pi(v)$. 
The {\em lexicographic ordering on $V(G)\times V(G)$ induced by} $\pi$ is a function $\chi:V(G)\times V(G)\to [n^{2}]$ such that for every two distinct, ordered pairs $e_1=(u_1,v_1)$ and $e_2=(u_2,v_2)$, $\chi(e_1)<\chi(e_2)$ if: 
\begin{itemize}\item $\pi(u_1)<\pi(u_2)$; or \item $\pi(u_1)=\pi(u_2)$ and $\pi(v_1)<\pi(v_2)$. \end{itemize}

\end{definition}

For a tuple $\alpha$, we refer to the $i^{\rm th}$ element of $\alpha$ by $\alpha[i]$. 

\begin{definition}
	Consider a directed graph $G$ and a function $\pi:V(G)\to [n]$. Let $\chi$ be the lexicographic ordering on $V(G)\times V(G)$ induced by $\pi$. 
	The {\em edge-characteristic vector of $G$ induced by} $\pi$ is a tuple $\vec{a}\in \{0,1\}^{n^{2}}$ such that for every pair $e=(u,v)$:
	\begin{itemize}
		\item $\vec{a}[\chi(e)] =0$ if $(u,v) \notin E(G)$, 
		\item $\vec{a}[\chi(e)] =1$ otherwise.
	\end{itemize}
\end{definition}

Note that the edge-characteristic vector of $G$ induced by $\pi$ is, in essence, the same as the adjacency matrix of $G$ where the rows and columns are ordered according to $\pi$. However, to simplify the notation in the upcoming proof, we work with vectors.

For a digraph $G$ and sets $X\subseteq V(G)$ and $Y\subseteq E(G[X])$, we denote by $G[X,Y]$, the subgraph of $G$ with vertex set $X$ and arc set $Y$. 

\sescadVIFPT*

\begin{proof}
    Consider an instance $(G,p)$ of {\SESCAD}, where $G$ has vertex integrity at most $k$. Suppose that this is a yes-instance with a solution $S$. 
    That is, $S$ is an arc-set of size at most $p$ and in $G-S$, each strongly connected component is Eulerian.
     Let $M$ be a certificate of vertex integrity $k$ for $G$. We do not require $M$ to be given in the input since there is an {\fpt} algorithm parameterized by $k$ to compute it \cite{DBLP:journals/algorithmica/DrangeDH16}.
           Without loss of generality, assume that $V(G)=[n]$ and $M=[|M|]$. This fixes an ordering $\pi$ over $V(G)$, in which the vertices of $M$ appear first.  In the rest of the proof, for every set $X\subseteq V(G)$, we use $\pi_X$ to refer to $\pi$ induced on $X$. 

    We begin by guessing those arcs of $S$ that have both endpoints in $M$, remove them and reduce the budget $p$ accordingly. The number of possible guesses is $2^{\OO(k^2)}$. Henceforth, we assume that every arc in the hypothetical solution $S$ (if it exists) has at least one endpoint disjoint from $M$.
   Let $\sigma\subseteq (M\times M)\setminus \{(u,u)\mid u\in M\}$ be a set of ordered pairs of distinct elements from $M$ such that for every distinct $m_1,m_2\in M$, $(m_1,m_2)\in \sigma$ if and only if $m_2$ is reachable from $m_1$ in $G-S$ by a path whose internal vertices are disjoint from $M$. We call $\sigma$ the {\em reachability signature} of $M$ in $G-S$. Clearly, the number of possibilities for $\sigma$ is    $2^{\OO(k^2)}$ and so, we may assume that our algorithm has correctly guessed $\sigma$. 

Define the set $\type=\bigcup_{|M|^{2}\leq \beta \leq (|M|+k)^{2}}\{0,1\}^{\beta}$. The elements of $\type$ are called types. 
The types can be interpreted as the edge-characteristic vectors of directed graphs on at least $|M|$ and at most $|M|+k$ vertices. However, we will only apply this notion to subgraphs of $G$ that contain the vertices of $M$. 
Precisely, for  every set $X\subseteq V(G)$ containing $M$ and $Y\subseteq  E(G[X])$ the type of the subgraph $G[X,Y]$ with vertex set $X$ and arc set $Y$ is defined as its edge-characteristic vector induced by $\pi_{X}$.  Since $M$ has size bounded by $k$, the total number of types is clearly bounded by a function of $k$.

For each weakly connected component (from now onwards, simply called a component) $C$ of $G-M$, we define the graph $G_C=G[C\cup M]-E(G[M])$.  For each component $C$ of $G-M$ we compute the type of $G_C$. We also compute $n_\tau$ -- the number of components $C$ such that $G_C$ has type $\tau$, for each $\tau\in \type$. Since the type of each of these subgraphs can be computed in $f(k)$-time for some function $f$, this step takes {\fpt} time. 
 
Consider a set $X\subseteq V(G)$ containing $M$ and $Y\subseteq  E(G[X])$. We say that the subgraph $G'=G[X,Y]$ is {\em compatible} with $\sigma$ if every vertex of $X\setminus M$ is balanced in its strongly connected component of the graph $G'+\sigma$.  Here, the graph $G'+\sigma$ is obtained by starting with $G'$ and for every $(m_1,m_2)\in \sigma\setminus E(G')$, adding it to $G'$. It follows from the definition of types that for sets $X_1,X_2\subseteq V(G)$ both containing $M$ and sets $Y_1\subseteq  E(G[X_1])$ and $Y_2\subseteq  E(G[X_2])$ such that the graphs $G[X_1,Y_1]$ and $G[X_2,Y_2]$ have the same type, either both graphs are compatible with $\sigma$, or neither graph is. So, we say that a {\em type} $\tau$ is compatible with $\sigma$ if some subgraph of $G$ of that type and having $M$ in the vertex set is compatible with $\sigma$. 
For every component $C$, define $S_C=S\cap E(G_C)$ and say that the type of $G_C-S_C$ is {\em compatible} with $\sigma$ {if every vertex in $V(C)$ is balanced in its strongly connected component of the graph $G'_C=G_C-S_C+\sigma$ and {\em incompatible} with $\sigma$, otherwise}.   For each component $C$ since $V(C)$ has size bounded by $k$, the number of possibilities for $S_C$ is also bounded by a function of $k$ (here, we crucially use the fact that we have a simple digraph). Hence, in {\fpt} time, we compute a boolean table $\Gamma$ that expresses, for every $C$ and every $S_C$ that is a subset of arcs in $G_C$, whether the type of $G_C-S_C$ is compatible with $\sigma$.

    Deleting the arcs of the hypothetical solution $S$ from each $G_C$ transitions $G_C$ from a graph of one type to a graph of another type that is compatible with $\sigma$. 
    To be precise, for each $C$, we can think of the operation of deleting $S_C$ as being equivalent to the operation of moving $G_C$ from the type of $G_C$ (call it $\tau_1$) to the type of $G_C-S_C$ (call it $\tau_2$), at cost $|S_C|$.

        Moreover, the type $\tau_2$ must be compatible with $\sigma$ since $S$ is a solution and $\sigma$ has been correctly guessed. So, for every pair of types $\tau_1,\tau_2$, let us define  $\mathrm{cost}(\tau_1,\tau_2)$ to be the cost of transitioning a graph $G_C$ of type $\tau_1$ to one of type $\tau_2$.  Set this value to be prohibitively high, say, the number of arcs in $G$ plus one, if $\tau_2$ is not compatible with $\sigma$ or if there is no set $S_C$ such that $G_C-S_C$ is of type $\tau_2$.

  In our next step, for every pair of vertices $m_1,m_2\in M$ such that $(m_1,m_2)\in \sigma\setminus E(G[M])$, let us construct the set $T_{m_1,m_2}^{*}$ of all types such that for each component $C$ such that $G_C$ has type $\tau\in T_{m_1,m_2}^{*}$, there is an $m_1$-$m_2$ path with all internal vertices contained in $V(C)$. We will ensure that for every $(m_1,m_2)\in \sigma\setminus E(G[M])$, we transition into at least one of the graphs of type $\tau$ for some $\tau\in T_{m_1,m_2}^{*}$. Similarly, to ensure that we do not create an $m_1$-$m_2$ path where the reachability signature forbids one, we will ensure that for every such $(m_1,m_2)$, we {\em do not} transition into any of the graphs of type $\tau\in T_{m_1,m_2}^{*}$.

    Finally, whether or not the vertices of $M$ are balanced in $\mathrm{strong}(G-S)$ is determined entirely by the number of graphs of each type that were ``produced'' in $G-S$  by the aforementioned transitions, assuming the correct reachability signature  $\sigma$.  So, for every type compatible with $\sigma$, we determine the imbalance ``created'' by a graph of this type on each vertex of $M$. 
        To be precise, for 
a    type $\tau\in \type$ and a vertex $u\in M$, {\em the imbalance on $u$ imposed by} $\tau$ is denoted by $I(\tau,u)$ and is formally defined as follows. 
    
Consider any graph $G''$ such that $V(G'')\supseteq M$ and $M$ induces an independent set, along with an ordering $\pi'$ of $V(G')$ where the vertices of $M$ appear first  and have the same ordering relative to each other as discussed at the beginning of the proof. Moreover, suppose that the type of $G''$ implied by the ordering $\pi'$ is $\tau\in \type$. Then, an arc $(p,q)\in E(G'')$ that is incident to a vertex $u\in M$ is called  {active} if and only if $p$ and $q$ lie in the same strongly connected component of the graph  $G''+\sigma$. Now, the imbalance on a vertex $u\in M$ imposed by $\tau$ is obtained by subtracting the number of active incoming arcs of $G''$ incident to $u$ from the number of active outgoing arcs of $G''$ incident to $u$.

    \paragraph{Set up of ILP:} All of the above requirements can be formulated as an  \ilpfeas instance with $f(k)$ variables that effectively minimizes the total costs of all the required type transitions. More precisely, for every pair of types $\tau_1$ and $\tau_2$, we have a variable $x_{\tau_1,\tau_2}$ that is intended to express the number of graphs $G_C$ of type $\tau_1$ that transition to type $\tau_2$. We only need to consider variables $x_{\tau_1,\tau_2}$ where $\tau_1$ is the type of some $G_C$ and $\tau_2$ is compatible with $\sigma$. We restrict our variable set to only such variables $x_{\tau_1,\tau_2}$. Moreover, for every $\tau$ that is compatible with $\sigma$, we have a variable $y_\tau$ that is intended to express the number of components $C$ such that $G_C$ transitions to type $\tau$.

    Then, we have constraints that express the following:

    \begin{enumerate}
        \item The cost of all the type transitions is at most $p$.
            \[
                \sum_{\tau_1,\tau_2 \in \type} \mathrm{cost}(\tau_1,\tau_2) \cdot x_{\tau_1,\tau_2}\leq p
            \]

        \item 
        
For every    $(m_1,m_2)\in \sigma\setminus E(G[M])$, there is at least one transition to a type in $T_{m_1,m_2}^{*}$. 

\[
                \sum_{\tau_1\in \type}\sum_{\tau\in T_{m_1,m_2}^{*}} x_{\tau_1,\tau}\geq 1
            \]

            \item We next have to ensure that for every $(m_1,m_2)\notin \sigma$, there is no transition to a type in $T_{m_1,m_2}^{*}$ as that would imply a path between vertices of $M$ that is forbidden by $\sigma$.

\[
                \sum_{\tau_1\in \type}\sum_{\tau\in T_{m_1,m_2}^{*}} x_{\tau_1,\tau}\leq 0
            \]

        \item For every component $C$, $G_C$ transitions to some type compatible with $\sigma$. So, for every type $\tau$, we have a constraint as follows.
            \[
                \sum_{\tau_2\in \type
                } x_{\tau,\tau_2}=n_{\tau}
            \]

            Recall that $n_{\tau}$ denotes the number of components $C$ such that $G_C$ is of type $\tau$ and we have computed it already.

        \item  The number of components $C$ such that $G_C$ transitions to type $\tau$ where $\tau$ is compatible with $\sigma$, is given by summing up the values of $x_{\tau_1,\tau}$ over all possible values of $\tau_1$.
            \[
                \sum_{\tau_1\in \type} x_{\tau_1,\tau}=y_{\tau}
            \]

        \item The total imbalance imposed on each vertex of $M$ by the existing arcs incident to it, plus the imbalance imposed on it by the types to which we transition, adds up to 0.

            For each $u\in M$, let $\rho_u$ denote the imbalance on $u$ imposed by those arcs of $G[M]$ that are incident to $u$ and active in the graph $G[M]+\sigma$. The imbalance imposed on $u$ by a particular type $\tau$ is $I(\tau,u)$ and this needs to be multiplied by the number of ``occurrences'' of this type after removing the solution, i.e., the value of $y_{\tau}$.

            Hence, we have the following constraint for every $u\in M$.
            \[
                \rho_u+   \sum_{\tau\in \type} I(\tau,u) \cdot y_{\tau}=0
            \]

        \item Finally, we need all variables to get non-negative values. So, for every $\tau_1,\tau_2\in \type$,  we add $x_{\tau_1,\tau_2}\geq 0$ and for every $\tau\in \type$, $y_\tau\geq 0$.

    \end{enumerate}
    It is straightforward to convert the above constraints into the form of an instance of {\ilpfeas}. Since the number of variables is a function of $k$, Proposition~\ref{prop:lenstra} can be used to decide feasibility in {\fpt} time.
    From a solution to the {\ilpfeas} instance, it is also straightforward to recover a solution to our instance by using the table $\Gamma$.
\end{proof}

\section{Conclusions}
We have resolved the open problem of Cechlárová and Schlotter \cite{DBLP:conf/iwpec/CechlarovaS10} on the parameterized complexity of the Eulerian Strong Component Arc Deletion problem by showing that it is {\woh} and accompanied it with further hardness results parameterized by the vertex cover number and max-degree of the graph. On the positive side,  we showed that though the problem is inherently difficult in general, certain combined parameterizations (such as treewidth plus either max-degree or solution size) offer a way to obtain {\fpt} algorithms.

Our work points to several natural future directions of research on this problem.
\begin{enumerate}\item Design of ({\fpt}) approximation algorithms for \ESCAD?
\item \ESCAD parameterized by the solution size is {\fpt} on tournaments~\cite{DBLP:journals/ipl/CrowstonGJY12}. For which other graph classes is the problem {\fpt} by the same parameter?
\item Our {\fpt} algorithm for {\SESCAD} parameterized by vertex integrity is only aimed at being a characterization result and we have not attempted to optimize the parameter dependence. So, a natural follow up question is to obtain an algorithm that is as close to optimal as possible.
\item For which parameterizations upper bounding the solution size is {\ESCAD} {\fpt}? For instance, one could consider the size of the minimum directed feedback arc set of the input digraph as a parameter. Notice that in the reduction of \Cref{theo:escadsolution}, we obtain instances with unboundedly large minimum directed feedback arc sets due to the imbalance gadgets starting at the vertex $s_i$ for some color class $i$ and ending at the vertices in $\{x_u\mid u\in {\rm color~class~} i\}$.
\end{enumerate}

\noindent
\textbf{Funding:} 
The research leading to the paper was supported by UKRI EPSRC Research Grant (EP/V044621/1).

\medskip
\noindent
\textbf{Author Contributions:} All authors contributed equally at every stage of this work.

\medskip
\noindent
\textbf{Acknowledgements:} We are grateful to the anonymous reviewers for their insightful feedback, which helped us identify and correct an error in Section \ref{sec:escad_tw_tractability}. 

\bibliographystyle{plainurl}
\bibliography{main.bib}

\end{document}